\documentclass[10pt]{article} 
\usepackage[a4paper]{geometry}
\usepackage{amsmath,amsthm,amssymb}

\newtheorem{theorem}{Theorem}[section]
\newtheorem{lemma}{Lemma}[section]

\newtheorem{remark}{Remark}[section]
\newtheorem{definition}{Definition}[section]

\newtheorem{example}{Example}[section]

\usepackage{rotating}




\usepackage{natbib}
\bibpunct[, ]{(}{)}{,}{a}{}{,}%

\usepackage[appendix = inline]{apxproof}
\newtheoremrep{theorem}{Theorem}[section]
\newtheoremrep{lemma}{Lemma}[section]
\newtheoremrep{corollary}{Corollary}[section]

\usepackage{color,hyperref}
\definecolor{darkblue}{rgb}{0.0,0.0,0.6}
\hypersetup{colorlinks,breaklinks,linkcolor=darkblue,urlcolor=darkblue,anchorcolor=darkblue,citecolor=darkblue}
\usepackage{algpseudocode}
\usepackage{algorithm}

\usepackage{algpseudocode}
\usepackage[bb=dsserif]{mathalpha}
\usepackage{bm}




\begin{document}
	
	
	
	
	

\title{Robust Trading in a Generalized Lattice Market}
\author{Chung-Han Hsieh\footnote{Department of Quantitative Finance, National Tsing Hua University, Hsinchu, Taiwan, 30004,  \href{mailto:ch.hsieh@mx.nthu.edu.tw}{ch.hsieh@mx.nthu.edu.tw}} \, and Xin-Yu Wang\footnote{Department of Quantitative Finance, National Tsing Hua University, Hsinchu, Taiwan, 30004, \href{mailto:xinyuwang@gapp.nthu.edu.tw}{xinyuwang@gapp.nthu.edu.tw}.}
}
 \date{}
\maketitle

\begin{abstract}
	This paper introduces a novel robust trading paradigm, called \textit{multi-double linear policies}, situated within a \textit{generalized} lattice market.
	Distinctively, our framework departs from most existing robust trading strategies, which are predominantly limited to single or paired assets and typically embed asset correlation within the trading strategy itself, rather than as an inherent characteristic of the market. 
	Our generalized lattice market model incorporates both serially correlated returns and asset correlation through a conditional probabilistic model.
	In the nominal case, where the parameters of the model are known, we demonstrate that the proposed policies ensure survivability and probabilistic positivity. 
	We then derive an analytic expression for the worst-case expected gain-loss and prove sufficient conditions that the proposed policies can maintain a \textit{positive expected profits}, even within a seemingly nonprofitable symmetric lattice market. 
	When the parameters are unknown and require estimation, we show that the parameter space of the lattice model forms a convex polyhedron, and we present an efficient estimation method using a constrained least-squares method.
	These theoretical findings are strengthened by extensive empirical studies using data from the top 30 companies within the S\&P 500 index, substantiating the efficacy of the generalized model and the robustness of the proposed policies in sustaining the positive expected profit and providing downside risk protection.
\end{abstract}
%




%


\section{Introduction} 	
The robustness of algorithmic trading systems has been a focal point of numerous studies such as \cite{dokuchaev1998asymptotic, dokuchaev2002dynamic, korajczyk2004momentum, barmish2008trading}.
Among the various aspects of robustness,  the Robust Positive Expectation~(RPE), a trading policy capable of generating a positive expected profit across various market conditions, is particularly sought after.  
Notably, among the strategies targeting RPE, the so-called \textit{Simultaneous Long-Short~(SLS)} strategy, proposed by \cite{barmish2011arbitrage, barmish2015new}, has proven to be a significant advancement.
This strategy involves investors holding both long and short positions simultaneously, with equal weights, that leverages market movements in either direction.

Subsequently, several extensions have been proposed to the SLS strategy, including generalization for Merton’s diffusion model in~\cite{baumann2016stock}, geometric Brownian motion (GBM) model with time-varying parameters in~\cite{primbs2017robustness}, and any linear stochastic differential equation (SDE) in~\cite{baumann2019positive}. 
Additionally, the SLS strategy has also been extended to the proportional-integral (PI) controller in \cite{malekpour2018generalization}, to latency trading in \cite{malekpour2016stock},  and coupled SLS strategy on pair trading for two correlated assets was studied in \cite{deshpande2018generalization, deshpande2020simultaneous}.
Recent contributions to the SLS theory include the work by \cite{baumann2023theoretical}.

Other innovative approaches have focused on robust stock trading. For example,
\cite{maroni2019robust} proposed a robust design strategy for
stock trading via feedback control.
\cite{primbs2018pairs, tie2018optimal} studied the stochastic control-theoretic approach in a pair trading framework, 
\cite{vitale2018robust} investigated robust trading for ambiguity-averse insider,
\cite{o2020generalized} proposed a generalized SLS with different weight settings on long and short positions. 
Recently, \cite{hsieh2022robust} introduced a new variant of the SLS strategy, termed the \textit{double linear policy}, which assigns equal weights to long and short positions in a discrete-time setting, creating a mean-variance criterion to determine optimal weights. This work was later extended by \cite{wang2023robustness} to involve time-varying weights, under the assumption of serially independent returns.

Despite numerous contributions to the robust trading systems, a crucial gap persists in the existing literature. 
Most existing strategies, including SLS or double linear policy, are limited to single or paired assets, and often model asset correlation within the trading strategy, rather than considering it as a characteristic of the market. 
More importantly, they often overlook serial dependence in asset returns, a notable empirical phenomenon \cite{fielitz1971stationarity, officer1972distribution, campbell1993trading, christoffersen2006financial}.
While the work by \cite{balvers1997autocorrelated} investigated autocorrelated returns and optimal intertemporal portfolio choice, it primarily focuses on short-term memory, and more importantly, it does not address the robustness of the strategy in various market conditions.
These omissions pose a significant challenge in robust trading strategies, as emphasized by recent work that highlights the impact of tick-size reductions on various stocks and the influence of competing crossing networks~\cite{werner2023tick}.

In response to these gaps,  this paper introduces an extension to the double linear policy, termed \textit{multi-double linear policies}.
We rigorously investigate its stochastic robustness in a generalized lattice market that involves both serial and asset correlations, as detailed in Section~\ref{section: Preliminaries}.
Then we demonstrate that the proposed policies ensure \textit{survivability} and \textit{probabilistic positivity}.
Following this, we establish a detailed gain-loss analysis and derive an analytic expression of the worst-case expected gain-loss function.
We then prove that the proposed policies offer a form of downside risk protection in comparison to the pure long-only and pure short-only strategies.
We demonstrate an ``approximate" RPE in Section~\ref{section: gain-loss analysis} and establish sufficient conditions under which the positive expected profit is upheld, even within a symmetric lattice market, thereby underscoring the theoretical foundations of our approach.
We further show that the parameters used in the proposed generalized lattice model form a convex polyhedron, which facilitates efficient estimation through a constrained least-squares approach; see Section~\ref{section: parameters estimation}. 
These theoretical findings are substantiated by extensive empirical studies, lending practical relevance to the developed framework; see Section~\ref{section: empirical studies}.

\medskip
\section{Preliminaries and Notations} \label{section: Preliminaries}
This section provides an overview of the formulation, including the generalized lattice market with binomial serially correlated returns and asset correlation, the multi-double linear policies, and the robust positive expectation~property.

\medskip
\subsection{Generalized Lattice Market with Serial and Asset Correlations} \label{subsection: Generalized Lattice Market with Serial and Asset Correlations}
Consider a lattice market consisting of $n \geq 1$ distinct assets. 
For each asset, say Asset~$i$ with~$i \in \{1,2,\dots,n\}$, the \textit{per-period} returns are represented by the sequence $\{X_i(k) : k \geq 0\}$. These returns exhibit two possible outcomes, with $X_i(k) \in \{u_i, d_i\}$, where $u_i \in (0,1)$ represents the \textit{upward movement factor}\footnote{When working with price data on a daily based or shorter timescale,  the values of the movement factors are generally observed to be small, i.e., $|u_i| \ll 1$ and  $|d_i| \ll 1$ for all $i=1,\dots,n$; see \cite{granger2004occasional}.} and~$d_i \in (-1, 0)$ represents the \textit{downward movement factor}.
In combination, this holds for all stages $k$ satisfying 
$
-1 < d_i < 0 < u_i < 1
$
for $i = 1, 2, \dots, n$.
The generalization of the model stems from the assumption that there are asset correlations and serial correlations in the probability of returns, signifying that a positive return at each stage depends on the previous stages and other assets.

In particular, we assume that the return process $\{X_{i}(k): k \geq 0\}$  with $X_{i}(k) \in \{u_i, d_i\}$, exhibits Markov memory with serial autocorrelation and asset correlation. 
Specifically, take the vector of returns at stage $k$ as $X(k) := [X_1(k) \; X_2(k) \cdots X_n(k)]^\top$, the vector of \textit{upward movement factors} as $U:=[u_1\; u_2\; \cdots \; u_n]^\top$, and the vector of \textit{downward movement factors} as~$D:=[d_1\; d_2\; \cdots d_n]^\top$.
Henceforth, we use the shorthand $X(k-1:k-m) := (X(k-1), X(k-2), \dots, X(k-m))$.
Denote the matrix of autocorrelation coefficients as $\Phi:=[\Phi_{i,j}] \in \mathbb{R}^{n \times m+1}$ where~$i = 1,2,\dots,n$ and~$j=0, 1,2,\dots,m$ and the matrix of asset correlation coefficients as $\Gamma:=[\Gamma_{i,\ell}] \in \mathbb{R}^{n \times n}$ where~$i,\ell=1,2,\dots,n$.
Let $\textbf{P}^U(k; X(k-1:k-m))$ be the {vector} of conditional probabilities of positive returns at stage $k$, affected by the previous $m$ stages and asset correlations.
The $i$th component of this vector is given by $[\textbf{P}^U(k; X(k-1:k-m))]_i := \mathbb{P} \left( X_i(k) = u_i \mid X(k-1:k-m) \right) \in [0,1]$. Hence,
\begin{align}
\label{eq: conditional probability}
\textbf{P}^U(k; X(k-1:k-m)) 
&=  \Phi_0 +   \begin{bmatrix}
	\sum_{j=1}^m \Phi_{1,j} X_1(k-j) \\ \vdots\\ \sum_{j=1}^m \Phi_{n,j} X_n(k-j)
\end{bmatrix}  + \Gamma X(k-1).
\end{align}
where $\Phi_0$ is the first column of the matrix $\Phi=[\Phi_0, \Phi_1,\dots, \Phi_m]$.
Therefore, for Asset $i$, the conditional probability 
\begin{align} \label{eq: conditional probability for asset i}
\mathbb{P} \left( X_i(k) = u_i \mid X(k-1:k-m) \right)
&= \Phi_{i,0} + \sum_{j=1}^m \Phi_{i,j} X_i(k-j) + \sum_{\ell=1}^n \Gamma_{i, \ell} X_\ell(k-1)  \in [0,1],
\end{align}
where $\Phi_{i,0}, \Phi_{i,1}, \dots, \Phi_{i,m}$ are the elements of the~$i$th row of the matrix $\Phi$ and $\Gamma_{i, \ell}$ represents the~$i$th row and $\ell$th column of the matrix $\Gamma$
and the conditional probabilities include initial conditions~$ X_{i}(-j) = x_{i}(-j) \in \{u_i, d_i\}$ for $ j= 1, 2, \dots, m $. 
Figure \ref{fig: binomial lattice} illustrates the idea of the generalized lattice model with Markov memory, specifically for $m=1$ (indicating one memory length) and $n=2$ (representing two distinct assets).
The figure includes the initial realized return~$(x_1(-1), x_2(-1) )$ for the two assets, demonstrating the probability for Asset~1 at each stage is dependent on the realized returns from the preceding stage.

Our analysis assumes that the trades incur zero transaction costs and that the underlying asset has perfect liquidity. Fractional shares are allowed.
This setting serves as an appropriate starting point for building the model and aligns closely to the standard \textit{frictionless market} framework in finance; see~\cite{cvitanic2004introduction, luenberger2013investment}. 
While not analyzed in theory, later in Section~\ref{section: empirical studies}, some of the assumptions such as transaction costs are relaxed in empirical studies.

\medskip
\begin{remark} \rm
$(i)$ Returns often exhibit serial correlation, see \cite{fielitz1971stationarity, campbell1993trading, christoffersen2006financial}, with past returns over several periods. This can be due to factors like momentum, mean reversion, or other market dynamics, allowing the model to capture long-term dependencies in the return process. 
$(ii)$  The correlation matrix $\Gamma$ can be extended to time-dependent correlations~$\Gamma(k) = f(X(k-1: k-m))$ where $f$ is a function that captures the dependence of the correlation on past returns.
\end{remark}

\begin{figure}[h!]
\centering
\includegraphics[width=.7\linewidth]{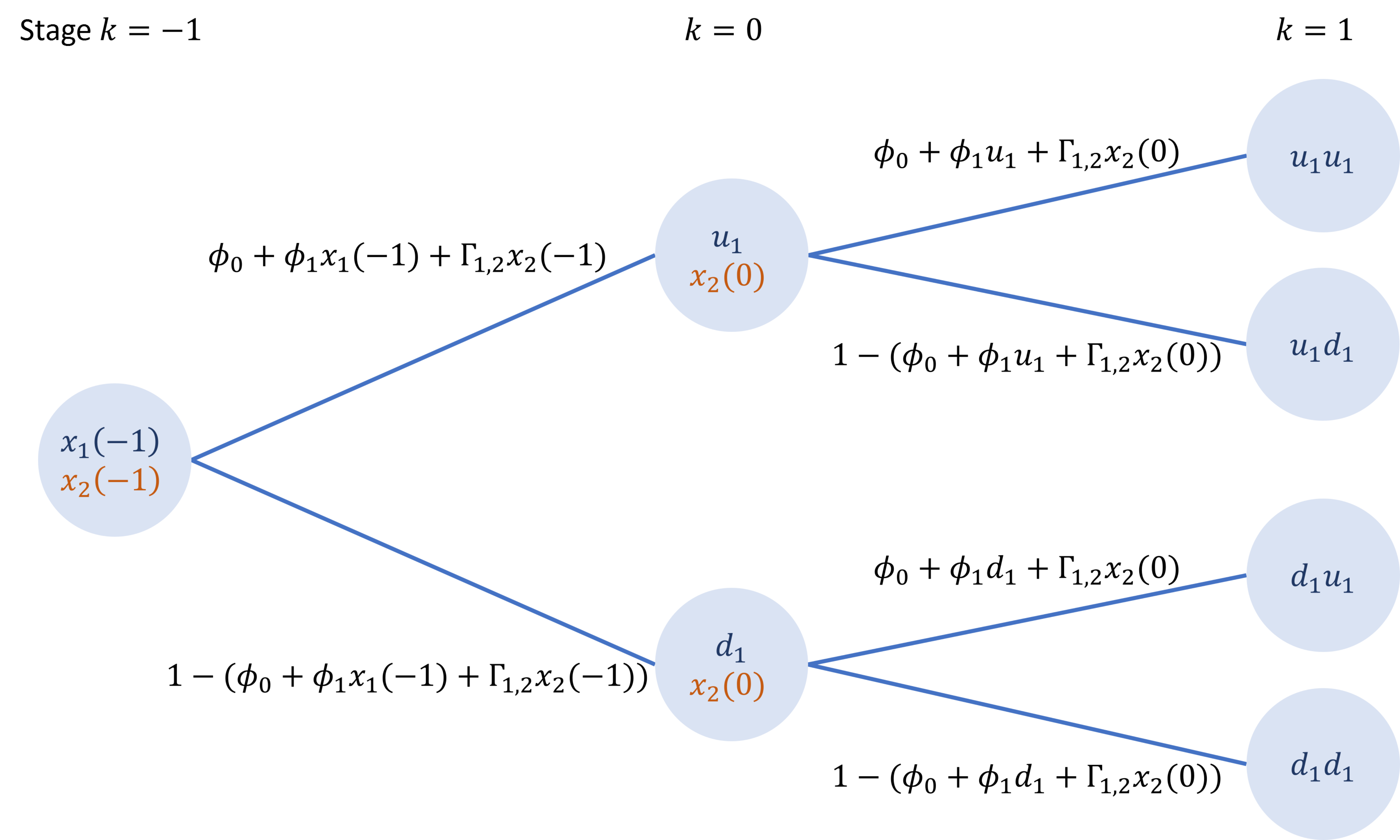}
\caption{An Illustration of Generalized Lattice Model.}
\label{fig: binomial lattice}
\end{figure}

\medskip
\subsection{Multi-Double Linear Policies}
Building upon~\cite{hsieh2022robust, hsieh2022robustness}, we generalize the single double linear policy, initially constructed for the single asset case, to multi-double linear policies for multi-asset portfolio case. 
This generalization allows for trading a portfolio consisting of $n \geq 1$ distinct assets. 
Beginning with an initial account value of an investor,~$V(0) := V_0 > 0$, which is to be allocated across the $n$ assets using the fractions~$v_i \in [0, 1]$ for $i=1,2,\dots,n$ satisfying $\sum_{i=1}^{n} v_i = 1$.
That is, the initial account value for Asset~$i$ is given by~$V_{i,0} := v_i V_0$ and the initial account $V(0)$ is a convex combination of $V_{i,0}$.

Thereafter, we strategically divide it into two parts for each asset, representing~\textit{long} and \textit{short} positions.
Superficially, take a fraction $\alpha \in [0,1]$, such that the initial account value for the long position of Asset $i$ is~$V_{i, L}(0) := \alpha V_{i,0}$, and the value for the short position is $V_{i, S}(0) := (1-\alpha) V_{i,0}$.
The trading policy for Asset~$i$, denoted by $\pi_i(\cdot)$, is constructed as the sum of the policies for long and short positions, i.e.,
$
\pi_i(k) := \pi_{i,L}(k) + \pi_{i,S}(k),
$
where $\pi_{i, L}$ and $\pi_{i,S}$ are in double linear forms:~$   
\pi_{i, L}(k) = w_i V_{i, L}(k)
$
and 
$
\pi_{i, S}(k) = -w_i V_{i,S}(k),
$
where~$w_i \in \mathcal{W}: = [0,1]$ is the weight\footnote{The choice of $\cal W=[0, 1]$ assures that the account is always \textit{survival}, i.e., the investor who adopts the double linear policy with the weight $w_i \in \mathcal{W}$ never goes broke with probability one; see Lemma~\ref{lemma: Survivaliblity}.} for Asset~$i$. 
Here, $V_{i, L}(k)$ and $V_{i, S}(k)$ represent the account value for the long and short positions of Asset~$i$ at stage $k$, respectively. 
The long-short account value dynamics for Asset~$i$ are given by
\begin{align*}
\begin{cases}
	V_{i,L}(k+1) = V_{i,L}(k) + \pi_{i,L}(k) X_i(k) + (V_{i,L}(k)-\pi_{i,L}(k))r_f; \\
	V_{i,S}(k+1) = V_{i,S}(k) + \pi_{i,S}(k)X_i(k),
\end{cases}
\end{align*}
where $r_f \geq 0$ is the risk-free rate.\footnote{
Note that the risk-free rate $r_f$ is included in the dynamics of the long position to signify the opportunity cost of holding the asset; conversely, in the short position, where the asset is borrowed rather than invested, this leads to the absence of $r_f$ in the short position dynamics.}
The overall account value at stage $k$ is given
\begin{align*}
V(k) 
&= \sum_{i=1}^{n} (V_{i,L}(k) + V_{i,S}(k)) \\
& = \sum_{i=1}^{n} V_{i,0}\left(\alpha R_{i+}(k) + (1-\alpha) R_{i-}(k)\right),
\end{align*}
where $V_{i,0} = v_i V_0$, $V_{i,L}(k) =  \alpha V_{i,0}  R_{i+}(k)$, and $V_{i,S}(k) = (1-\alpha)  V_{i,0} R_{i-}(k)$ with $R_{i+}(k) := \prod_{j=0}^{k-1}\left( (1+r_f) + w_i (X_i(j) - r_f)\right)$ and $R_{i-}(k):= \prod_{j=0}^{k-1}\left(1-w_iX_i(j)\right)$.
Note that if~$\alpha=1$, the policies correspond to a pure long-only position, while~$\alpha =0$ corresponds to a pure short position. Therefore, one immediately sees that the proposed policies greatly generalize the  common trading~strategies.

\medskip
\begin{remark}\rm
It should be noted that the notations $V(k)$ or $V_{i, L}(k)$ or $V_{i, S}(k)$, which implicitly depend on various parameters and variables, are suppressed for the sake of simplicity. 
In particular, the overall account value at stage $k$ can be expressed as  $V(k) = V\left( k; ( \alpha, w, v), r_f, \{X(j)\}_{j=0}^{k-1}, V_0 \right)$ where~$(\alpha, w, v) \in [0,1] \times [0, 1]^n \times \mathcal{V}$ is the triple for the proposed polices with $\mathcal{V}:=\{v \in \mathbb{R}^n: v_i \in [0,1], \; \sum_{i=1}^n v_i=1\}$, $r_f \geq 0$ is the risk-free rate, and~$\{X(j)\}_{j=0}^{k-1}$ denote the sequence of asset returns vectors up to stage $k-1$.
\end{remark}

\medskip
\subsection{Robust Positive Expected Profits} \label{section: robust positive expectation problem}
For $k > 0$, let $\mathcal{G}(k):= V(k) - V_0$ be the \textit{cumulative gain-loss function} up to stage $k$ where~$V(k)$ is the account value at stage $k$. 
Note that, for the sake of simplicity, we express $\mathcal{G}(k) = \mathcal{G}\left( k; (\alpha, w, v), r_f, \{X(j)\}_{j=0}^{k-1}, V_0 \right)$.
Then the expected cumulative gain-loss function is denoted by~$\overline{\mathcal{G}}(k) := \mathbb{E}[\mathcal{G}(k)]$.
The primary stochastic robustness to be studied in this paper is the so-called \textit{robust positive expectation}~(RPE) property, see the definition below.

\medskip
\begin{definition}[Robust Positive Expectation] \rm \label{definition: RPE}
For stage $k = 0, 1, \dots$, let $V_0>0$ be the initial account value, and $V(k)$ be the account value at stage $k$. Define the expected cumulative gain-loss function up to stage $k$ as  
$ 
\overline{\mathcal{G}}(k):= \mathbb{E}[V(k)] - V_0.
$  
A trading policy is said to have a \textit{robust positive expectation}~(RPE) property if it ensures that 
$
\overline{\mathcal{G}}(k)>0
$
for all $k>1$, with nonzero trend and finite~variance. 
\end{definition}

\medskip
\subsection{Notation}
We denote by $\mathbb{R}_+$ the nonnegative reals and $\mathbb{R}_{++}$ the strictly positive reals.
For a set $A \in \mathbb{R}$, the indicator $\mathbb{1}_A$ is defined as $\mathbb{1}_A(x)=1$ if $x \in A$; otherwise $\mathbb{1}_A(x)=0$. 
For $a,b\in \mathbb{R}$ and $a<b$ and $n$ a positive integer, a set $(a,b)^n$ or $[a,b]^n$ with  refers to the Cartesian product of the interval $(a,b)$ or~$[a,b]$ with itself $n$-times.

\medskip
\section{Stochastic Robustness of the Policies} \label{section: gain-loss analysis}
In this section, we analyze the stochastic robustness of the multi-double linear policies within the generalized lattice market with both serial and asset correlations.
To facilitate our analysis, we use the vector notations $v:=[v_1 \; v_2\; \cdots \; v_n]^\top \in \mathcal{V}$ where the set $\mathcal{V}:=\{v \in \mathbb{R}^n: v_i \in [0,1], \; \sum_{i=1}^n v_i=1\}$, and $w :=[w_1 \; w_2 \; \cdots \; w_n]^\top \in [0,1]^n$.
With these notations, the multi-double linear policies, as described in Section~\ref{section: Preliminaries}, are characterized by the triple~$(\alpha, w, v) \in [0, 1] \times [0, 1]^n \times \mathcal{V}.$	It is important to note that our analysis in this section assumes  a \textit{nominal} case in the sense that the generalized lattice model's parameters $u_i, d_i, \Phi_{i,j},$ and $ \Gamma_{i,\ell}$ are known. Subsequently, in Section~\ref{section: parameters estimation}, we provide an efficient least-square-based approach to estimate these parameters in a practical~context.

\medskip
\subsection{Survivability and Probabilistic Positivity}
The first result indicates that, for each Asset $i$, the multi-double linear policies assure survival trades; i.e., bankruptcy is avoided.

\medskip
\begin{lemmarep}[Survivability] \label{lemma: Survivaliblity}
For multi-double linear policies with  triple $(\alpha, w, v) \in [0,1] \times [0, 1]^n \times \mathcal{V}$ for $i=1, \dots,n$, we have
$V_{i,L}(k) \geq 0$, $V_{i,S}(k) \geq 0$ for all $i$ and all~$k$ with probability one, and hence the overall account $ V(k) \geq 0$ for all $k$ with probability~one.
Moreover, if $v_i>0$ for some~$i$, $w_i \in (0,1)$, and $\alpha \in (0,1)$, then~$V_{i, L}(k)>0$ and $V_{i, S}(k) > 0$ and $V(k) > 0$ for all $k$ with probability~one.
\end{lemmarep}

\medskip
\textit{Proof.} See Appendix~\ref{Appendix: Technical Proofs of Section: gain-loss analysis}.


\medskip
With the aid of Lemma~\ref{lemma: Survivaliblity}, we can prove a general result akin to the Paley-Zygmund inequality, regarding probabilistic positivity.

\medskip
\begin{lemma}[Probabilistic Positivity] \label{lemma: probabilistic positivity}
Fix $k > 1$. For multi-double linear policies with triple $(\alpha, w, v) \in (0,1) \times (0,1)^n \times \mathcal{V}$, it follows that
\begin{align} \label{ineq: probabilistic positivity}
	\mathbb{P}(V(k) > \theta \mathbb{E}[V(k)] ) \geq  \frac{  (1-\theta)^2 \mathbb{E}[V(k)]^2}{ {\rm var}(V(k)) + (1-\theta)^2 \mathbb{E}[V(k)]^2},
\end{align}
for all $\theta \in [0,1]$.
\end{lemma}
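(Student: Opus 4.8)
The plan is to prove a sharpened Paley--Zygmund-type bound via a truncated second-moment (Cauchy--Schwarz) argument, with the centering chosen carefully so that the variance, rather than the raw second moment, appears in the denominator. Write $Z := V(k)$, $\mu := \mathbb{E}[Z]$ and $\sigma^2 := \mathrm{var}(Z)$. First I would record the two structural facts that make the statement well posed: under the hypothesis $(\alpha,w,v)\in(0,1)\times(0,1)^n\times\mathcal V$, since $\sum_{i=1}^n v_i=1$ forces $v_i>0$ for some $i$, Lemma~\ref{lemma: Survivaliblity} gives $V(k)>0$ with probability one, so $\mu\in(0,\infty)$; and because each return $X_i(j)\in\{u_i,d_i\}$ is bounded while $w_i\in[0,1]$ and $r_f\geq 0$ are fixed, for the fixed stage $k$ the product representations $R_{i+}(k),R_{i-}(k)$ are bounded, hence $Z$ is a bounded random variable and $\sigma^2<\infty$. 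This legitimizes the divisions below; the degenerate situations $\theta=1$ or $\sigma^2+(1-\theta)^2\mu^2=0$ are trivial since then the right-hand side of \eqref{ineq: probabilistic positivity} is $0$.

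The key step is the elementary pointwise inequality
\[
Z-\theta\mu \;\le\; (Z-\theta\mu)\,\mathbb{1}_{\{Z>\theta\mu\}},
\]
valid for every $\theta\in[0,1]$: on the event $\{Z\le\theta\mu\}$ the left side is nonpositive while the right side is $0$, and on $\{Z>\theta\mu\}$ the two sides coincide. Taking expectations yields $(1-\theta)\mu\le \mathbb{E}\bigl[(Z-\theta\mu)\mathbb{1}_{\{Z>\theta\mu\}}\bigr]$, and since $\theta\le 1$ and $\mu>0$ the left-hand side is nonnegative. Now apply Cauchy--Schwarz to the right-hand side,
\[
\mathbb{E}\bigl[(Z-\theta\mu)\mathbb{1}_{\{Z>\theta\mu\}}\bigr]\;\le\;\sqrt{\mathbb{E}\bigl[(Z-\theta\mu)^2\bigr]}\;\sqrt{\mathbb{P}(Z>\theta\mu)},
\]
and expand the second moment about the true mean: $\mathbb{E}[(Z-\theta\mu)^2]=\mathbb{E}[(Z-\mu)^2]+2(1-\theta)\mu\,\mathbb{E}[Z-\mu]+(1-\theta)^2\mu^2=\sigma^2+(1-\theta)^2\mu^2$, the cross term vanishing because $\mathbb{E}[Z-\mu]=0$. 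Squaring the resulting inequality $(1-\theta)\mu\le\sqrt{\sigma^2+(1-\theta)^2\mu^2}\,\sqrt{\mathbb{P}(Z>\theta\mu)}$ and solving for $\mathbb{P}(Z>\theta\mu)$ gives precisely \eqref{ineq: probabilistic positivity} with $Z=V(k)$.

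I do not expect a serious obstacle here: the only point of real care, and the reason the bound is sharper than the textbook Paley--Zygmund inequality (which, centering at $0$, would give the weaker denominator $\mathbb{E}[Z^2]=\sigma^2+\mu^2$), is to translate $Z$ by $\theta\mu$ \emph{before} applying Cauchy--Schwarz so that the cross term drops out and leaves $\mathrm{var}(V(k))$; everything else is routine. The mildest technical matter is simply invoking Lemma~\ref{lemma: Survivaliblity} together with the boundedness of the lattice returns to guarantee $\mathbb{E}[V(k)]\in(0,\infty)$ and $\mathrm{var}(V(k))<\infty$, so that the manipulations and the final division are all justified.
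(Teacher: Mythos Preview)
Your proposal is correct and follows essentially the same route as the paper's proof: both start from the pointwise inequality $V(k)-\theta\mathbb{E}[V(k)]\le (V(k)-\theta\mathbb{E}[V(k)])\mathbb{1}_{\{V(k)>\theta\mathbb{E}[V(k)]\}}$, take expectations, apply Cauchy--Schwarz, and then expand $\mathbb{E}[(V(k)-\theta\mathbb{E}[V(k)])^2]=\mathrm{var}(V(k))+(1-\theta)^2\mathbb{E}[V(k)]^2$. Your write-up is slightly more careful in recording why $\mu>0$ and $\sigma^2<\infty$ and in handling the degenerate cases, but the argument is the same.
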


\medskip
\textit{Proof.} See Appendix~\ref{Appendix: Technical Proofs of Section: gain-loss analysis}.


\subsection{Gain-Loss Analysis}
For $k=0, 1, \dots$, let $V(k)$ be the account value at stage $k$. Then the corresponding cumulative gain-loss function, as defined in Section~\ref{section: Preliminaries}, is given by 
$
\mathcal{G}(k) 
= \sum_{i=1}^{n} V_{i,0}\left(\alpha R_{i+}(k) + (1-\alpha) R_{i-}(k) - 1\right), 
$
and the expected cumulative gain-loss function as $\overline{\mathcal{G}}(k) := \mathbb{E}[\mathcal{G}(k)]$.
Note that with this notation, we can rewrite Inequality~(\ref{ineq: probabilistic positivity}) in terms of the cumulative gain-loss function. Indeed, observe that
\begin{align*}
\mathbb{P}( V(k) > \theta \mathbb{E}[V(k)]) 
&= \mathbb{P}( \mathcal{G}(k) > \theta \mathbb{E}[V(k)] - V_0) \\
&=  \mathbb{P}( \mathcal{G}(k)  > \theta \mathbb{E}[V(k)] -\theta V_0  + \theta V_0 - V_0) \\
& =\mathbb{P}( \mathcal{G}(k)  > \theta \overline{\mathcal{G}}(k)  - (1-\theta)  V_0).
\end{align*}
Moreover, noting that $\mathbb{E}[V(k)] = \overline{\mathcal{G}}(k) + V_0$ and ${\rm var}(V(k)) = {\rm var}(\mathcal{G}(k))$, Lemma~\ref{lemma: probabilistic positivity} becomes
\[
\mathbb{P}(\mathcal{G}(k) > \theta \overline{\mathcal{G}}(k)  - (1-\theta) V_0 ) \geq \frac{  (1-\theta)^2 (\overline{\mathcal{G}}(k) + V_0)^2}{ {\rm var}(\mathcal{G}(k)) + (1-\theta)^2 (\overline{\mathcal{G}}(k) + V_0)^2},
\]
which characterizes the probability of cumulative gain-loss larger than the weighted sum of~$\overline{\mathcal{G}}(\cdot)$ and~$V_0$. 
Note that the variance of the cumulative gain-loss function, ${\rm var}(\mathcal{G}(k))$, is given by 
$ {\rm var} (\mathcal{G}(k))
= {\rm var}\left( \sum_{i=1}^n V_{i,0} \left(\alpha R_{i+}(k) + (1-\alpha) R_{i-}(k) -1\right)\right).
$
Although the analytic expression for this variance may be intractable within the generalized lattice market framework, it can be computed numerically; e.g., see later in Section~\ref{section: empirical studies}. 
The next lemma indicates gain-loss under zero weight.

\medskip	
\begin{lemmarep}[Zero Weight Gain-Loss] \label{lemma: zero weight gain-loss}
Consider the multi-double linear policies with triple~$(\alpha, w, v) \in [0,1]\times [0,1]^n \times \mathcal{V}$.
If $w_i = 0$ for all $i=1, \dots, n$, we have 
$
{\mathcal{G}}(k) = \alpha( (1+r_f)^k - 1) V_0 \geq 0
$
for all $k$ with probability one. Moreover, if the risk-free rate is zero, i.e., $r_f =0$, then the gain-loss function simplifies to $\mathcal{G}(k) = 0$ for all~$k$.
\end{lemmarep}

\medskip
\textit{Proof.} See Appendix~\ref{Appendix: Technical Proofs of Section: gain-loss analysis}.


\medskip
To prove our first main result, Theorem~\ref{thm: proof of RPE}, the following two auxiliary lemmas are useful. The first one states that the probability of receiving a positive return can be characterized by a recursion formula, and the second one expresses a sum of expected logarithmic functions using the probability recursion.

\medskip
\begin{lemmarep}[A Recursion Formula of Probability] \label{lemma: recursion formula of probability}
For $i = 1,2, \dots,n$, assume that $X_i(\cdot) \in \{u_i, d_i\}$ are returns characterized by the generalized lattice model as described in Section~\ref{subsection: Generalized Lattice Market with Serial and Asset Correlations}. For $k \geq 1$, let $p_i(k):= \mathbb{P} (X_i(k) = u_i)$ be the probability of receiving a positive return of Asset $i$ at stage~$k$ with initial conditions $
p_{i}(-j) = \frac{x_{i}(-j) - d_i}{u_i - d_i}
$ for $j=1, 2, \dots, m$ with $x_i(-j) \in \{u_i, d_i\}$. 
Then  $p_i(k)$ satisfies the following recursion
\begin{align} \label{eq: p_k recursion}
p_i (k) 
&= \Phi_{i,0} + \sum_{j=1}^m \Phi_{i,j} [(u_i - d_i) \cdot p_i({k-j}) + d_i]+ \sum_{\ell=1}^n \Gamma_{i, \ell} [(u_\ell - d_\ell) \cdot p_\ell({k-1}) + d_\ell].
\end{align}
\end{lemmarep}

\medskip
\textit{Proof.} See Appendix~\ref{Appendix: Technical Proofs of Section: gain-loss analysis}.


\medskip
\begin{remark} \rm
Note that the returns $X_i(\cdot) \in \{u_i, d_i\}$ characterized by the generalized lattice model, follows a Bernoulli distribution with time-varying probability $p_i(\cdot)$, i.e., $X_i(\cdot) \sim Bernoulli(p_i(\cdot))$.
Later in Section~\ref{section: empirical studies},  the recursion formula~(\ref{eq: p_k recursion}) is useful to estimate the binomial probability for each stage. 
\end{remark}

\medskip
\begin{lemmarep} \label{lemma: the expectation of Binomial distribution}
Let $k \geq 1$ and consider assets $i=1, \dots,n$, each associated with a weight $w_i \in [0, 1]$. 
Assume that $X_i(\cdot) \in \{u_i, d_i\}$ are returns characterized by a generalized lattice model as described in Section~\ref{subsection: Generalized Lattice Market with Serial and Asset Correlations}. Then the following equality holds:\footnote{The notation $\pm$ used here represents the positive and negative variations in Equation~(\ref{eq: sum of expected logarithmic function}). If $\pm$ on the left-hand side of Equation~(\ref{eq: sum of expected logarithmic function}) is $+$, then all instances of $\pm$ on the right-hand sides are $+$. Likewise, if the left-hand side is $-$, then all instances of $\pm$ on the right-hand side becomes $-$.}
\begin{align} \label{eq: sum of expected logarithmic function}
\sum_{j=0}^{k-1} \mathbb{E}[\log (1\pm w_i X_i(j))] 
&= \mathbb{E}[H_i(k)] \log (1 \pm w_i u_i )  + (k - \mathbb{E}[H_i(k)]) \log (1 \pm w_i d_i),
\end{align}
where $\mathbb{E}[H_i(k)] := \sum_{j=0}^{k-1} p_{i}(j) \leq k$ is the expected number of receiving positive returns up to stage $k-1$, and $p_i(j)$ is stated in Lemma~\ref{lemma: recursion formula of probability}.
\end{lemmarep}

\medskip
\textit{Proof.} See Appendix~\ref{Appendix: Technical Proofs of Section: gain-loss analysis}.


\medskip
\begin{theorem}[The Worst Expected Gain-Loss] \label{thm: proof of RPE}
For~$k>1$, consider the multi-double linear policies with triple $(\alpha, w, v) \in (0, 1) \times (0, 1)^n \times \mathcal{V}$, and assume that the return follows the lattice model described in Lemma~\ref{lemma: the expectation of Binomial distribution}. 
Then the expected gain-loss function $\overline{\mathcal{G}}(k)$ satisfies the following inequality:
\begin{align*}
\overline{\mathcal{G}}(k) 
& > \sum_{i = 1}^n V_{i, 0} \left( \alpha  (\beta_i(k) -1) + (1-\alpha) (\gamma_i(k) -1) \right),
\end{align*}
where  
$\beta_i(k):= ( (1 + r_f) + w_i (u_i - r_f))^{\mathbb{E}[H_i(k) ] }  ( (1+r_f) + w_i (d_i - r_f))^{\left(k-\mathbb{E}[H_i(k)]\right) }  > 0 $
and
$ \gamma_i(k) := (1 - w_i u_i)^{\mathbb{E} [H_i (k)]} (1 - w_i d_i )^{\left(k-\mathbb{E}[H_i(k)]\right) } > 0 $
and~$\mathbb{E}[H_i (k) ] = \sum_{j=0}^{k-1} p_{i}(j)$ with $p_i(j)$ as stated in Lemma~\ref{lemma: recursion formula of probability}.
\end{theorem}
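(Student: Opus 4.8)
The plan is to start from the exact expression for the expected gain-loss,
\[
\overline{\mathcal{G}}(k) = \sum_{i=1}^n V_{i,0}\left(\alpha\,\mathbb{E}[R_{i+}(k)] + (1-\alpha)\,\mathbb{E}[R_{i-}(k)] - 1\right),
\]
and bound each of the two expectations $\mathbb{E}[R_{i+}(k)]$ and $\mathbb{E}[R_{i-}(k)]$ from below by $\beta_i(k)$ and $\gamma_i(k)$ respectively. Since the $V_{i,0}=v_i V_0$ are nonnegative and $\alpha,1-\alpha\in(0,1)$, summing these term-by-term bounds immediately gives the claimed inequality, and the strictness will come from the strictness in the lower bound for at least one factor.

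The key step is the lower bound on each expectation, and here the natural tool is Jensen's inequality applied to the logarithm. Writing $R_{i+}(k) = \prod_{j=0}^{k-1}\bigl((1+r_f) + w_i(X_i(j)-r_f)\bigr)$, each factor is strictly positive under $(\alpha,w,v)\in(0,1)\times(0,1)^n\times\mathcal{V}$ (this is essentially the survivability bookkeeping behind Lemma~\ref{lemma: Survivaliblity}), so $\log R_{i+}(k) = \sum_{j=0}^{k-1}\log\bigl((1+r_f)+w_i(X_i(j)-r_f)\bigr)$ is well defined. Applying $\mathbb{E}[R_{i+}(k)] \ge \exp\bigl(\mathbb{E}[\log R_{i+}(k)]\bigr)$ and then invoking Lemma~\ref{lemma: the expectation of Binomial distribution} with the ``$+$'' sign (and the substitution that turns $w_i X_i(j)$-type terms into the $(1+r_f)+w_i(X_i(j)-r_f)$ form — one should check the lemma is stated flexibly enough, or re-run its one-line proof for this affine combination) yields
\[
\mathbb{E}[\log R_{i+}(k)] = \mathbb{E}[H_i(k)]\log\bigl((1+r_f)+w_i(u_i-r_f)\bigr) + (k-\mathbb{E}[H_i(k)])\log\bigl((1+r_f)+w_i(d_i-r_f)\bigr),
\]
whose exponential is exactly $\beta_i(k)$. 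The identical argument with the ``$-$'' sign gives $\mathbb{E}[R_{i-}(k)] \ge \gamma_i(k)$, with $R_{i-}(k)=\prod_{j=0}^{k-1}(1-w_i X_i(j))$ and each factor lying in $(1-w_i u_i, 1-w_i d_i)\subset(0,\infty)$ since $w_i\in(0,1)$ and $-1<d_i<0<u_i<1$. Positivity of $\beta_i(k)$ and $\gamma_i(k)$ follows because each base is strictly positive.

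The main obstacle — and the reason the inequality is strict rather than an equality — is justifying that Jensen's inequality is strict here. Strictness of Jensen requires that $\log R_{i+}(k)$ not be almost-surely constant, i.e., that the random product $R_{i+}(k)$ genuinely takes more than one value with positive probability. For $k>1$ and $w_i\in(0,1)$ with $u_i\neq d_i$, the factor $(1+r_f)+w_i(X_i(j)-r_f)$ takes two distinct values, so one must argue that at least one stage $j\in\{0,\dots,k-1\}$ has $\mathbb{P}(X_i(j)=u_i)\in(0,1)$; this is where the recursion of Lemma~\ref{lemma: recursion formula of probability} and the open constraints on the parameters ensuring $p_i(j)\in(0,1)$ are used, and care is needed because a priori $p_i(j)$ could hit $0$ or $1$ for some degenerate parameter choices. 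Granting that, $R_{i+}(k)$ is non-degenerate, Jensen is strict for that asset, and since $V_{i,0}=v_iV_0$ with $\sum v_i=1$ forces some $v_i>0$, the corresponding summand is strictly larger than $V_{i,0}(\alpha(\beta_i(k)-1)+(1-\alpha)(\gamma_i(k)-1))$ while all other summands are $\ge$; adding up delivers the strict inequality. I would also note in passing that the non-strict version $\overline{\mathcal{G}}(k)\ge\sum_i V_{i,0}(\cdots)$ holds on the closed parameter set, with the strict version being the generic case stated in the theorem.
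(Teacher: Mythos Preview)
Your proposal is correct and follows essentially the same route as the paper: express $\overline{\mathcal{G}}(k)$ via $\mathbb{E}[R_{i\pm}(k)]$, write each product as $\exp(\log\prod_j(\cdots))$, apply Jensen's inequality for the strictly convex exponential to get $\mathbb{E}[R_{i\pm}(k)]>\exp\bigl(\sum_j\mathbb{E}[\log(\cdots)]\bigr)$, and then invoke Lemma~\ref{lemma: the expectation of Binomial distribution} to identify the exponent with $\log\beta_i(k)$ and $\log\gamma_i(k)$. In fact you are more careful than the paper on two points: you flag that Lemma~\ref{lemma: the expectation of Binomial distribution} is literally stated for $\log(1\pm w_iX_i(j))$ and must be rerun for the affine form $(1+r_f)+w_i(X_i(j)-r_f)$ (the paper silently does this), and you note that strict Jensen requires $R_{i\pm}(k)$ to be non-degenerate, a point the paper simply asserts without checking that some $p_i(j)\in(0,1)$.
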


\medskip	
\begin{proof}  
For $i=1,\dots,n$, fix $w_i \in (0, 1).$
The expected cumulative gain-loss function is given by
\begin{align*}
\overline{\mathcal{G}}(k) 
&=  \mathbb{E}\left[   \sum_{i=1}^{n} V_{i,0}\left(\alpha R_{i+}(k) + (1-\alpha) R_{i-}(k) - 1\right) \right] \\
&=  \sum_{i=1}^{n} V_{i,0}  \mathbb{E}\left[    \left(\alpha R_{i+}(k) + (1-\alpha) R_{i-}(k) - 1\right) \right].
\end{align*}
Note that the expectation term can be written as
\begin{align*}
&\mathbb{E}\left[ \alpha R_{i+}(k) + (1-\alpha) R_{i-}(k) - 1 \right] \\
&\quad =  \alpha \mathbb{E} \left[ \prod_{j=0}^{k-1}\left( (1+r_f) + w_i (X_i(j) - r_f)\right) \right] + (1-\alpha) \mathbb{E} \left[ \prod_{j=0}^{k-1} \left(1 - w_i X_i(j)\right)\right] -1 \\[2ex]
&\quad = \alpha \mathbb{E} \left[ e^{\log  \prod_{j=0}^{k-1}\left( (1+r_f) + w_i (X_i(j) - r_f)\right) } \right] + (1-\alpha) \mathbb{E} \left[ e^{\log \prod_{j=0}^{k-1} \left(1-w_i X_i(j)\right) }\right] -1  \\[2ex]
&\quad > \alpha e^{\sum_{j=0}^{k-1} \mathbb{E} \left[ \log \left( (1+r_f) + w_i (X_i(j) - r_f ) \right) \right] } + (1-\alpha) e^{\sum_{j=0}^{k-1} \mathbb{E} \left[ \log \left(1-w_iX_i(j)\right) \right]} -1, 
\end{align*}
where the last inequality holds by Jensen's inequality on strictly convex function $\exp(\cdot)$, i.e., $\mathbb{E}[e^Z] > e^{\mathbb{E}[Z]}$ for some random variable $Z$. 
Subsequently, applying Lemma~\ref{lemma: the expectation of Binomial distribution}, we have
\begin{align*}         
\overline{\mathcal{G}}(k) 
& >  \sum_{i = 1}^n V_{i, 0} \Big( \alpha e^{ \mathbb{E}[H_i(k) ] \log ( (1 + r_f) + w_i (u_i - r_f)) + \left(k-\mathbb{E}[H_i(k)]\right) \log ( (1+r_f) + w_i (d_i - r_f)) }  \nonumber \\
&\qquad \qquad \quad + (1-\alpha) e^{\mathbb{E} [H_i (k)] \log(1 - w_i u_i)  + \left(k-\mathbb{E}[H_i(k)]\right) \log(1 - w_i d_i ) } -1 \Big)\\
&=  \sum_{i = 1}^n V_{i, 0} \Big( \alpha  ( (1 + r_f) + w_i (u_i - r_f))^{\mathbb{E}[H_i(k) ] }  ( (1+r_f) + w_i (d_i - r_f))^{\left(k-\mathbb{E}[H_i(k)]\right) }  \nonumber \\
&\qquad \qquad \quad + (1-\alpha) (1 - w_i u_i)^{\mathbb{E} [H_i (k)]} (1 - w_i d_i )^{\left(k-\mathbb{E}[H_i(k)]\right) } -1 \Big)\\
&= \sum_{i = 1}^n V_{i, 0} \left( \alpha  (\beta_i(k) -1) + (1-\alpha) (\gamma_i(k) -1) \right),
\end{align*}
where $
\beta_i(k):= ( (1 + r_f) + w_i (u_i - r_f))^{\mathbb{E}[H_i(k) ] }  ( (1+r_f) + w_i (d_i - r_f))^{\left(k-\mathbb{E}[H_i(k)]\right) }  > 0
$
and
$
\gamma_i(k) := (1 - w_i u_i)^{\mathbb{E} [H_i (k)]} (1 - w_i d_i )^{\left(k-\mathbb{E}[H_i(k)]\right) } > 0.
$
The positivity holds since $r_f \geq 0$, $-1< d_i <0 < u_i < 1$, and $w_i \in (0,1)$ for all $i$.
\end{proof}

\medskip
\begin{corollaryrep}[Some Special Cases] \label{corollary: some special cases}
Consider the multi-double linear policies with triple~$(\alpha, w, v) \in \{1/2\} \times (0,1)^n \times \mathcal{V}$.
We have the following conclusions:
\begin{enumerate}
\item[(i)] If $\mathbb{E}[H_i(k)] \in \{0, k\}$, then
$\overline{\mathcal{G}}(k) > 0
$ for all $k > 1$.
\item[(ii)] If $\mathbb{E}[H_i(k)] \in (0, k)$ and $(1+w_iu_i)^{\mathbb{E}[H_i(k)]} (1+w_i d_i)^{k - \mathbb{E}[H_i(k)]} + (1- w_i u_i)^{\mathbb{E}[H_i(k)]} (1 - w_i d_i)^{k - \mathbb{E}[H_i(k)]} > 2$ for all $i =1,\dots, n$, then~$\overline{\mathcal{G}}(k) > 0$ for all $k > 1$.
\end{enumerate}	
\end{corollaryrep}

\medskip
\textit{Proof.} See Appendix~\ref{Appendix: Technical Proofs of Section: gain-loss analysis}.


\medskip
\begin{remark}[Approximate RPE] \rm
\label{remark: approximation error}
$(i)$ Consider the special case where the risk-free rate $r_f =0$, and let both $w_iu_i \approx 0$ and $w_i d_i \approx 0$. 
Then, for each Asset~$i$ and fixed~$k$, we have~$\beta_i(k) \approx 1$ and $\gamma_i(k) \approx 1$, which implies that the lower bound in Theorem \ref{thm: proof of RPE} is given by~$ \sum_{i = 1}^n V_{i, 0} \left( \alpha  (\beta_i(k) -1) + (1-\alpha) (\gamma_i(k) -1) \right) \approx 0$ for $k$. 
This leads to the inequality $\mathcal{G}(k) \gtrapprox 0$ for all~$k$, provided that $w_iu_i$ and $w_id_i$ are sufficiently small.
Said another way, we observe an ``approximate" RPE property.
$(ii)$ As seen later in the next section, the multi-double linear policies proposed in this paper can result in a positive expected profit. This holds true even in markets with a seemingly unprofitable symmetric lattice market. See also Section~\ref{section: empirical studies} for analysis of positive expected profits in practice.
\end{remark}

\medskip
\subsection{Positive Expected Profits When Market has Clear Trends}
The main result of this section indicates that if the assets in the market have clear trends; i.e.,~$u_i > -d_i$ and $\mathbb{E}[H_i(k)] > k/2$ (or $u_i < -d_i$ and $\mathbb{E}[H_i(k)] < k/2$), then it is possible to establish sufficient conditions for the multi-double linear policies to ensure the positive expected profits. To prove such a result, the following lemma is useful.

\medskip
\begin{lemma}[A Strict Convex Auxiliary Function]\label{lemma: strictly convexity}
Consider a function $\theta: \mathbb{R}_{++} \to \mathbb{R}$ with $\theta(\varepsilon):= a^\varepsilon + b^\varepsilon$ where $a,b>0$ with $a,b \neq 1$. Then $\theta(\varepsilon)$ is strictly convex in $\varepsilon$.
Moreover, if either $a>1$ and~$b<1$ or $a<1$ and $b>1$, then $\phi(\varepsilon)$ has a unique minimum at some $\varepsilon= \varepsilon^*>0$ satisfying~$a^{\varepsilon^* } \log a + b^{\varepsilon^* } \log b =0$.
\end{lemma}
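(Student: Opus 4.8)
The plan is to treat $\theta$ as a smooth function on all of $\mathbb{R}$ and read everything off from its first two derivatives. Differentiating termwise,
\[
\theta'(\varepsilon)=a^{\varepsilon}\ln a+b^{\varepsilon}\ln b,\qquad \theta''(\varepsilon)=a^{\varepsilon}(\ln a)^{2}+b^{\varepsilon}(\ln b)^{2}.
\]
Since $a,b>0$ we have $a^{\varepsilon},b^{\varepsilon}>0$ for every $\varepsilon$, and since $a\neq 1$ and $b\neq 1$ both $(\ln a)^{2}$ and $(\ln b)^{2}$ are strictly positive; hence $\theta''(\varepsilon)>0$ for all $\varepsilon$. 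This yields strict convexity of $\theta$ on $\mathbb{R}_{++}$ (indeed on all of $\mathbb{R}$), and, as a byproduct, that $\theta'$ is strictly increasing.

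For the second claim, assume without loss of generality $a>1>b>0$ (the case $a<1<b$ is identical after swapping the roles of $a$ and $b$), so $\ln a>0$ and $\ln b<0$. I would then locate the sign changes of $\theta'$: as $\varepsilon\to+\infty$, $a^{\varepsilon}\ln a\to+\infty$ while $b^{\varepsilon}\ln b\to 0^{-}$, so $\theta'(\varepsilon)\to+\infty$; as $\varepsilon\to-\infty$, $b^{\varepsilon}\ln b\to-\infty$ (because $b^{\varepsilon}\to+\infty$) while $a^{\varepsilon}\ln a\to 0^{+}$, so $\theta'(\varepsilon)\to-\infty$. Since $\theta'$ is continuous and strictly increasing, it has exactly one zero $\varepsilon^{*}$, and by strict convexity this $\varepsilon^{*}$ is the unique global minimizer of $\theta$; by construction it satisfies $a^{\varepsilon^{*}}\ln a+b^{\varepsilon^{*}}\ln b=0$.

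It remains to pin down $\varepsilon^{*}>0$. Because $\theta'$ is strictly increasing and $\theta'(\varepsilon^{*})=0$, this is equivalent to $\theta'(0)<0$, i.e.\ to $\ln a+\ln b<0$, i.e.\ to $ab<1$. This is the one place where the argument is not purely formal: the positivity of $\varepsilon^{*}$ uses the specific structure of the pair $(a,b)$ occurring in the application — for example $a=1+c$, $b=1-c$ with $c\in(0,1)$, so that $ab=1-c^{2}<1$ — or it is subsumed in the hypotheses. (If instead $ab\ge 1$, then $\theta$ is strictly increasing on $\mathbb{R}_{++}$ and its infimum is the non-attained limit at $0^{+}$, so this normalization genuinely cannot be dropped.) I expect this sign check — reconciling the stated conclusion $\varepsilon^{*}>0$ with the bare hypotheses — to be the only delicate point; the strict convexity and the existence and uniqueness of the critical point are routine one-variable calculus.
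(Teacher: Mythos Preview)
Your approach coincides with the paper's: the second-derivative test for strict convexity, and the first-order condition $a^{\varepsilon^*}\ln a+b^{\varepsilon^*}\ln b=0$ for the minimizer. Your limit argument for the existence and uniqueness of a critical point is in fact more thorough than the paper's, which simply asserts that since $a^{\varepsilon}$ is increasing and $b^{\varepsilon}$ is decreasing (when $a>1>b$) the sum ``has a unique minimum $\varepsilon^*>0$.''

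You have put your finger on a genuine soft spot. The conclusion $\varepsilon^*>0$ does not follow from $a>1>b>0$ alone; it is equivalent to $\theta'(0)=\ln a+\ln b<0$, i.e.\ $ab<1$, exactly as you say. The paper's proof does not supply this step, and the lemma as stated is slightly too strong (e.g.\ $a=4$, $b=\tfrac12$ gives $\varepsilon^*<0$). Your diagnosis that the missing condition is recovered in the application is correct: where the lemma is invoked one has $a_i=\dfrac{1-w_id_i+w_i\delta_i}{1+w_id_i}$ and $b_i=\dfrac{1+w_id_i-w_i\delta_i}{1-w_id_i}$, whence
\[
a_ib_i=\frac{1-w_i^2(d_i-\delta_i)^2}{1-w_i^2d_i^2}<1
\]
since $d_i<0<\delta_i$ forces $(d_i-\delta_i)^2>d_i^2$. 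So your proof is the rigorous one; the only refinement would be to state the auxiliary hypothesis $ab<1$ explicitly rather than leaving it to the application.
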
	

\medskip
\textit{Proof.} See Appendix~\ref{Appendix: Technical Proofs of Section: gain-loss analysis}.


\medskip
\begin{theorem}[Positive Expected Profits]
Consider multi-double linear policies with triple~$(\alpha, w, v) \in \{1/2\} \times (0,1)^n \times \mathcal{V}$ within the generalized lattice market. Then the following two statements hold true.

$(i)$ Suppose $u_i > -d_i $ for all $i$ and $\mathbb{E}[H_i(k)] = k/2 +\varepsilon_i$ for~$\varepsilon_i >  \theta^{-1} \left( \frac{2}{ (1 -  w_i^2 d_i^2 - w_i \delta_i + w_i^2\delta_i d_i )^{k/2} } \right) > \varepsilon_i^*$ with 
$\theta(\varepsilon_i) = \left( \frac{  1-  w_i d_i + w_i \delta_i  }{1+ w_i d_i} \right)^{ \varepsilon_i}  +   \left(   \frac{  1 +  w_i d_i - w_i \delta_i   }{1 - w_i d_i} \right)^{ \varepsilon_i}$ and $ \varepsilon_i^*$ satisfying
\[
\left(\frac{1 -  w_i d_i + \delta_i  w_i }{1 + w_i d_i  }\right)^{\varepsilon_i^* }  \log \left(\frac{1 -  w_i d_i + \delta_i  w_i}{1 +  w_i d_i }\right) +   \left(\frac{ 1 +  w_i d_i - \delta_i  w_i}{1 - w_i d_i }\right)^{\varepsilon_i^* } \log  \left( \frac{1 + w_i d_i  - \delta_i  w_i}{1 - w_i d_i }\right) = 0,
\]
then $\overline{\mathcal{G}}(k) > 0$.

$(ii)$	Suppose $u_i < -d_i $ for all $i$ and $\mathbb{E}[H_i(k)] = k/2  - \varepsilon_i$ for~$\varepsilon_i > \vartheta^{-1} \left( \frac{2}{ (1-  w_i^2 d_i^2 - w_i \delta_i - w_i^2\delta_i d_i )^{k/2} } \right)  > \varepsilon_i^* $ with $\vartheta(\varepsilon_i):= \left( \frac{ 1+ w_i d_i  }{ 1-  w_i d_i - w_i \delta_i } \right)^{ \varepsilon_i}  +   \left( \frac{ 1 - w_i d_i  }{ 1 +  w_i d_i +  w_i \delta_i  }\right)^{ \varepsilon_i}$  and $ \varepsilon_i^*$ satisfying
\[
\left( \frac{ 1+ w_i d_i  }{ 1-  w_i d_i - w_i \delta_i } \right)^{\varepsilon_i^* }  \log \left( \frac{ 1+ w_i d_i  }{ 1-  w_i d_i - w_i \delta_i } \right) +   \left( \frac{ 1- w_i d_i  }{ 1+  w_i d_i + w_i \delta_i } \right)^{\varepsilon_i^* } \log  \left( \frac{ 1- w_i d_i  }{ 1+  w_i d_i + w_i \delta_i } \right) = 0,
\]
then $\overline{\mathcal{G}}(k) >0$.
\end{theorem}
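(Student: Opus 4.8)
\textit{Proof sketch.} The plan is to reduce both claims, through Theorem~\ref{thm: proof of RPE}, to a single per-asset scalar inequality of the type $\beta_i(k)+\gamma_i(k)>2$, and then to close that inequality with the strict convexity from Lemma~\ref{lemma: strictly convexity}. With $\alpha=1/2$ and $r_f=0$ (the statement's quantities do not involve $r_f$), write $h_i:=\mathbb{E}[H_i(k)]$, $\beta_i(k):=(1+w_iu_i)^{h_i}(1+w_id_i)^{k-h_i}$, $\gamma_i(k):=(1-w_iu_i)^{h_i}(1-w_id_i)^{k-h_i}$. Then, for $k>1$, Theorem~\ref{thm: proof of RPE} immediately gives $\overline{\mathcal{G}}(k)>\tfrac12\sum_{i=1}^n V_{i,0}\bigl(\beta_i(k)+\gamma_i(k)-2\bigr)$. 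Since $w_i\in(0,1)$ and $-1<d_i<0<u_i<1$, the four factors $1\pm w_iu_i$ and $1\pm w_id_i$ are strictly positive, and $V_{i,0}\ge0$ with $\sum_i V_{i,0}=V_0>0$; hence it will suffice to prove $\beta_i(k)+\gamma_i(k)>2$ for each $i$.

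For part $(i)$ I would fix $i$, substitute $h_i=k/2+\varepsilon_i$, and factor out the geometric-mean powers:
\[
\beta_i(k)=\bigl((1+w_iu_i)(1+w_id_i)\bigr)^{k/2}\Bigl(\tfrac{1+w_iu_i}{1+w_id_i}\Bigr)^{\varepsilon_i},\qquad
\gamma_i(k)=\bigl((1-w_iu_i)(1-w_id_i)\bigr)^{k/2}\Bigl(\tfrac{1-w_iu_i}{1-w_id_i}\Bigr)^{\varepsilon_i}.
\]
The elementary identity $(1\pm w_iu_i)(1\pm w_id_i)=1\pm w_i(u_i+d_i)+w_i^2u_id_i$ shows two things at once: under $u_i>-d_i$ the base of $\beta_i$ strictly exceeds that of $\gamma_i$ (their difference is $2w_i(u_i+d_i)>0$), and, written in terms of $\delta_i$, both bases and both ratios reproduce exactly the $\delta_i$-expressions of the statement, with $(1-w_iu_i)(1-w_id_i)=1-w_i^2d_i^2-w_i\delta_i+w_i^2\delta_id_i$ and the ratios equal to $a_i:=\tfrac{1-w_id_i+w_i\delta_i}{1+w_id_i}$ and $b_i:=\tfrac{1+w_id_i-w_i\delta_i}{1-w_id_i}$. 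Replacing the larger base (that of $\beta_i$) by the smaller one only decreases the sum, so
\[
\beta_i(k)+\gamma_i(k)>\bigl((1-w_iu_i)(1-w_id_i)\bigr)^{k/2}\bigl(a_i^{\varepsilon_i}+b_i^{\varepsilon_i}\bigr)=\bigl((1-w_iu_i)(1-w_id_i)\bigr)^{k/2}\,\theta(\varepsilon_i).
\]

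Now $u_i>-d_i>0$ forces $a_i>1$, $0<b_i<1$, and $a_ib_i=\tfrac{1-w_i^2u_i^2}{1-w_i^2d_i^2}<1$, so Lemma~\ref{lemma: strictly convexity} applies: $\theta$ is strictly convex, has a unique minimizer $\varepsilon_i^*>0$ solving the displayed first-order condition, and restricts to a strictly increasing bijection of $(\varepsilon_i^*,\infty)$ onto $(\theta(\varepsilon_i^*),\infty)$. Since $\varepsilon_i^*\neq0$, $\theta(\varepsilon_i^*)<\theta(0)=2$, while $(1-w_iu_i)(1-w_id_i)<1$ gives $2/\bigl((1-w_iu_i)(1-w_id_i)\bigr)^{k/2}>2$; hence this value lies above $\theta(\varepsilon_i^*)$, so its preimage under $\theta$ is the well-defined number $\theta^{-1}(\cdot)$ of the statement and automatically exceeds $\varepsilon_i^*$ --- which is exactly why the hypothesis is written as $\varepsilon_i>\theta^{-1}(\cdot)>\varepsilon_i^*$. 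Monotonicity of $\theta$ on $(\varepsilon_i^*,\infty)$ then yields $\bigl((1-w_iu_i)(1-w_id_i)\bigr)^{k/2}\theta(\varepsilon_i)>2$, whence $\beta_i(k)+\gamma_i(k)>2$; summing against the nonnegative, not-all-zero weights $V_{i,0}/2$ gives $\overline{\mathcal{G}}(k)>0$.

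Part $(ii)$ is the mirror image: substitute $h_i=k/2-\varepsilon_i$, factor out the same $(k/2)$-th powers, and observe that $u_i<-d_i$ now makes $(1-w_iu_i)(1-w_id_i)$ the larger base; lower-bounding by the smaller base $\bigl((1+w_iu_i)(1+w_id_i)\bigr)^{k/2}$ leaves precisely $\vartheta(\varepsilon_i)$, to which Lemma~\ref{lemma: strictly convexity} applies verbatim in its ``$a<1$, $b>1$'' regime. The one genuinely delicate point, common to both parts, is orienting the bound correctly --- one must discard the \emph{larger} of the two base-product factors and keep the smaller, so that what remains is exactly the auxiliary function covered by Lemma~\ref{lemma: strictly convexity} --- and then verifying that the prescribed argument $2/(\cdot)^{k/2}$ really does fall on the strictly increasing branch of that function, so that $\theta^{-1}$ (resp.\ $\vartheta^{-1}$) is well defined and the condition on $\varepsilon_i$ is non-vacuous. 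The remaining sign bookkeeping identifying $(1\pm w_iu_i)(1\pm w_id_i)$ and the two ratios with the $\delta_i$-expressions is routine.
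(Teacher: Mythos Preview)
Your proposal is correct and follows essentially the same route as the paper: invoke Theorem~\ref{thm: proof of RPE} with $\alpha=1/2$, substitute $\mathbb{E}[H_i(k)]=k/2\pm\varepsilon_i$, factor the $(k/2)$-th powers, discard the larger base product in favor of the smaller one, and then apply Lemma~\ref{lemma: strictly convexity} to the resulting auxiliary function $\theta$ (resp.\ $\vartheta$). Your verification that $\theta^{-1}\bigl(2/(\cdot)^{k/2}\bigr)>\varepsilon_i^*$ is argued directly (via $\theta(\varepsilon_i^*)<\theta(0)=2<2/(\cdot)^{k/2}$) rather than by contradiction as in the paper, but this is a cosmetic difference only.
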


\begin{proof}
We shall only give proof for part~$(i)$ since an almost identical argument would work for part~$(ii)$.
To prove part~$(i)$, let $u_i > -d_i$. Then  there exists $0 < \delta_i < 1+ d_i$ such that $u_i = -d_i + \delta_i$.
With the triple~$(\alpha, w, v) \in \{1/2\} \times (0,1)^n \times \mathcal{V}$, Theorem~\ref{thm: proof of RPE} indicates that the expected gain-loss function satisfies
$
\overline{\mathcal{G}}(k) > \sum_{i = 1}^n \frac{V_{i, 0}}{2} \left(  \beta_i(k) +  \gamma_i(k) - 2 \right),
$
where
\begin{align*}
\beta_i(k) 
&= (1+ w_i (-d_i + \delta_i))^{k/2 + \varepsilon_i}(1+ w_i d_i)^{k/2 - \varepsilon_i}\\
&= (1-  w_i d_i + w_i \delta_i )^{k/2}(1-  w_i d_i + w_i \delta_i )^{ \varepsilon_i}(1+ w_i d_i)^{k/2 } (1+ w_i d_i)^{ - \varepsilon_i}\\
&= (1-  w_i^2 d_i^2 + w_i \delta_i + w_i^2\delta_i d_i )^{k/2} \left( \frac{  1-  w_i d_i + w_i \delta_i  }{1+ w_i d_i} \right)^{ \varepsilon_i} ,
\end{align*}
and 
\begin{align*}
\gamma_i(k) 
&= (1 - w_i (-d_i + \delta_i))^{k/2 + \varepsilon_i}(1- w_i d_i)^{k/2 - \varepsilon_i}\\
&= (1 +  w_i d_i - w_i \delta_i )^{k/2}(1 +  w_i d_i - w_i \delta_i )^{ \varepsilon_i}(1 - w_i d_i)^{k/2 } (1 -  w_i d_i)^{ - \varepsilon_i}\\
&= (1-  w_i^2 d_i^2 - w_i \delta_i + w_i^2\delta_i d_i )^{k/2} \left(   \frac{  1 +  w_i d_i - w_i \delta_i   }{1 - w_i d_i} \right)^{ \varepsilon_i}.
\end{align*}	
Therefore, we have
\begin{align*}
\overline{\mathcal{G}}(k) 
&> \sum_{i = 1}^n \frac{V_{i, 0}}{2} \bigg(  (1-  w_i^2 d_i^2 + w_i \delta_i + w_i^2\delta_i d_i )^{k/2} \left( \frac{  1-  w_i d_i + w_i \delta_i  }{1+ w_i d_i} \right)^{ \varepsilon_i}  \\
&\qquad \qquad \qquad +  (1-  w_i^2 d_i^2 - w_i \delta_i + w_i^2\delta_i d_i )^{k/2} \left(   \frac{  1 +  w_i d_i - w_i \delta_i   }{1 - w_i d_i} \right)^{ \varepsilon_i} - 2 \bigg)\\
& >   \sum_{i = 1}^n \frac{V_{i, 0}}{2} \bigg(  (1-  w_i^2 d_i^2 - w_i \delta_i + w_i^2\delta_i d_i )^{k/2} \left[ \left( \frac{  1-  w_i d_i + w_i \delta_i  }{1+ w_i d_i} \right)^{ \varepsilon_i}  +   \left(   \frac{  1 +  w_i d_i - w_i \delta_i   }{1 - w_i d_i} \right)^{ \varepsilon_i}\right]  - 2 \bigg)\\
&= \sum_{i = 1}^n \frac{V_{i, 0}}{2} \bigg(  (1-  w_i^2 d_i^2 - w_i \delta_i + w_i^2\delta_i d_i )^{k/2} \cdot \theta(\varepsilon_i)  - 2 \bigg),
\end{align*}
where $\theta(\varepsilon_i):=a_i^{ \varepsilon_i}  +   b_i^{ \varepsilon_i}$ is an auxiliary function with $a_i :=  \frac{  1-  w_i d_i + w_i \delta_i  }{1+ w_i d_i} $ and $b_i :=   \frac{  1 +  w_i d_i - w_i \delta_i   }{1 - w_i d_i} $. 
Since $w_i \in (0,1)$ and $d_i \in (-1,0)$, it follows that $a_i, b_i >0$ and $a_i > 1$ and $b_i \in (0, 1)$. Therefore, by Lemma~\ref{lemma: strictly convexity}, the auxiliary function $\theta(\varepsilon_i)$ is strictly convex in $\varepsilon_i$ and has  unique minimum at $\varepsilon_i = \varepsilon_i^* >0$ satisfying
\[
\left(\frac{1 -  w_i d_i+   w_i \delta_i}{1 + w_i d_i }\right)^{\varepsilon_i^* }  \log \left(\frac{1 - w_i d_i +  w_i \delta_i}{1 + w_i d_i}\right) +   \left(\frac{ 1 + w_i d_i-  w_i \delta_i }{1 - w_i d_i}\right)^{\varepsilon_i^* } \log  \left( \frac{1 + w_i  d_i-  w_i \delta_i }{1 - w_i d_i }\right) = 0.
\] 
Moreover, the strict convexity implies that for any $\varepsilon_i > \varepsilon_i^*$, the function is strictly increasing; therefore, inverse $\theta^{-1}$ exists and is still strictly increasing for $\varepsilon_i > \varepsilon_i^*$. Using the assumed hypothesis, 
\[
\varepsilon_i > \theta^{-1} \left( \frac{2}{ (1-  w_i^2 d_i^2 - w_i \delta_i + w_i^2\delta_i d_i )^{k/2} } \right),
\]
it follows that $\overline{\mathcal{G}}(k) >0.$
To complete the proof, we must show that $\theta^{-1} \left( \frac{2}{ (1-  w_i^2 d_i^2 - w_i \delta_i + w_i^2\delta_i d_i )^{k/2} } \right)  > \varepsilon_i^*$. Indeed, proceeds a proof by contradiction by assuming that  $\theta^{-1} \left( \frac{2}{ (1-  w_i^2 d_i^2 - w_i \delta_i + w_i^2\delta_i d_i )^{k/2} } \right)  \leq \varepsilon_i^*$. 
Within this range, $\theta$ is strictly decreasing, therefore, it follows that
\begin{align} \label{ineq: theta condition}
\frac{2}{ (1-  w_i^2 d_i^2 - w_i \delta_i + w_i^2\delta_i d_i )^{k/2} }  \geq  \theta( \varepsilon_i^*).
\end{align}
However, note that
$
\theta(\varepsilon_i^*)  = \inf_{\varepsilon_i} \theta(\varepsilon_i) \leq \theta(0) = 2.
$
Hence, it follows that 
\[
(1-  w_i^2 d_i^2 - w_i \delta_i + w_i^2\delta_i d_i )^{k/2}  \theta( \varepsilon_i^*) - 2 \leq  2( (1-  w_i^2 d_i^2 - w_i \delta_i + w_i^2\delta_i d_i )^{k/2} - 1) < 0.
\]
This implies that
$
\theta( \varepsilon_i^*)  < \frac{2}{(1-  w_i^2 d_i^2 - w_i \delta_i + w_i^2\delta_i d_i )^{k/2}  }
$
which contradicts to Inequality~(\ref{ineq: theta condition}). 
Therefore, we must have~$\theta^{-1} \left( \frac{2}{ (1-  w_i^2 d_i^2 - w_i \delta_i + w_i^2\delta_i d_i )^{k/2} } \right)  > \varepsilon_i^*$ and the proof is complete.
\end{proof}

\medskip
\subsection{Gain-Loss Analysis in Symmetric Lattice Market}
An important subclass of the lattice market framework is the \textit{symmetric} market. In this type of market, the factors governing upward and downward movement have the same magnitude, creating a unique structure. 
It should be noted that since the multi-double linear policies earn positive profits when either an upward or downward trend is clear, it is arguable that the robustness of the symmetric lattice market needs to be further scrutinized.
Below, we first provide a corollary for the lower bound of the expected gain-loss of the proposed policies in the symmetric market. Then we prove the  conditions under which the RPE may hold for such a symmetric lattice market.

\medskip	
\begin{corollaryrep}[Limitation of the Polices in Symmetric Lattice Market] \label{corollary: limitations of the policies}
For $k > 1$,
consider the multi-double linear policies with triple $(\alpha, w, v) \in \{1/2\}\times (0,1)^n \times \mathcal{V}$ in a lattice market with $r_f=0$ and  $u_i = -d_i$ for all $i \in \{1,2\dots,n\}$. 
\begin{enumerate}
\item[(i)] If $\mathbb{E}[H_i(k)] \neq k/2$ for all $i \in \{1,2\dots,n\}$, then 
\begin{align*}
	\overline{\mathcal{G}}(k) 
	& > \sum_{i = 1}^n \frac{V_{i, 0} }{2}\left(   (1 - w_i^2 d_i^2)^{k/2 } \left[ \left( \frac{1 - w_i d_i}{1+ w_i d_i} \right)^{\varepsilon_i}   +  \left( \frac{1 - w_i d_i}{1+ w_i d_i} \right)^{-\varepsilon_i} \right] - 2 \right),
\end{align*}
for some $\varepsilon_i \in (0, k/2).$
\item[(ii)] If $\mathbb{E}[H_i(k)] = k/2$  for all $i \in \{1,2\dots,n\}$, then 
$	
\overline{\mathcal{G}}(k) 
> \sum_{i = 1}^n {V_{i, 0} }\left(   [1 - w_i^2 d_i^2]^{k/2 }  - 1 \right).
$
\end{enumerate}
\end{corollaryrep}

\medskip
\textit{Proof.} See Appendix~\ref{Appendix: Technical Proofs of Section: gain-loss analysis}.

\medskip
\begin{remark} \rm
Note that in Part~$(ii)$, if the trade happens within an ``exact" symmetric lattice market, i.e., $u_i = -d_i$ and $\mathbb{E}[H_i(k)] = k/2$, then the lower bound 
$
\sum_{i = 1}^n {V_{i, 0} }\left(   [1 - w_i^2 d_i^2]^{k/2 }  - 1 \right) <0,
$ 
which indicates that RPE may not hold.
However, if $\mathbb{E}[H_i(k)] \neq k/2$, then, with the aid of the following Lemma~\ref{lemma: An Strictly Increasing Auxiliary Function}, it is possible to establish the positive expected profits, see Theorem~\ref{theorem: RPE in Symmetric Lattice Market}.
\end{remark}

\medskip
\begin{lemmarep}[An Strictly Increasing Auxiliary Function] \label{lemma: An Strictly Increasing Auxiliary Function}
Fix $z > 0$ with $z \neq 1$.
Define a function $\phi: \mathbb{R}_{++} \to \mathbb{R}$ with $\phi(\varepsilon):= z^\varepsilon + z^{-\varepsilon}$. Then $\phi(\varepsilon)$ is a strictly increasing and $\inf \phi(\varepsilon) = 2$.  
\end{lemmarep}

\medskip
\textit{Proof.} 
See Appendix~\ref{Appendix: Technical Proofs of Section: gain-loss analysis}.


\medskip

\begin{theorem}[RPE in Symmetric Lattice Market] \label{theorem: RPE in Symmetric Lattice Market}
For $k > 1$, consider the multi-double linear policies with triple $(\alpha, w, v) \in \{1/2\}\times (0,1)^n \times \mathcal{V}$ in a generalized lattice market with $r_f=0$ and~$u_i = -d_i$ for all $i = 1, 2\dots, n$. 
If either $\mathbb{E}[H_i(k)] = k/2 + \varepsilon_i$ or $\mathbb{E}[H_i(k)] = k/2 - \varepsilon_i$ for all $i =1, 2, \dots, n$ satisfying 
$
\varepsilon_i \in \left( \phi^{-1}\left( \frac{2}{(1- w_i^2 d_i^2)^{k/2}} \right), \;  \frac{k}{2} \right)
$
where 
$\phi(\varepsilon_i) := \left( \frac{1 - w_i d_i}{1+ w_i d_i} \right)^{\varepsilon_i}   +  \left( \frac{1 - w_i d_i}{1+ w_i d_i} \right)^{-\varepsilon_i}$,
then
$
\overline{\mathcal{G}}(k) > 0.
$ 
\end{theorem}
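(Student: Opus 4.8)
The plan is to reduce the claim to Corollary~\ref{corollary: limitations of the policies}(i) and then invoke the monotonicity of the auxiliary function $\phi$ from Lemma~\ref{lemma: An Strictly Increasing Auxiliary Function}. First I would observe that under the hypotheses ($r_f = 0$, $u_i = -d_i$ for all $i$, and $\mathbb{E}[H_i(k)] = k/2 \pm \varepsilon_i$ with $\varepsilon_i \in (0, k/2)$), Corollary~\ref{corollary: limitations of the policies}(i) applies verbatim and gives
\[
\overline{\mathcal{G}}(k) > \sum_{i=1}^n \frac{V_{i,0}}{2}\left( (1 - w_i^2 d_i^2)^{k/2}\left[ \left(\frac{1 - w_i d_i}{1 + w_i d_i}\right)^{\varepsilon_i} + \left(\frac{1 - w_i d_i}{1 + w_i d_i}\right)^{-\varepsilon_i} \right] - 2 \right) = \sum_{i=1}^n \frac{V_{i,0}}{2}\left( (1 - w_i^2 d_i^2)^{k/2}\,\phi(\varepsilon_i) - 2 \right),
\]
where $\phi(\varepsilon_i) = z_i^{\varepsilon_i} + z_i^{-\varepsilon_i}$ with $z_i := \frac{1 - w_i d_i}{1 + w_i d_i}$. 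Since $d_i \in (-1,0)$ and $w_i \in (0,1)$, we have $z_i > 1$, so in particular $z_i \neq 1$ and Lemma~\ref{lemma: An Strictly Increasing Auxiliary Function} applies: $\phi$ is strictly increasing on $\mathbb{R}_{++}$ with $\inf \phi = 2$.

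Next I would argue that each summand is strictly positive. It suffices to show $(1 - w_i^2 d_i^2)^{k/2}\,\phi(\varepsilon_i) > 2$ for every $i$, i.e. $\phi(\varepsilon_i) > \frac{2}{(1 - w_i^2 d_i^2)^{k/2}}$. Because $\phi$ is strictly increasing and hence invertible on its range, and because the assumed lower endpoint $\phi^{-1}\!\left(\frac{2}{(1 - w_i^2 d_i^2)^{k/2}}\right)$ is well-defined precisely when $\frac{2}{(1 - w_i^2 d_i^2)^{k/2}}$ lies in the range of $\phi$ (which it does, since $1 - w_i^2 d_i^2 \in (0,1)$ forces $\frac{2}{(1 - w_i^2 d_i^2)^{k/2}} > 2 = \inf \phi$), the hypothesis $\varepsilon_i > \phi^{-1}\!\left(\frac{2}{(1 - w_i^2 d_i^2)^{k/2}}\right)$ together with strict monotonicity of $\phi$ yields exactly $\phi(\varepsilon_i) > \frac{2}{(1 - w_i^2 d_i^2)^{k/2}}$. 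Summing the strictly positive terms (all $V_{i,0} \geq 0$, with at least one positive since $\sum v_i = 1$) gives $\overline{\mathcal{G}}(k) > 0$.

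The only genuinely delicate point — and the one I would treat most carefully — is the well-definedness of $\phi^{-1}$ in the statement and the consistency of the interval $\left(\phi^{-1}\!\left(\frac{2}{(1 - w_i^2 d_i^2)^{k/2}}\right), \frac{k}{2}\right)$, i.e. that this interval is nonempty so that a valid $\varepsilon_i$ exists. Nonemptiness requires $\phi^{-1}\!\left(\frac{2}{(1 - w_i^2 d_i^2)^{k/2}}\right) < k/2$, equivalently $\frac{2}{(1 - w_i^2 d_i^2)^{k/2}} < \phi(k/2) = z_i^{k/2} + z_i^{-k/2}$; I would verify this by noting that $(1 - w_i^2 d_i^2)^{k/2} z_i^{k/2} = \big((1 - w_i d_i)(1 - w_i^2 d_i^2)/(1 + w_i d_i)\big)^{\!k/2}$ — actually more cleanly, $(1 - w_i^2 d_i^2) z_i = (1-w_i d_i)^2$ and $(1 - w_i^2 d_i^2) z_i^{-1} = (1 + w_i d_i)^2$, so $(1 - w_i^2 d_i^2)^{k/2}\phi(k/2) = (1 - w_i d_i)^k + (1 + w_i d_i)^k > 2$ by convexity of $t \mapsto t^k$ (or the binomial theorem, since the odd terms cancel and the $k=0$ term gives $2$ plus positive even-power corrections when $k \geq 2$). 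This is the same mechanism used at the end of the proof of the preceding theorem, so the argument is a routine adaptation; the statement should really include the standing assumption $\varepsilon_i \in (0, k/2)$ being consistent, but that is subsumed. The symmetry between the $+\varepsilon_i$ and $-\varepsilon_i$ cases is automatic because $\phi$ is even, so no separate argument is needed for the two branches.
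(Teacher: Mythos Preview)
Your proposal is correct and follows essentially the same route as the paper: invoke Corollary~\ref{corollary: limitations of the policies}(i) to get the lower bound $\sum_i \frac{V_{i,0}}{2}\big((1-w_i^2d_i^2)^{k/2}\phi(\varepsilon_i)-2\big)$, use Lemma~\ref{lemma: An Strictly Increasing Auxiliary Function} for strict monotonicity of $\phi$ so that the hypothesis $\varepsilon_i>\phi^{-1}\!\big(2/(1-w_i^2d_i^2)^{k/2}\big)$ forces each bracket positive, and then verify the interval is nonempty via $(1-w_id_i)^k+(1+w_id_i)^k>2$. Your treatment of the endpoint conditions (using the range of $\phi$ rather than a contradiction for positivity, and noting the evenness of $\phi$ to unify the $\pm\varepsilon_i$ cases) is slightly tidier than the paper's but not substantively different.
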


\medskip
\begin{proof}
Suppose either $\mathbb{E}[H_i(k)] = k/2 + \varepsilon_i$ or $\mathbb{E}[H_i(k)] = k/2 - \varepsilon_i$ for all $i =1, 2, \dots, n$ and for some~$\varepsilon_i \in (0, k/2)$.
Recalling part~$(i)$ of Corollary~\ref{corollary: limitations of the policies}, we have	
\begin{align} \label{ineq: G_bar in symmetric lattice market}
	\overline{\mathcal{G}}(k) 
	> \sum_{i = 1}^n \frac{V_{i, 0} }{2}\left(   (1 - w_i^2 d_i^2)^{k/2 } \cdot \phi(\varepsilon_i)  - 2 \right),
\end{align}
where $ \phi(\varepsilon_i):= \left( \frac{1 - w_i d_i}{1+ w_i d_i} \right)^{\varepsilon_i}   +  \left( \frac{1 - w_i d_i}{1+ w_i d_i} \right)^{-\varepsilon_i} $ .
Note that the ratio $\frac{1 - w_i d_i}{1+ w_i d_i} \neq 1$ and~$\varepsilon_i > 0$. 
Lemma~\ref{lemma: An Strictly Increasing Auxiliary Function} implies that $\phi(\varepsilon_i)$ is strictly increasing; hence, the inverse~$\phi^{-1}$ exists. 
By the hypothesis, we have $
\varepsilon_i \in \left( \phi^{-1}\left( \frac{2}{(1- w_i^2 d_i^2)^{k/2}} \right), \;  \frac{k}{2} \right)
$
which implies $\phi(\varepsilon_i) > \frac{2}{(1- w_i^2d_i^2)^{k/2}}$. 
Applying this inequality on the right-hand side of Inequality~(\ref{ineq: G_bar in symmetric lattice market}), a simple algebraic manipulation leads to the desired $\overline{\mathcal{G}}(k) >~0$ immediately.
To complete the proof, we must show that
$
\phi^{-1} \left( \frac{2}{(1-w_i^2 d_i^2)^{k/2}} \right) \in (0, k/2).
$ 
In particular, for seeing that $
\phi^{-1} \left( \frac{2}{(1-w_i^2 d_i^2)^{k/2}} \right) >0,
$ 
we proceed a proof by contradiction by supposing that~$\phi^{-1} \left( \frac{2}{(1-w_i^2 d_i^2)^{k/2}} \right) \leq 0.$  
Taking $\phi(\cdot)$ on both sides and applying the strict increasingness of $\phi$, it leads to $\frac{2}{(1-w_i^2 d_i^2)^{k/2}} \leq  2$, which implies that 
\begin{align} \label{ineq: contradicting ineq}
	1 \leq (1- w_i^2 d_i^2)^{k/2}. 
\end{align}
However, since $w_i \in (0, 1)$ and $d_i \in (-1, 0)$, it follows that $(1 - w_i^2 d_i^2) \leq 1$. Raising both sides to the $k/2$ power leads to $(1 - w_i^2 d_i^2)^{k/2} < 1$, which contradicts to the fact~(\ref{ineq: contradicting ineq}). Therefore, we must have~$ \phi^{-1} \left( \frac{2}{(1-w_i^2 d_i^2)^{k/2}} \right) > 0. $
Lastly, for seeing that $\phi^{-1}\left( \frac{2}{(1-w_i^2 d_i^2)^{k/2}} \right) < k/2$, we must show that
\[
\frac{2}{(1-w_i^2 d_i^2)^{k/2}} < \phi(k/2) = \left( \frac{1 - w_i d_i}{1+ w_i d_i} \right)^{k/2}   +  \left( \frac{1 - w_i d_i}{1+ w_i d_i} \right)^{-k/2}.
\]
It is equivalent to show that
$
(1-w_i^2 d_i^2)^{k/2} \left( \frac{1 - w_i d_i}{1+ w_i d_i} \right)^{k/2}   +  (1-w_i^2 d_i^2)^{k/2} \left( \frac{1 - w_i d_i}{1+ w_i d_i} \right)^{-k/2} > 2.
$
Simplifying the expression, we obtain
$
(1 - w_i d_i)^{k} + (1+ w_i d_i)^{k} > 2 
$
which holds true for any $w_i d_i \neq 0$ for all $i$ and all~$k>0.$
\end{proof}

\medskip
\begin{remark} \rm
It is essential to note that the condition on $\varepsilon > \phi^{-1}(\cdot)$ in Theorem~\ref{theorem: RPE in Symmetric Lattice Market} is not an overly restrictive condition, and it can be easily satisfied. For illustration, within a daily basis scale,~$|d_i| \ll 1$, with one common value being $d_i \approx -0.02$, as can be seen in Section~\ref{section: empirical studies}. If we consider using multi-double linear policies and trading with one year comprising $k=252$ days, then the ratio~$2/(1-w_i^2 d_i^2)^{k/2} \geq 2$ for all $w_i \in (0, 1)$. 
Then the RPE holds if the condition $\phi(\varepsilon_i) \geq \frac{2}{(1-w_i^2 d_i^2)^{k/2}}$ in Theorem~\ref{theorem: RPE in Symmetric Lattice Market} is met for some $\varepsilon_i \geq 8$.  More details about these findings will be presented in Section~\ref{section: empirical studies} using historical data.
\end{remark}

\medskip
\section{Parameters Estimation} \label{section: parameters estimation}
In practice, the parameters of the generalized lattice model, as developed in previous sections, are typically unknown to the trader and must be estimated. This constitutes the primary goal of this section. Specifically, for each Asset~$i$ with $i= 1,2, \dots, n$, we shall discuss how to estimate the required parameters, e.g., $u_i, d_i, \Phi_{i,j}$, and $\Gamma_{i, \ell}$ for $j =0,1\dots,m$ and $\ell =1,2,\dots,n$.

\medskip
\subsection{Estimating the Upward and Downward Factors}
To estimate the upward and downward movement factors $u_i, d_i$ of returns, we work with a data-driven \textit{geometric mean} described as follows:
Let $X_i(j)$ be the per-period returns obtained from the data. 
Then, for each asset, we partition the return data into two separate series,  positive and negative series, denoted by $\{x_i^+(j)\}$ and~$\{x_i^-(j)\}$ consisting of $k^+$ positive values and $k^-$ negative values in the series, respectively.
Now we solve $\prod_{j=1}^{k^+} (1 + x_i^+(j)) = (1 + u_i)^{k^+}$ to find the upward movement estimate for Asset~$i$, denoted by $u_i = \widehat{u}_i$. Likewise, by solving $\prod_{j = 1}^{k^-}(1+x_i^-(j)) = (1 + d_i)^{k^-}$, we get the downward movement estimate, denoted by $d_i = \widehat{d}_i$, for Asset $i$.\footnote{Given $X_i = (X_i(1),\dots, X_i(k))$, the geometric mean of $X$, call it $\overline{X}_i$, is the solution of the nonlinear equation $
\prod_{j=0}^{k-1} (1+X_i(j)) = (1 + \overline{X}_i )^k$; for more details on this topic, we refer to \cite{casella2021statistical}. On the other hand, one can also compute the upward and downward movement factors $u_i, d_i$ by the standard  Cox-Ross-Rubinstein (CRR) method in finance; e.g., see \cite{cox1979option, luenberger2013investment}.}

\medskip
\subsection{Estimating the Correlation Matrix}
We estimate the correlation matrix, denoted by $\Gamma = \widehat{\Gamma}$, among the assets by determining the correlation coefficients of their realized asset returns. To achieve this, we collect $l$ historical returns data points for each Asset $i$ at stage $k=0$, denoted by~$x_i^*(-1:-l):=\{x_i^*({-1}), \dots,x_i^*({-l})\}$.
Specifically, for two distinct Assets $i$ and~$j$ where $i, j \in \{1, \dots, n\}$ and for $ i \neq j$, the correlation is represented by~$
\widehat{\Gamma}_{i,j} = {\rm corr}(x_i^*(-1:-l), x_j^*(-1:-l)).
$

Notably, in contrast to the conventional setting where the diagonal elements of the correlation matrix are equal to one, we define
$\Gamma_{i, i} := 0$ for $i \in \{1, \dots, n \}.$
The departure from tradition is motivated by the fact that the correlation of the previous stage is characterized by the Markov coefficient $\Phi_{i,1}$, which will be estimated in Section~\ref{subsection: estimating the Markov coefficients}.  
Hence, the estimated matrix $\widehat{\Gamma}$ can be written as
\begin{align*}
\widehat{\Gamma} := 
\begin{bmatrix}
	0 & \widehat{\Gamma}_{1,2} & \cdots & \widehat{\Gamma}_{1,n} \\
	\widehat{\Gamma}_{2,1} & 0 & \cdots & \widehat{\Gamma}_{2,n} \\
	\vdots & \vdots & \ddots & \vdots \\
	\widehat{\Gamma}_{n,1} & \widehat{\Gamma}_{n,2} & \cdots & 0
\end{bmatrix}.
\end{align*}

\medskip
\subsection{Estimating the Markov Coefficients}
\label{subsection: estimating the Markov coefficients}
Having obtained the estimates of $(u_i, d_i)$ for $i=1,\dots,n$, we then estimate the Markov coefficients~$\Phi_{i,j}$ for the probability model described in Section~\ref{section: Preliminaries}. 
Indeed, for Asset~$i$, we define the \textit{response} variable 
$
Y_i(k) := \frac{X_i(k) - d_i}{u_i - d_i}.
$
Then, it follows that 
\begin{align*}
\mathbb{E}\left[ Y_i(k) \mid X({k-1}: {k-m}) \right]  
& = \frac{1}{u_i - d_i} \left(\mathbb{E}[X_i(k) \mid X({k-1}: k-m)] - d_i \right) \\
& = \frac{u_i \left(  \Phi_{i,0} + \sum_{j=1}^m \Phi_{i,j} X_i(k-j) + \sum_{\ell=1}^n \Gamma_{i, \ell} X_\ell(k-1) \right) }{u_i - d_i} \\
&  \qquad 			+ \frac{ d_i \left(1- \Phi_{i,0}  -\sum_{j=1}^m \Phi_{i,j} X_i(k-j) - \sum_{\ell=1}^n \Gamma_{i, \ell} X_\ell(k-1) \right) - d_i}{u_i - d_i} \\
& = \Phi_{i,0} + \sum_{j=1}^m \Phi_{i,j} X_i(k-j) + \sum_{\ell=1}^n \Gamma_{i, \ell} X_\ell(k-1),
\end{align*}
where the second last equality holds by using Equality (\ref{eq: conditional probability for asset i}).

To estimate the parameters $\Phi_{i, j}$ for Asset $i$,
we collect $l$ historical returns data points, $x_i^*(-1:-l) = \{x_i^*(-1), \dots, x_i^*(-l)\}$, at stage $k=0$ satisfying $l >m$
, and transform it into binomial values~$\{x_i({-1}), \dots, x_i({-l}) \}$ based on the sign of per-period returns of each stage, which satisfy
$
x_{i}(j):= u_i \cdot\mathbb{1}_{\{x_{i}^*(j) \geq 0\}} + d_i \cdot \mathbb{1}_{\{x_i^*(j) < 0\}}
$
where $\mathbb{1}_A$ is the indicator function of event $A$.
Subsequently, we compute the response values $y_i(k) = \frac{1}{u_i - d_i}(x_i(k) - d_i)$ and minimize the \textit{residual sum of squares} (RSS)~as 
\[
{\rm RSS}(\Phi_{i, 0}, \Phi_{i,1}, \dots, \Phi_{i, m}) := \sum_{k = -l+m}^{-1}\left( y_i(k) - \Phi_{i, 0} - \sum_{j = 1}^{m} \Phi_{i,j} x_i({k-j})  -  \sum_{\ell=1}^n \Gamma_{i, \ell} x_\ell(k-1) \right)^2.
\]
Next lemma shows that the parameter space of Markov coefficients forms a convex polyhedron, which facilitates the optimization.

\medskip
\begin{lemmarep}[Convex Constraint Set on Markov Coefficients] \label{lemma: markov chain conditional prob model}
Let $m, n \geq 1$.  For $i=1,\dots,n$, the inequalities $0 \leq \Phi_{i,0} + \sum_{j=1}^{m} \Phi_{i,j} x_{i}(k-j) + \sum_{\ell=1}^{n} \Gamma_{i,\ell} x_{\ell} (k-1)\leq1$ where $x_{i}(\cdot) \in \{ u_i, d_i\} $ are equivalent~to
\begin{align} \label{eq: prob constraints}
	\left|\Phi_{i, 0} - \frac{1}{2} + \frac{u_i + d_i}{2}  \sum_{j=1}^{m} \Phi_{i,j} +\sum_{\ell=1}^{n} \frac{u_\ell + d_\ell}{2}\Gamma_{i,\ell}\right| 
	+ \frac{u_i - d_i}{2} \sum_{j=1}^{m} \left|  \Phi_{i, j} \right|
	+ \sum_{\ell=1}^{n} \frac{u_\ell-d_\ell}{2} \left|\Gamma_{i,\ell}\right|\leq \frac{1}{2},
\end{align}
which forms a convex polyhedron.
\end{lemmarep}

\medskip
\textit{Proof.} See Appendix~\ref{Appendix: Technical Proofs of Section: gain-loss analysis}.

\medskip
For each Asset~$i$, with the aid of Lemma~\ref{lemma: markov chain conditional prob model}, the optimal estimators, call it $\widehat{\Phi}_{i, 0}, \widehat{\Phi}_{i,1}, \dots, \widehat{\Phi}_{i, m}$, can be solved by the following \textit{constrained least squares optimization} problem:
\begin{align}
\label{eq: RSS constraint problem}
& \min_{\Phi_{i, 0}, \Phi_{i,1}, \dots, \Phi_{i, m}} \quad {\rm RSS}(\Phi_{i, 0}, \Phi_{i,1}, \dots, \Phi_{i, m}) \nonumber\\
&{\rm s.t.} \;   
\left|\Phi_{i, 0} - \frac{1}{2} + \frac{u_i + d_i}{2}  \sum_{j=1}^{m} \Phi_{i,j} +\sum_{\ell=1}^{n} \frac{u_\ell + d_\ell}{2}\Gamma_{i,\ell}\right|  + \frac{u_i - d_i}{2} \sum_{j=1}^{m} \left|  \Phi_{i, j} \right|
+ \sum_{\ell=1}^{n} \frac{u_\ell-d_\ell}{2} \left|\Gamma_{i,\ell}\right|\leq \frac{1}{2}, \nonumber
\end{align}
Note that the optimization problem above is a linear constraint quadratic convex program; see \cite{boyd2004convex}, which facilitates an efficient computation.

\medskip
\subsection{Estimating the Time-Varying Probabilities}
Recalling that in Lemma~\ref{lemma: recursion formula of probability}, the probability  of receiving a positive return for Asset~$i$ at stage~$k$, i.e., $p_i(k) = \mathbb{P} (X_i(k) = u_i)$ can be computed by the following recursive equation
\begin{align}  
p_i (k) 
&= \Phi_{i,0} + \sum_{j=1}^m \Phi_{i,j} [(u_i - d_i) \cdot p_i({k-j}) + d_i]+ \sum_{\ell=1}^n \Gamma_{i, \ell} [(u_\ell - d_\ell) \cdot p_\ell({k-1}) + d_\ell].
\end{align}
Later in Section \ref{section: empirical studies}, this recursion formula~(\ref{eq: p_k recursion}) is useful to estimate the binomial probability for each~stage.


\medskip
\section{Empirical Studies}
\label{section: empirical studies}
In this section, we conduct extensive empirical studies using historical price data from the top~30 holdings of the S\&P~500, sorted by market capitalization. In the sequel, we shall refer to these~$30$ assets as ``S\&P 30."\footnote{The thirty stocks used in this paper, in terms of their tickers, are: \textbf{AAPL} (Apple Inc.), \textbf{MSFT} (Microsoft Corporation), \textbf{AMZN} (Amazon.com Inc.), \textbf{NVDA} (NVIDIA Corporation), \textbf{GOOGL} (Alphabet Inc. Class~A), \textbf{GOOG} (Alphabet Inc. Class C), \textbf{META} (Meta Platforms Inc.), \textbf{BRK.B} (Berkshire Hathaway Inc. Class B), \textbf{TSLA} (Tesla Inc.), \textbf{UNH} (UnitedHealth Group Inc.), \textbf{JNJ} (Johnson \& Johnson), \textbf{JPM} (JPMorgan Chase \& Co.), \textbf{LLY} (Eli Lilly and Company), \textbf{XOM} (Exxon Mobil Corporation), \textbf{V} (Visa Inc.), \textbf{PG} (Procter \& Gamble Co.), \textbf{AVGO} (Broadcom Inc.), \textbf{HD} (The Home Depot Inc.), \textbf{MA} (Mastercard Incorporated), \textbf{CVX} (Chevron Corporation), \textbf{MRK} (Merck \& Co.), \textbf{ABBV} (AbbVie Inc.), \textbf{PEP} (PepsiCo Inc.), \textbf{COST} (Costco Wholesale Corporation), \textbf{ADBE} (Adobe Systems Incorporated), \textbf{KO} (The Coca-Cola Company), \textbf{WMT} (Walmart Inc.), \textbf{CSCO} (Cisco Systems Inc.), \textbf{MCD} (McDonald's Corporation), and \textbf{BAC} (Bank of America Corporation). The price data were retrieved from Yahoo! Finance and the associated estimated parameters are provided in Appendix~\ref{appendix: Estimated Parameters of the S&P30 Stocks}.} 
Our focus is twofold: The first is to demonstrate the efficacy of the generalized lattice market; the second is to illustrate the robust positive expected profits of the multi-double linear policies. 
Additionally, we include a comprehensive sensitivity analysis to further enhance our contributions. 
Toward the end of this section, a method for identifying optimal weights is discussed, accompanied by several studies of the out-of-sample trading performance of the policies.

\medskip
\subsection{Validation of the RPE Property}
We validate the RPE property through two examples: The first example confirms the RPE versus weights; the second investigates the property via sensitivity analysis; i.e., we explore the effects of various parameters such as weights, memory length, and initial allocation constants.

\medskip
\begin{example}[RPE of S\&P 30 Portfolio]\rm
\label{example: RPE with SP30 Portfolio}
To validate our theoretical results, we consider the~S\&P~30 portfolio consisting of top $n=30$ stocks listed on the S\&P 500 index. 
For each stock, we collect the historical daily adjusted closing prices over a one-year period from January 2022 to December 2022. 
Our analysis begins with the estimation of essential parameters of each asset, using the methods detailed in Section~\ref{section: parameters estimation}.
Specifically, we estimate the upward and downward movement factors $\widehat{u}_i$, $\widehat{d}_i$, Markov coefficients~$\widehat{\Phi}_{i,j}$, correlation coefficients~$\widehat{\Gamma}_{i,\ell}$, and the probability $p_i(j)$ for each asset. The detailed values of these estimates can be found in Appendix~\ref{appendix: Estimated Parameters of the S&P30 Stocks}.

\medskip
\textit{Efficacy of Generalized Lattice Market Model.}
With these estimations in hand, we proceeded to simulate the stock prices using the recursion~$S_i(k+1) = S_i(k) (1+X_i(k))$ for $i \in \{1, 2, \dots,30\}$ over a time horizon of $252$ days, covering the period from January 2023 to July 2023, with~$10,000$ sample paths. 
Figure~\ref{fig: Monte Carlo simulation} reveals both the actual historical stock prices (in solid black lines) for 2022 and the simulated 500 paths for the first seven months of 2023. 
The paths were generated using the aforementioned parameter estimates. Remarkably,  an examination of the figure illustrates that the simulated paths approximately encompass the real stock prices for the corresponding period in~2023, even under varying market conditions. 
This figure indicates the potential of our generalized lattice model to offer insights into market behavior, demonstrating that the training data might represent a bear market, while the testing data may indicate a bull market.

\begin{figure}[h!]
	\centering
	\includegraphics[width=.7\linewidth]{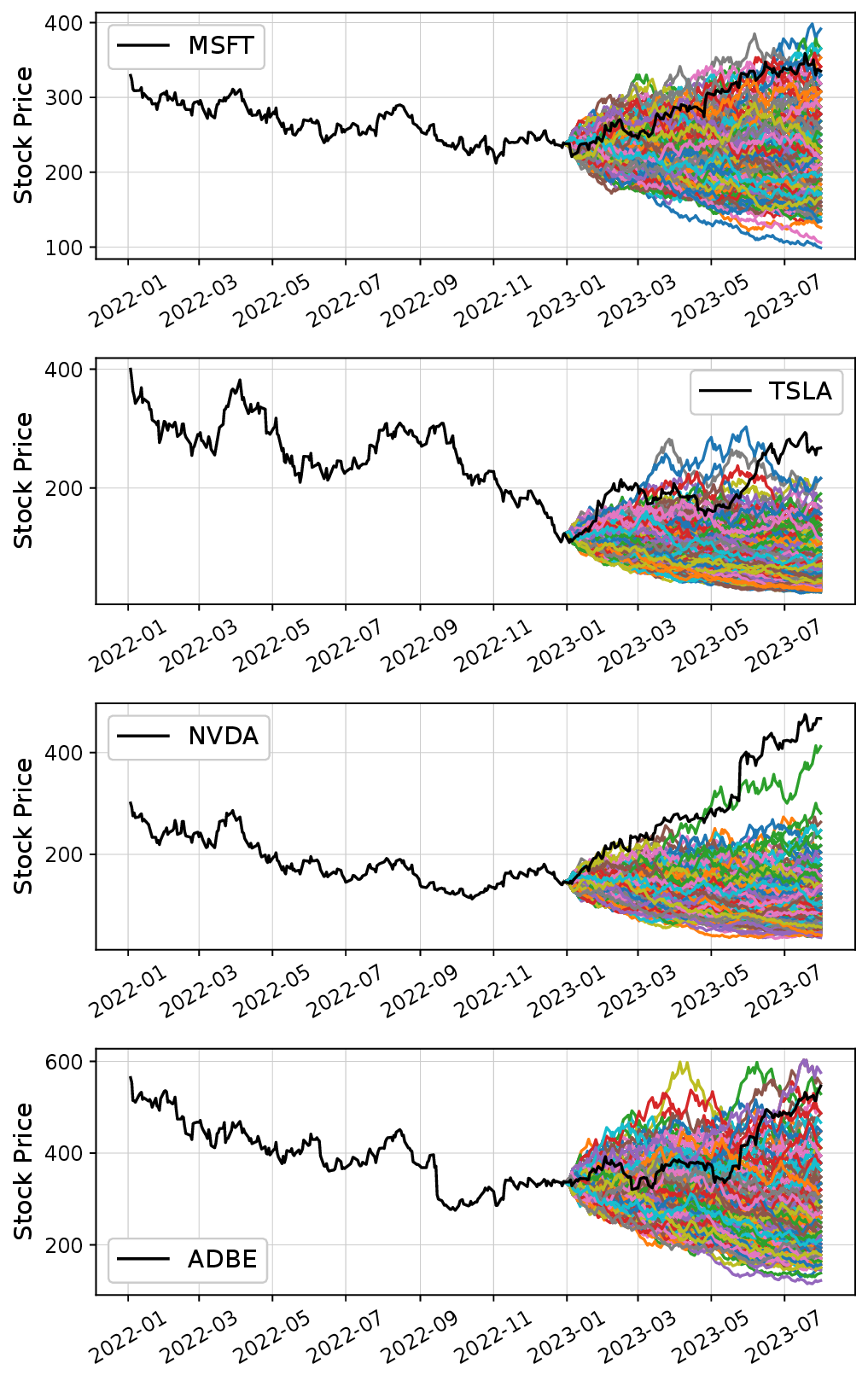}
	\caption{Example of Real Stock Prices in S\&P 30 Versus Monte Carlo Simulated Sample Paths Using Generalized Lattice Model with Memory Length $m=5$.}
	\label{fig: Monte Carlo simulation}
\end{figure}

\medskip
\textit{Initializing the Multi-Double Linear Policies.}
As mentioned in previous sections, to implement the multi-double linear policies, one must specify the triple~$(\alpha, w, v)$ and initial account value~$V_0 > 0$.
Here, we first take $\alpha = 1/2$, an initial account value of~$V_0 = \$1$, and equal asset weights $w_i = \omega \in [0, 1]$ for all $i=1,2, \dots, 30$ for S\&P 30, using a generalized lattice model with varying memory lengths~$m \in \{1, 2, 5, 10\}$ and a zero risk-free rate of $r_f = 0$.
For the specification of the initial allocation vector $v$, we consider three distinct strategies: The first one is \textit{equal-weighted} allocation, denoted as~$v_{EW} = [1/30\; \cdots \;1/30]^\top$. 
The second one is \textit{capital-weighted} allocation, computed by the normalized weight of the S\&P 30 index, denoted as~$v_{{\rm CW}}$.\footnote{The capital-weighted allocation is retrieved from Slickcharts.}
The third one is \textit{gain-loss-weighted} allocation, calculated using the gain-loss of training data as a criterion for capital allocation. It is denoted as $v_{GL, i} := |S_i(k)/S_i(0) - 1 |/\sum_{i=1}^{n} |S_i(k)/S_i(0) - 1 |$, forming the vector $v_{\rm GL}$.
Table \ref{tab: Summary of Parameters Sets} summarizes the choice of the relevant parameters.

\begin{table}[h!]
	\centering
	\caption{Summary of Parameters Sets.}
	\begin{tabular}{c|c c c}
		& $(i)$ & $(ii)$ & $(iii)$ \\
		\hline
		$V_0~(\$)$ & 1 & 1 & 1 \\
		$r_f~(\%)$ & 0 & 0 & \{0.92, 1.51, 3.88\} \\
		$m$ & \{1, 2, 5, 10\} & 1 & 1 \\
		$\alpha$ &  0.5 & \{0.1, 0.3, 0.5, 0.7, 0.9\} & 0.5 \\
		$w$ & $\omega \in [0,1]$ & $\omega \in [0,1]$ & $\omega \in [0,1]$ \\
		$v$ & $\{v_{\rm EW}, v_{\rm CW}, v_{\rm GL}\}$ & $v_{\rm EW}$ & $v_{\rm EW}$\\
	\end{tabular}
	
	\label{tab: Summary of Parameters Sets}
\end{table}

\medskip
\textit{Gain-Loss Analysis.}
With the various settings of the triple $(\alpha, w, v)$ as seen in Table~\ref{tab: Summary of Parameters Sets}, we implement the multi-double linear policies across the $30$ assets within the portfolio.
The trading performance, as illustrated in Figure~\ref{fig: expected gain-loss against weights}, depicts the average cumulative gain-loss against various weights for the parameter set $(i)$ in Table~\ref{tab: Summary of Parameters Sets}. 
We see that there exists a positive expected gain-loss~$\overline{\mathcal{G}}(k) \geq 0$ for the given portfolio and for all parameter set combinations.
Remarkably, this positivity is found to be robust across diverse choices of $v, w, m$ at~$\alpha = 1/2$.

\begin{figure}[h!]
	\centering
	\includegraphics[width=.8\linewidth]{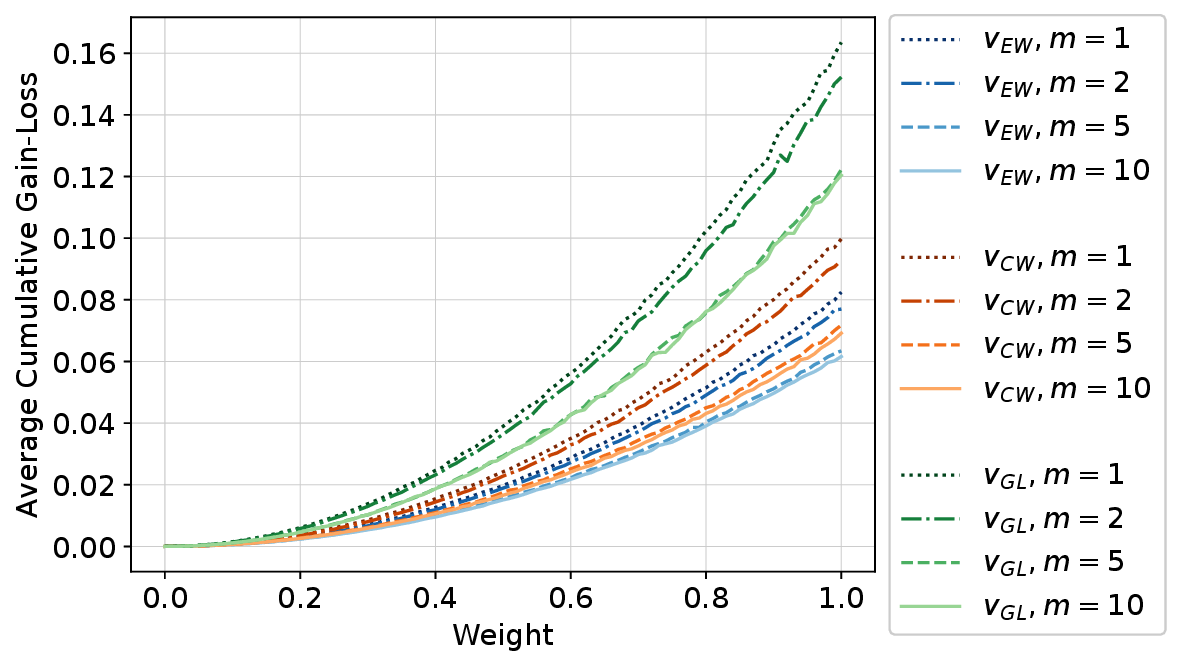}
	\caption{Average Cumulative Gain-Loss $\overline{\mathcal{G}}(k)$ Against Weights $\omega \in [0,1]$ for Parameter Set $(i)$ Described in Table~\ref{tab: Summary of Parameters Sets}.}
	\label{fig: expected gain-loss against weights}
\end{figure}

Moreover, we present the associated standard deviation of the cumulative gain-loss for the parameter set $(i)$, depicted in Figure \ref{fig: variance of gain-loss against weights}. 
Notably, both the average cumulative gain-loss and the corresponding standard deviation exhibit increasing monotonicity with respect to the weights, as shown in Figures \ref{fig: expected gain-loss against weights} and \ref{fig: variance of gain-loss against weights}.
The results presented here form a foundation for practitioners aiming to explore novel trading paradigms; see Section~\ref{subsection: Trading Performance with Transaction Costs} to follow.

\begin{figure}[h!]
	\centering    \includegraphics[width=.8\linewidth]{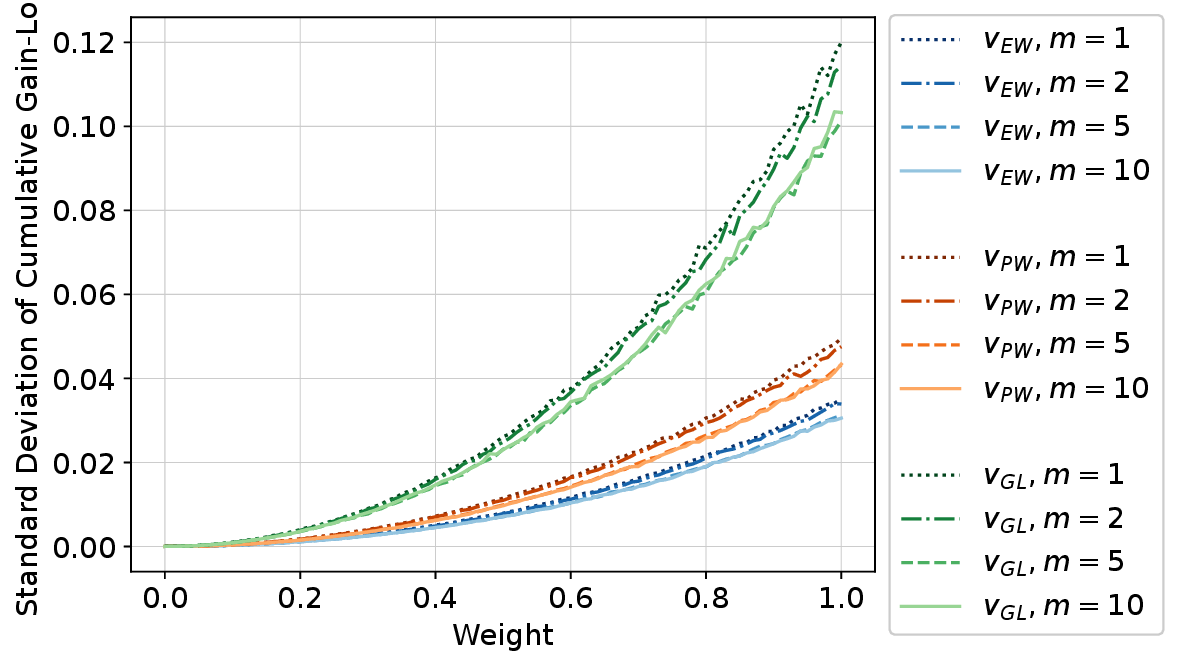}
	\caption{Standard Deviation of Cumulative Gain-Loss ${\rm std}(\mathcal{G}(k))$ Against Weights $\omega \in [0, 1]$ for Parameter Set~$(i)$ Specified in Table~\ref{tab: Summary of Parameters Sets}.}
	\label{fig: variance of gain-loss against weights}
\end{figure}

\end{example}

\medskip
\begin{example}[Sensitivity Analysis] \rm
Continuing with the dataset and estimations provided in Example~\ref{example: RPE with SP30 Portfolio}, we proceed further to analyze the gain-loss trading performance under different~$\alpha \in \{0.1, 0.3, 0.5, 0.7, 0.9\}$, in combination with various weights~$\omega$, as the parameter set $(ii)$ listed in Table~\ref{tab: Summary of Parameters Sets}. 
The corresponding results are shown in Figure~\ref{fig: gain-loss against weights for different alpha}. 
We see that the average gain-loss is positive for $\alpha \in \{0.1, 0.3, 0.5\}$, implying that there is a preference for initially allocating more capital to the short positions. 
In contrast, a negative average gain-loss is recorded for $\alpha \in \{0.7, 0.9\}$. This phenomenon might be attributable to a decreasing trend for the S\&P 30 over the examined period.

\begin{figure}[h!]
	\centering
	\includegraphics[width=.8\linewidth]{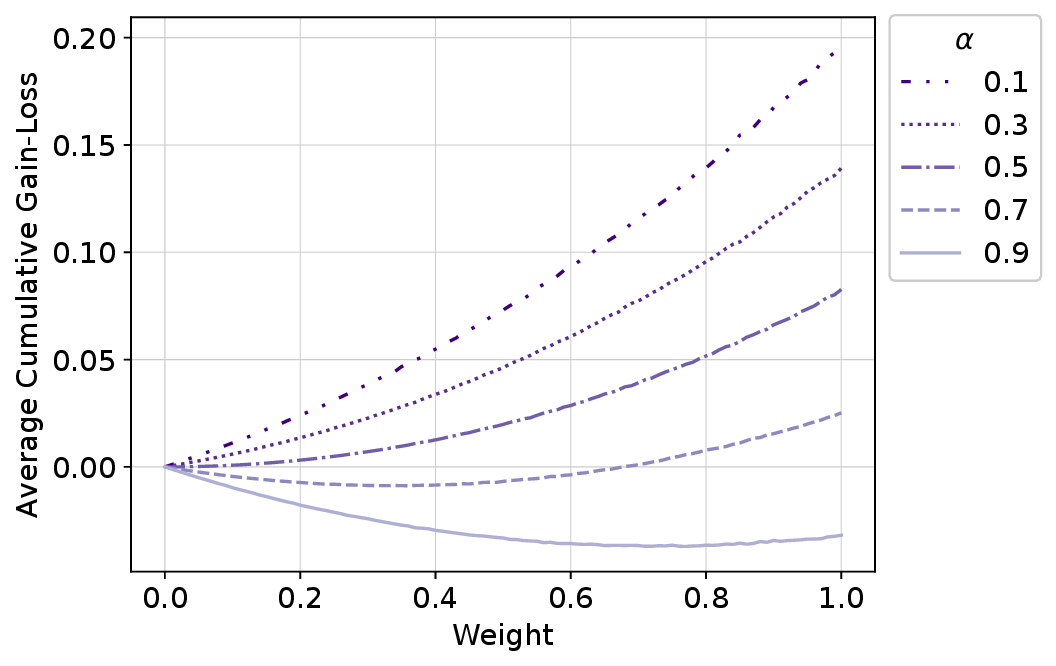}
	\caption{Sensitivity of $\alpha$: Average Cumulative Gain-Loss Versus Weights Under Different $\alpha$ Values.}
	\label{fig: gain-loss against weights for different alpha}
\end{figure}

Besides, we extend our analysis to implement multi-double linear policies that incorporate a non-zero risk-free rate, $r_f \in \{0.92\%, 1.51\%, 3.88\%$\},\footnote{The three risk-free rates $r_f$ listed here were retrieved from the U.S. 10-year Treasury yields on the first day of the years 2021, 2022, and 2023, respectively.} as outlined in the parameter set $(iii)$ in Table \ref{tab: Summary of Parameters Sets}. 
By comparing the performances under parameter set~$(i)$ with  $v_{EW}$, $m=1$ versus $(iii)$, represented respectively in Figure \ref{fig: non-zero risk-free rate}, we observe that incorporating a risk-free rate for long positions leads to a higher cumulative gain-loss, which is consistent with the intuition that the risk-free rate would contribute additional capital due to under-investment in each period.

\begin{figure}[h!]
	\centering
	\includegraphics[width=.8\linewidth]{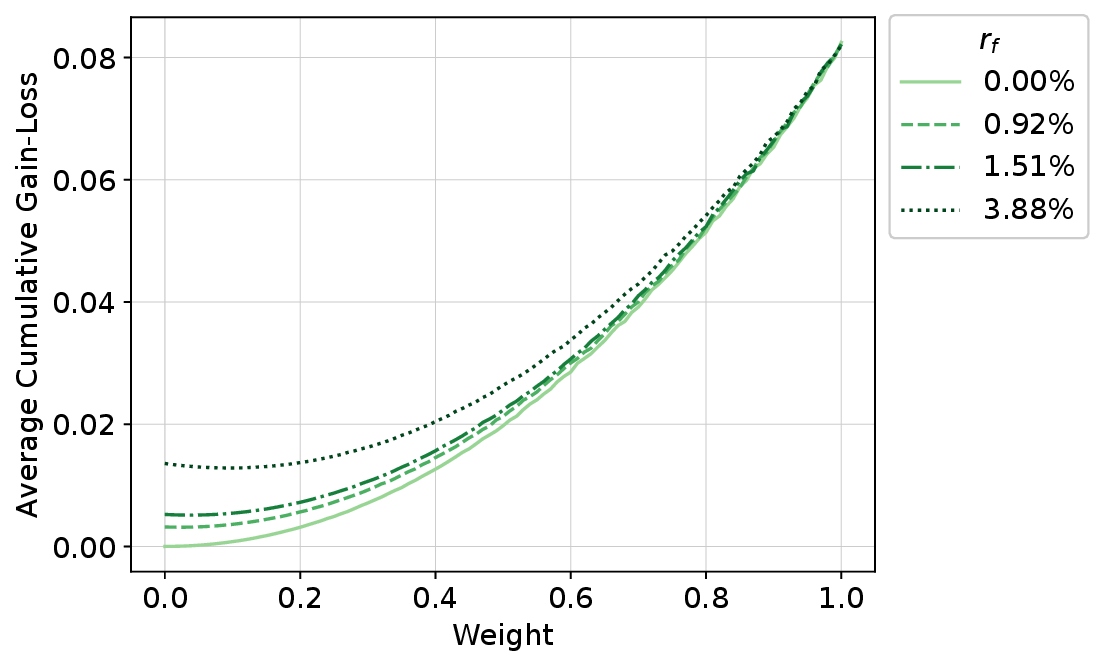}
	\caption{Sensitivity of $r_f$: Average Cumulative Gain-Loss Versus Weight Under Different Risk-Free Rates.}
	\label{fig: non-zero risk-free rate}
\end{figure}

\end{example}

\medskip
\subsection{Out-of-Sample Trading Performance with Costs} \label{subsection: Trading Performance with Transaction Costs}
This section studies out-of-sample trading performance using the multi-double linear policies with triple $(\alpha, w, v) \in \{1/2\} \times [0,1]^n \times \mathcal{V}$. In the following analysis, we also consider the impact of transaction costs.

\medskip
\subsubsection{Identifying Optimal Weights via Mean-Variance Efficient Frontier}
When the weight $w_i = \omega$ for all $i=1, \dots, n$,  Figure~\ref{fig: expected gain-loss against weights} reveals that the expected cumulative gain-loss increases as the weight~$\omega$ increases. 
This is also evident in the standard deviation of the cumulative gain-loss across different weights; see Figure~\ref{fig: variance of gain-loss against weights}.
Therefore, we then depict the mean-variance efficient frontiers using the expected cumulative gain-loss~$\overline{\mathcal{G}}(k)$ against the standard deviation ${\rm std}(\mathcal{G}(k))$, as shown in Figure~\ref{fig: standard deviation against expected gain-loss}.
Based on the monotonic increasing property, 
one can determine an \textit{optimal} weight, call it $w^*$, using the mean-variance approach outlined in~\cite{hsieh2022robust}.

\begin{figure}[h!]
\centering
\includegraphics[width=.8\linewidth]{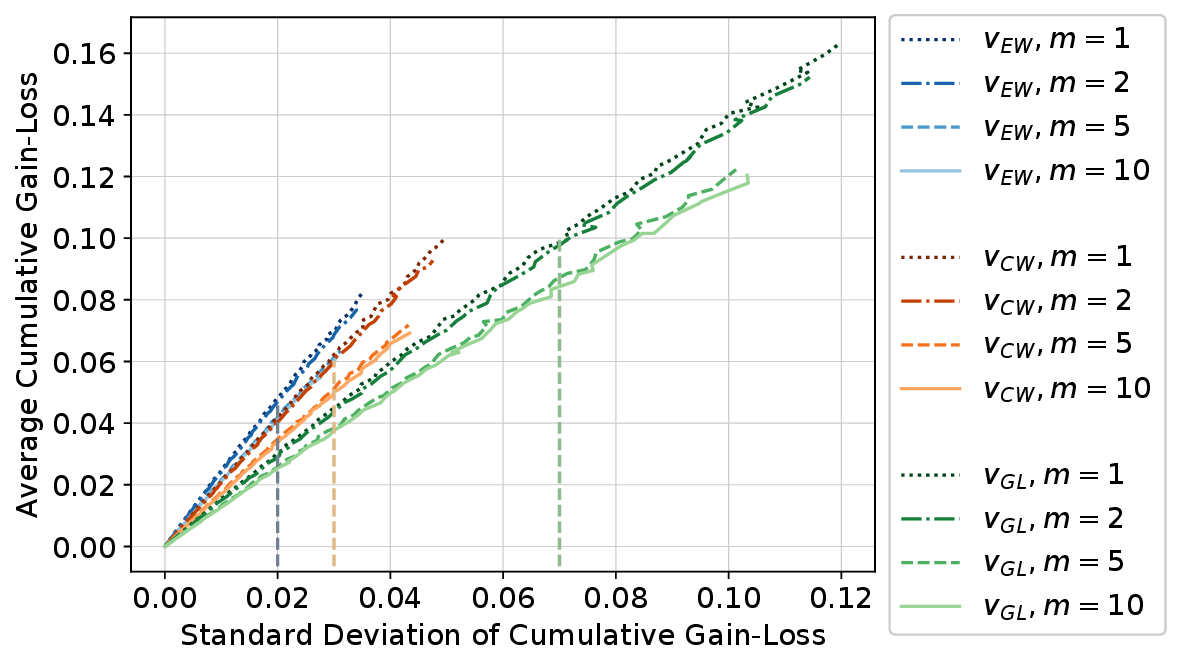}
\caption{Mean-Standard Deviation Efficient Frontiers: Expected Cumulative Gain-Loss $\overline{\mathcal{G}}(k)$ Against Standard Deviation ${\rm std}(\mathcal{G}(k))$ for Parameter Set $(i)$.}
\label{fig: standard deviation against expected gain-loss}
\end{figure}

In particular, the procedure is as follows: We begin by specifying a \textit{target} standard deviation,  then identify the corresponding maximum expected gain-loss under the chosen target. 
Once this maximum expected gain-loss is decided, it can be traced back to the optimal weight due to the strictly increasing relationship between expected gain-loss and weight.
Below we consider three different approaches to identifying optimal weights.

\medskip
\textit{Constant Optimal Weight for S\&P 30.}
Consider the same S\&P 30 dataset from Example~\ref{example: RPE with SP30 Portfolio}.
Suppose we set the target standard deviation ${\rm std}(\mathcal{G}_{\rm EW}(k)) := 0.02$, ${\rm std}(\mathcal{G}_{\rm CW}(k)) := 0.03$, and ${\rm std}(\mathcal{G}_{\rm GL}(k)) := 0.07$ for $v_{\rm EW}$, $v_{\rm CW}$, and $v_{\rm GL}$, respectively, represented by the dashed line in Figure~\ref{fig: standard deviation against expected gain-loss}, for the parameter set $(i)$ with memory length $m=1$. 
The maximum expected gain-loss are determined as $\overline{\mathcal{G}}_{\rm EW}(k) \approx 0.0479$, $\overline{\mathcal{G}}_{\rm CW}(k) \approx 0.0614$, and $\overline{\mathcal{G}}_{\rm GL}(k) \approx 0.0963$. 
Due to the strictly monotonic increasing relationship between expected gain-loss and weight $\omega$, as seen in Figure \ref{fig: expected gain-loss against weights}, we can readily trace back the optimal weight: $w^*_{c, \rm EW} \approx 0.77$, $w^*_{c, \rm CW} \approx 0.79$, and  $w^*_{c, \rm GL} \approx 0.78$, collectively denoted as $w_c^* := \{w^*_{c, \rm EW}, w^*_{c, \rm CW}, w^*_{c, \rm GL} \}$.

\medskip
\textit{Optimal Weight Determined by Top 10 Companies of S\&P 30.}
Except for assigning the same constant weight $w_c^*$ for all assets, an alternative approach focuses solely on investing assets that exhibit the most promising performance.
Here, we consider allocating the optimal constant weights to the top 10 assets of S\&P 30 that exhibit the highest gain-loss, as determined by the training data. For the remaining assets that are not selected, we assign weights of zero. The corresponding weight is denoted as~$w_{\rm top}^*$.

\medskip
\textit{Optimal Weight Determined by Each Company.}
Lastly, we explore determining optimal weights for individual assets, denoted by $w_i^*$ for $i=1,\dots,n$, using the monotonic increasing property between gain-loss function and weight.
For instance, by setting a uniform target standard deviation~${\rm std}(\mathcal{G}(k)) := 0.01$ for all assets, represented by the dashed line shown in Figure \ref{fig: gbar v.s. std for each asset}, we can determine the associated maximum average cumulative gain-loss for each asset, leading to a set of optimal weights called~$w_{\rm vary}^*:=\{w_i^*\}$ via a similar method mentioned previously.\footnote{In the meantime, we also assign a zero weight to the assets with an average gain-loss close to zero, using a threshold of $0.0001$.}

\begin{figure}[h!]
\centering
\includegraphics[width=.8\linewidth]{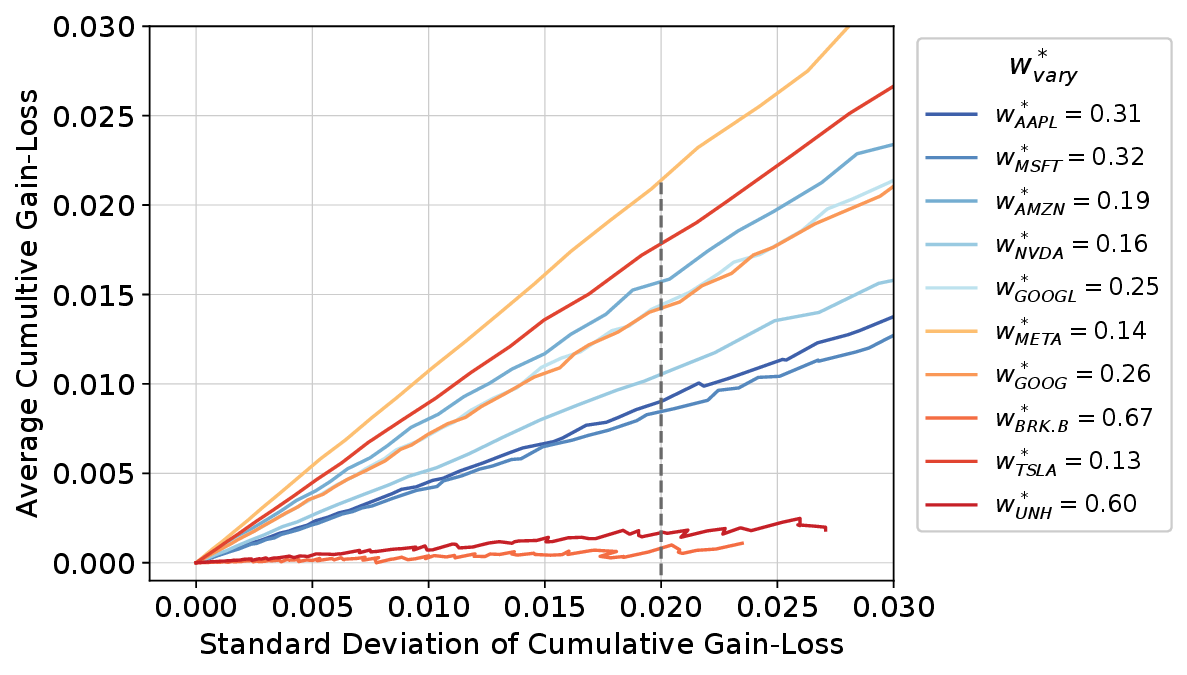}
\caption{Mean-Variance Efficient Frontiers: Average Cumulative Gain-Loss $\overline{\mathcal{G}}_i(k)$ Against Standard Deviation~${\rm std}(\mathcal{G}_i(k))$ for top 10 Asset in S\&P 30 Portfolio.}
\label{fig: gbar v.s. std for each asset}
\end{figure}

\medskip
\subsubsection{Out-of-Sample Trading  Performance via Backtesting}
With the selected optimal weights obtained in the preceding sections, we conducted backtesting over an out-of-sample period from January 2023 to July 2023. 
To analyze the trading performance across distinct sets of triple parameters $(\alpha, w, v)$, we carry out three initial allocations $v \in \{v_{\rm EW}, v_{\rm CW}, v_{\rm GL}\}$ and three types of optimal weights~$w \in \{w_{c}^*, w_{\rm top}^*, w_{\rm vary}^* \}$, the constant optimal weight, the top 10 constant optimal weight, and the optimal weight varying among assets, respectively at $\alpha = 1/2$.
The corresponding trading trajectories and performance are shown in Figure~\ref{fig: trading trajectory with distinct triple sets} and quantitatively summarized in Table~\ref{tab: trading performance with several triple parameter sets}.
For these scenarios, we incorporate a non-zero risk-free rate $r_f = 3.88\%$ and a transaction cost $c = 0.01\%$, corresponding to 1 basis point, for multi-double linear policies. 

Due to the explicit trend for the year~2023 for some assets listed in~S\&P 30, the multi-double linear policies assure positive gain-loss for several parameter sets even in the consideration of transaction costs.
Additionally, the triple~$(\alpha, w, v) = (1/2, w_{\rm top}^*, v_{\rm GL})$ has maximium out-of-sample gain-loss, $(1/2, w_{\rm top}^*, v_{\rm EW})$ has minimum standard deviation, and~$(1/2, w_{\rm vary}^*, v_{\rm GL})$ has minimum maximum percentage drawdown.

Additionally, Figure~\ref{fig: trading trajectory with distinct triple sets} also presents Monte-Carlo-based 95\% prediction intervals for each trading scenario in the figure. These intervals were generated via 10,000 simulated prices within the framework of the generalized lattice market model. For each path, we compute the average gain-loss and standard deviation for each day.
In the figure, three distinct prediction intervals are highlighted. The dark-gray, gray, and light-gray regions correspond to the parameter triples $(1/2, w_{\rm top}^*, \cdot)$, $(1/2, w_{\rm c}^*, \cdot)$, and $(1/2, w_{\rm vary}^*, \cdot)$, respectively.
These intervals exhibit significant positive gain-loss regions, which is consistent with our theory.

\begin{figure}[h!]
\centering
\includegraphics[width=.75\linewidth]{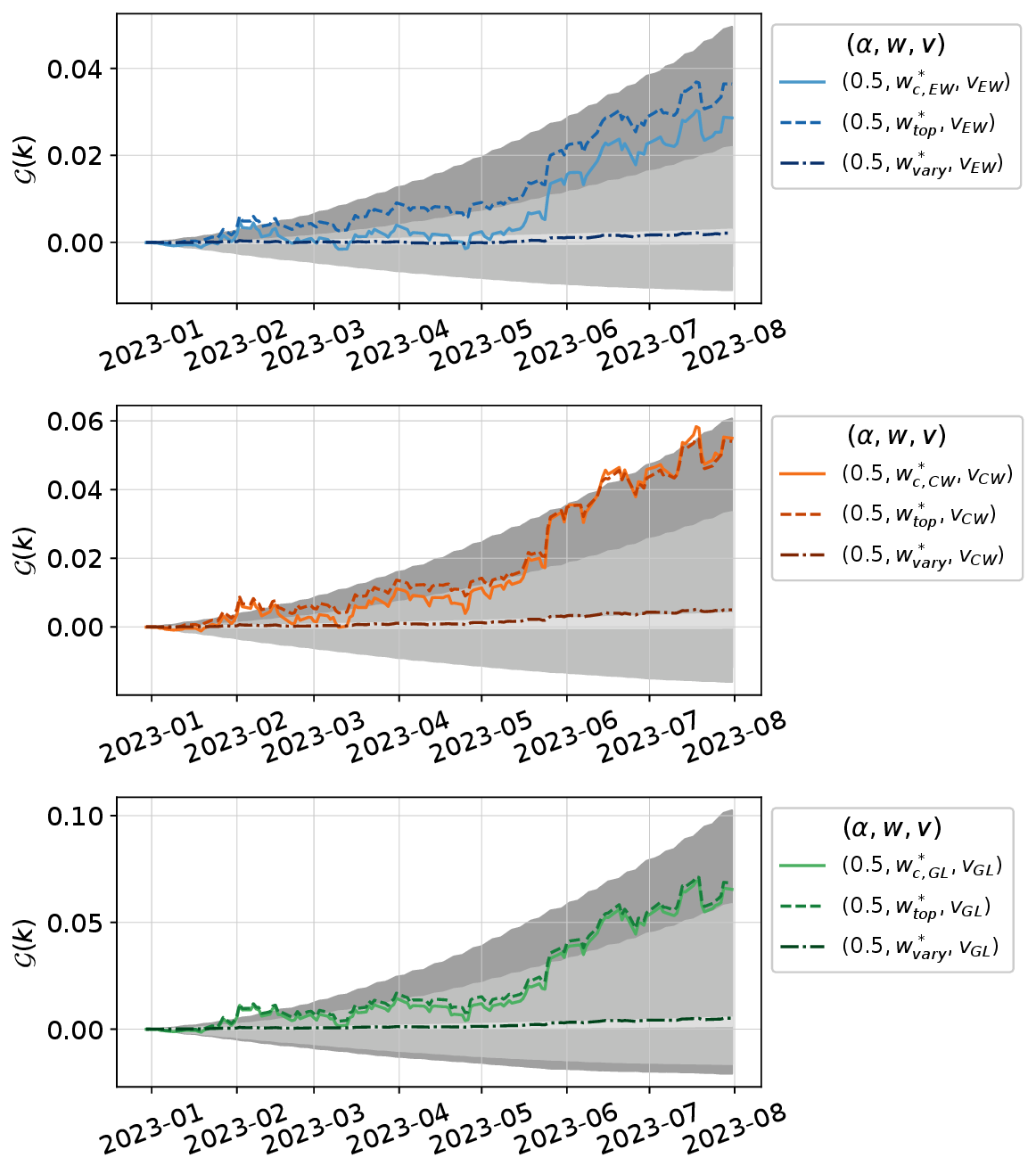}
\caption{Gain-Loss Trajectories under Multi-Double Linear Policies: The figure depicts the impact of various triple set parameters and transaction costs on the trajectories. Three distinct Monte-Carlo-based 95\% prediction intervals are highlighted: the dark-gray region corresponds to the triple $(1/2, w_{\rm top}^*, \cdot)$, the gray region to the  triple~$(1/2, w_{\rm c}^*, \cdot)$, and the light-gray region to the triple $(1/2, w_{\rm vary}^*, \cdot)$. }
\label{fig: trading trajectory with distinct triple sets}
\end{figure}

\begin{table}[h!]
\centering
\caption{Out-of-Sample Trading Performance of Multi-Double Linear Policies with Various Triples (Best Performance in Each Column Denoted in  Bold.}
\begin{tabular}{l|c c c}
	$(\alpha, w, v)$ & Gain-Loss & Standard Deviation & Maximum Drawdown (\%) \\
	\hline
	$(1/2, w_{c, \rm EW}^*, v_{\rm EW})$   &   0.0286    & 0.0099  &   0.6675  \\
	$(1/2, w_{\rm top}^*, v_{\rm EW})$     &   0.0364    & 0.0114  &  0.5964  \\
	$(1/2, w_{\rm vary}^*, v_{\rm EW})$    &   0.0021    & \textbf{0.0007}  &   0.0721  \\
	$(1/2, w_{c, \rm CW}^*, v_{\rm CW})$   &   0.0549    & 0.0186  &   1.0352  \\
	$(1/2, w_{\rm top}^*, v_{\rm CW})$     &   0.0540    & 0.0171  &   0.8278  \\
	$(1/2, w_{\rm vary}^*, v_{\rm CW})$    &   0.0050    & 0.0016  &   0.0737  \\
	$(1/2, w_{c, \rm GL}^*, v_{\rm GL})$   &   0.0654    & 0.0216  &   1.3483  \\
	$(1/2, w_{\rm top}^* , v_{\rm GL})$    &   \textbf{0.0683}   & 0.0219  &   1.2969  \\
	$(1/2, w_{\rm vary}^*, v_{\rm GL})$    &   0.0051    & 0.0016  &   \textbf{0.0592}    
\end{tabular}
\label{tab: trading performance with several triple parameter sets}
\end{table}

It is worth noting that the initial allocation for long and short positions, $\alpha$, influences the gain-loss performance; see Figure~\ref{fig: gain-loss against weights for different alpha}. 
As a result, we conduct the backtesting with varying~$\alpha \in \{0.1, 0.3, 0.5, 0.7, 0.9\}$ as demonstrated in Figure \ref{fig: trading trajectory with buy and hold}. 
Our analysis reveals that a positive gain-loss is attained for~$\alpha \in \{0.5, 0.7, 0.9\}$, while negative gain-loss is observed for~$\alpha \in \{0.1, 0.3\}$, considering a transaction cost of $0.01\%$ per trade and an annual risk-free rate of $r_f = 3.88\%$. This phenomenon in gain-loss can be attributed to the bullish market trends present in the testing data for the year~2023. 

\begin{figure}[h!]
\centering
\includegraphics[width=.8\linewidth]{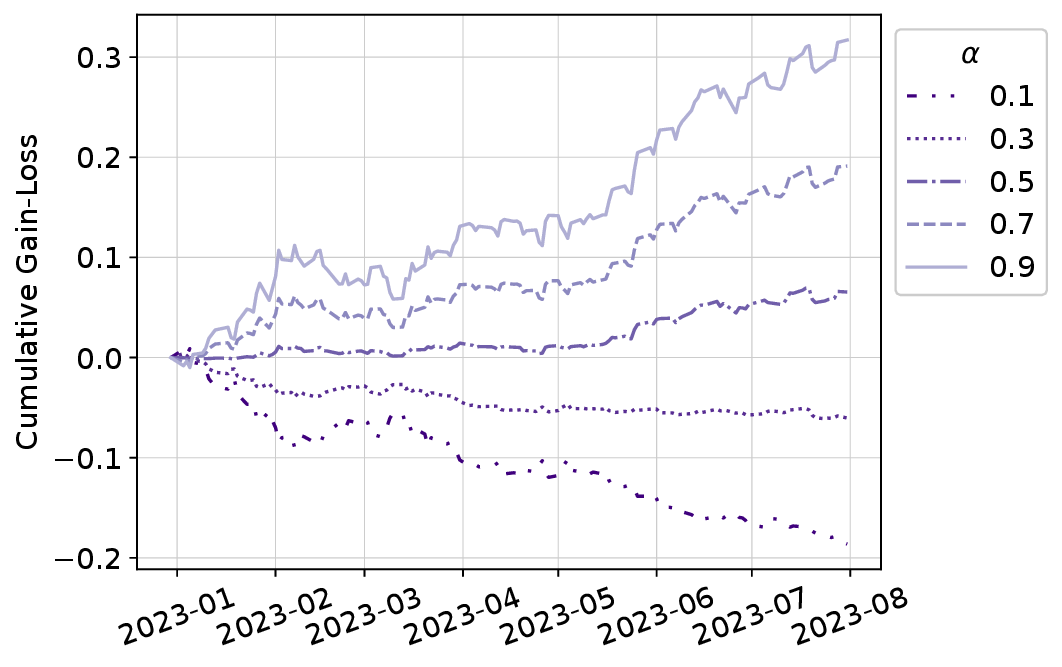}
\caption{Cumulative Gain-Loss Trajectories with Varying $\alpha$.}
\label{fig: trading trajectory with buy and hold}
\end{figure}

\section{Concluding Remark}
In this paper, we investigated the stochastic robustness of a novel trading scheme called the multi-double linear policy within a generalized lattice market that accounts for both serially correlated returns and asset correlation. 
We proved that our policies assure survivability and probabilistic positivity.
We derived a detailed gain-loss performance analysis, leading to the derivation of an analytic expression for the worst-case expected gain-loss and evidence that the proposed policies might sustain an ``approximate'' robust positive expected profit. 
Additionally, we proved sufficient conditions that multi-double linear policies can lead to positive expected profits if the market has a clear trend. 
This remarkable result holds true even within a symmetric lattice market.
Through a conditional probabilistic model, we uncovered that the parameter space of our mode forms a convex polyhedron, estimated efficiently via a constrained least-squares method. 
Validated by extensive empirical studies with S\&P 30 assets, we substantiate robust support for the theoretical findings.

As for future research, extending our analysis to the current findings to involve time-varying weights $w_i(k)$ with time-varying movement factors $u_i(k)$ and~$d_i(k)$ or even a return model that takes multiple outcomes, e.g., state prices~\cite{duffie2010dynamic, luenberger2013investment}.
We envision this extension may necessitate the development of innovative mathematical tools to navigate the highly nonlinear, stochastic, and correlated nature of such models.
Additionally, the impact of transaction costs could be considered to embed into the model to enhance the practicality of the proposed trading policy; see \cite{hsieh2022robustness} for an initial result along this line of research. 
Lastly, as the generalized lattice model fits well in a daily basis time scale, it is interesting to see if the model works well in high-frequency settings, such as tick-by-tick basis; see \cite{goldstein2022high}.



\bibliographystyle{informs2014}
\bibliography{references}

\begin{thebibliography}{36}
\providecommand{\natexlab}[1]{#1}
\providecommand{\url}[1]{\texttt{#1}}
\providecommand{\urlprefix}{URL }

\bibitem[{Balvers \protect\BIBand{} Mitchell(1997)}]{balvers1997autocorrelated}
Balvers RJ, Mitchell DW (1997) {Autocorrelated Returns and Optimal
  Intertemporal Portfolio Choice}. \emph{Management Science} 43(11):1537--1551,
  \urlprefix\url{http://dx.doi.org/10.1287/mnsc.43.11.1537}.

\bibitem[{Barmish(2008)}]{barmish2008trading}
Barmish BR (2008) {On Trading of Equities: A Robust Control Paradigm}.
  \emph{IFAC Proceedings Volumes} 41(2):1621--1626,
  \urlprefix\url{http://dx.doi.org/10.3182/20080706-5-KR-1001.00276}.

\bibitem[{Barmish \protect\BIBand{} Primbs(2011)}]{barmish2011arbitrage}
Barmish BR, Primbs JA (2011) {On Arbitrage Possibilities via Linear Feedback in
  an Idealized Brownian Motion Stock Market}. \emph{{Proceedings of the IEEE
  Conference on Decision and Control (CDC) and European Control Conference
  (ECC)}}, 2889--2894,
  \urlprefix\url{http://dx.doi.org/10.1109/CDC.2011.6160731}.

\bibitem[{Barmish \protect\BIBand{} Primbs(2015)}]{barmish2015new}
Barmish BR, Primbs JA (2015) {On a New Paradigm for Stock Trading via a
  Model-Free Feedback Controller}. \emph{IEEE Transactions on Automatic
  Control} 61(3):662--676,
  \urlprefix\url{http://dx.doi.org/10.1109/TAC.2015.2444078}.

\bibitem[{Baumann(2016)}]{baumann2016stock}
Baumann MH (2016) {On Stock Trading via Feedback Control when Underlying Stock
  Returns Are Discontinuous}. \emph{IEEE Transactions on Automatic Control}
  62(6):2987--2992, \urlprefix\url{http://dx.doi.org/10.1109/TAC.2016.2605743}.

\bibitem[{Baumann(2023)}]{baumann2023theoretical}
Baumann MH (2023) {On Theoretical Foundations of Mostly Model-Free
  Cross-Coupled Simultaneously Long-Short Stock Trading Controllers}.
  \emph{European Journal of Control} 100851,
  \urlprefix\url{http://dx.doi.org/10.1016/j.ejcon.2023.100851}.

\bibitem[{Baumann \protect\BIBand{} Gr{\"u}ne(2019)}]{baumann2019positive}
Baumann MH, Gr{\"u}ne L (2019) {Positive Expected Feedback Trading Gain for All
  Essentially Linearly Representable Prices}. \emph{{Proceedings of the Asian
  Control Conference (ASCC)}}, 150--155.

\bibitem[{Boyd \protect\BIBand{} Vandenberghe(2004)}]{boyd2004convex}
Boyd SP, Vandenberghe L (2004) \emph{{Convex Optimization}} (Cambridge
  University Press).

\bibitem[{Campbell et~al.(1993)Campbell, Grossman, \protect\BIBand{}
  Wang}]{campbell1993trading}
Campbell JY, Grossman SJ, Wang J (1993) {Trading Volume and Serial Correlation
  in Stock Returns}. \emph{The Quarterly Journal of Economics} 108(4):905--939.

\bibitem[{Casella \protect\BIBand{} Berger(2001)}]{casella2021statistical}
Casella G, Berger RL (2001) \emph{{Statistical Inference}} (Cengage Learning).

\bibitem[{Christoffersen \protect\BIBand{}
  Diebold(2006)}]{christoffersen2006financial}
Christoffersen PF, Diebold FX (2006) {Financial Asset Returns,
  Direction-of-Change Forecasting, and Volatility Dynamics}. \emph{Management
  Science} 52(8):1273--1287,
  \urlprefix\url{http://www.jstor.org/stable/20110599}.

\bibitem[{Cox et~al.(1979)Cox, Ross, \protect\BIBand{}
  Rubinstein}]{cox1979option}
Cox JC, Ross SA, Rubinstein M (1979) {Option Pricing: A Simplified Approach}.
  \emph{Journal of Financial Economics} 7(3):229--263,
  \urlprefix\url{http://dx.doi.org/10.1016/0304-405X(79)90015-1}.

\bibitem[{Cvitanic \protect\BIBand{} Zapatero(2004)}]{cvitanic2004introduction}
Cvitanic J, Zapatero F (2004) \emph{{Introduction to the Economics and
  Mathematics of Financial Markets}} (MIT Press).

\bibitem[{Deshpande \protect\BIBand{}
  Barmish(2018)}]{deshpande2018generalization}
Deshpande A, Barmish BR (2018) {A Generalization of the Robust Positive
  Expectation Theorem for Stock Trading via Feedback Control}.
  \emph{{Proceedings of the European Control Conference (ECC)}}, 514--520,
  \urlprefix\url{http://dx.doi.org/10.1016/j.ifacol.2020.12.1249}.

\bibitem[{Deshpande et~al.(2020)Deshpande, Gubner, \protect\BIBand{}
  Barmish}]{deshpande2020simultaneous}
Deshpande A, Gubner JA, Barmish BR (2020) {On Simultaneous Long-Short Stock
  Trading Controllers with Cross-Coupling}. \emph{IFAC-PapersOnLine}
  53(2):16989--16995,
  \urlprefix\url{http://dx.doi.org/10.1016/j.ifacol.2020.12.1249}.

\bibitem[{Dokuchaev(2002)}]{dokuchaev2002dynamic}
Dokuchaev N (2002) \emph{{Dynamic Portfolio Strategies: Quantitative Methods
  and Empirical Rules for Incomplete Information}}, volume~47 (Springer Science
  \& Business Media).

\bibitem[{Dokuchaev \protect\BIBand{} Savkin(1998)}]{dokuchaev1998asymptotic}
Dokuchaev N, Savkin A (1998) {Asymptotic Arbitrage in a Stochastic Financial
  Market Model}. \emph{{Asymptotic Arbitrage in a Stochastic Financial Market
  Model}}, 100--104 (CESA'98 Conference Secretariat).

\bibitem[{Duffie(2010)}]{duffie2010dynamic}
Duffie D (2010) \emph{{Dynamic Asset Pricing Theory}} (Princeton University
  Press).

\bibitem[{Fielitz(1971)}]{fielitz1971stationarity}
Fielitz BD (1971) {Stationarity of Random Data: Some Implications for the
  Distribution of Stock Price Changes}. \emph{Journal of Financial and
  Quantitative Analysis} 6(3):1025--1034,
  \urlprefix\url{http://dx.doi.org/10.2307/2329918}.

\bibitem[{Goldstein et~al.(2022)Goldstein, Kwan, \protect\BIBand{}
  Philip}]{goldstein2022high}
Goldstein M, Kwan A, Philip R (2022) {High-Frequency Trading Strategies}.
  \emph{Management Science}
  \urlprefix\url{http://dx.doi.org/10.1287/mnsc.2022.4539}.

\bibitem[{Granger \protect\BIBand{} Hyung(2004)}]{granger2004occasional}
Granger CW, Hyung N (2004) {Occasional Structural Breaks and Long Memory with
  an Application to the S\&P 500 Absolute Stock Returns}. \emph{Journal of
  Empirical Finance} 11(3):399--421,
  \urlprefix\url{http://dx.doi.org/10.1016/j.jempfin.2003.03.001}.

\bibitem[{Hsieh(2022{\natexlab{a}})}]{hsieh2022robust}
Hsieh CH (2022{\natexlab{a}}) {On Robust Optimal Linear Feedback Stock
  Trading}. \emph{arXiv: 2202.02300} .

\bibitem[{Hsieh(2022{\natexlab{b}})}]{hsieh2022robustness}
Hsieh CH (2022{\natexlab{b}}) {On Robustness of Double Linear Trading with
  Transaction Costs}. \emph{IEEE Control Systems Letters} 7:679--684,
  \urlprefix\url{http://dx.doi.org/10.1109/LCSYS.2022.3218541}.

\bibitem[{Korajczyk \protect\BIBand{} Sadka(2004)}]{korajczyk2004momentum}
Korajczyk RA, Sadka R (2004) {Are Momentum Profits Robust to Trading Costs?}
  \emph{The Journal of Finance} 59(3):1039--1082,
  \urlprefix\url{https://www.jstor.org/stable/3694730}.

\bibitem[{Luenberger(2013)}]{luenberger2013investment}
Luenberger DG (2013) \emph{{Investment Science}} (Oxford university press).

\bibitem[{Malekpour \protect\BIBand{} Barmish(2016)}]{malekpour2016stock}
Malekpour S, Barmish BR (2016) {On Stock Trading Using a Controller with Delay:
  The Robust Positive Expectation Property}. \emph{{Proceedings of the IEEE
  Conference on Decision and Control (CDC)}}, 2881--2887.

\bibitem[{Malekpour et~al.(2018)Malekpour, Primbs, \protect\BIBand{}
  Barmish}]{malekpour2018generalization}
Malekpour S, Primbs JA, Barmish BR (2018) {A Generalization of Simultaneous
  Long--Short Stock Trading to PI Controllers}. \emph{IEEE Transactions on
  Automatic Control} 63(10):3531--3536,
  \urlprefix\url{http://dx.doi.org/10.1109/TAC.2018.2799484}.

\bibitem[{Maroni et~al.(2019)Maroni, Formentin, \protect\BIBand{}
  Previdi}]{maroni2019robust}
Maroni G, Formentin S, Previdi F (2019) {A Robust Design Strategy for Stock
  Trading via Feedback Control}. \emph{{Proceedings of the European Control
  Conference (ECC)}}, 447--452.

\bibitem[{Officer(1972)}]{officer1972distribution}
Officer RR (1972) {The Distribution of Stock Returns}. \emph{Journal of the
  American Statistical Association} 67(340):807--812,
  \urlprefix\url{http://dx.doi.org/10.2307/2284641}.

\bibitem[{O’Brien et~al.(2020)O’Brien, Burke, \protect\BIBand{}
  Burke}]{o2020generalized}
O’Brien JD, Burke ME, Burke K (2020) {A Generalized Framework for
  Simultaneous Long-Short Feedback Trading}. \emph{IEEE Transactions on
  Automatic Control} 66(6):2652--2663,
  \urlprefix\url{http://dx.doi.org/10.1109/TAC.2020.3011914}.

\bibitem[{Primbs \protect\BIBand{} Barmish(2017)}]{primbs2017robustness}
Primbs JA, Barmish BR (2017) {On Robustness of Simultaneous Long-Short Stock
  Trading Control with Time-Varying Price Dynamics}. \emph{IFAC-PapersOnLine}
  50(1):12267--12272,
  \urlprefix\url{http://dx.doi.org/10.1016/j.ifacol.2017.08.2045}.

\bibitem[{Primbs \protect\BIBand{} Yamada(2018)}]{primbs2018pairs}
Primbs JA, Yamada Y (2018) {Pairs Trading under Transaction Costs using Model
  Predictive Control}. \emph{Quantitative Finance} 18(6):885--895,
  \urlprefix\url{http://dx.doi.org/10.1080/14697688.2017.1374549}.

\bibitem[{Tie et~al.(2018)Tie, Zhang, \protect\BIBand{} Zhang}]{tie2018optimal}
Tie J, Zhang H, Zhang Q (2018) {An Optimal Strategy for Pairs Trading under
  Geometric Brownian Motions}. \emph{Journal of Optimization Theory and
  Applications} 179:654--675,
  \urlprefix\url{http://dx.doi.org/10.1007/s10957-017-1065-8}.

\bibitem[{Vitale(2018)}]{vitale2018robust}
Vitale P (2018) {Robust Trading for Ambiguity-Averse Insiders}. \emph{Journal
  of Banking \& Finance} 90:113--130,
  \urlprefix\url{http://dx.doi.org/10.1016/j.jbankfin.2018.03.007}.

\bibitem[{Wang \protect\BIBand{} Hsieh(2023)}]{wang2023robustness}
Wang XY, Hsieh CH (2023) {On Robustness of Double Linear Policy with
  Time-Varying Weights}. \emph{{Proceedings of the IEEE Conference on Decision
  and Control, {\rm forthcoming}}}.

\bibitem[{Werner et~al.(2023)Werner, Rindi, Buti, \protect\BIBand{}
  Wen}]{werner2023tick}
Werner IM, Rindi B, Buti S, Wen Y (2023) {Tick Size, Trading Strategies, and
  Market Quality}. \emph{Management Science} 69(7):3818--3837,
  \urlprefix\url{http://dx.doi.org/doi.org/10.1287/mnsc.2022.4502}.

\end{thebibliography}

%
%
%

\bigskip
 \appendix

\medskip
\section{Technical Proofs of Sections~\ref{section: gain-loss analysis} and~\ref{section: parameters estimation} }\label{Appendix: Technical Proofs of Section: gain-loss analysis}

\medskip
\begin{proof}[Proof of Lemma~\ref{lemma: Survivaliblity}] 
	For $i = 1,2, \dots,n$, Asset~$i$ has per-period returns $X_i(j) \in \{u_i, d_i\}$ with $-1<d_i<0<u_i<1$. 
	Hence, the account value of the long position for the $i$th asset is given by
	\begin{align*}
		V_{i, L}(k) 
		=  \alpha V_{i,0}  R_{i+}(k)  
		& =  \alpha V_{i,0} \prod_{j=0}^{k-1}\left( (1+r_f) + w_i (X_i(j) - r_f)\right)  \\
		& \geq  \alpha V_{i,0} \left( (1+r_f) + w_i (d_i - r_f)\right)^{k}  \\
		& \geq  \alpha V_{i,0} \left( (1+r_f) - w_i (1 + r_f)\right)^{k}  \\
		& =  \alpha V_{i,0} \left( (1+r_f) (1- w_i )\right)^{k} \geq 0,
	\end{align*}
	where the last inequality hold since $w_i \in [0,1]$ and $V_{i,0} \geq 0$ and $r_f \geq 0.$
	Similarly, 
	\begin{align*}
		V_{i, S}(k) &=
		(1-\alpha) V_{i,0} R_{i-}(k) 
		\geq 	(1-\alpha) V_{i,0}\left(1-w_i  \right)^k \geq 0.
	\end{align*}
	Hence, it follows that $V(k) = \sum_{i=1}^n (V_{i,L}(k) + V_{i,S}(k)) \geq 0$ for all $k$ with probability one.
	To complete the proof, we note that if $v_i>0$ for some $i$ and $w_i \in (0,1)$ and~$\alpha \in (0,1)$, then ~$V_{i, L}(k)$ and $V_{i, S}(k)$ are strictly positive. 
	Hence, $V(k) > 0$ for all $k$ with probability one.
\end{proof}

\medskip
\begin{proof}[Proof of Lemma~\ref{lemma: probabilistic positivity}] 
	Fix $\theta \in [0,1]$.
	Note that, by Lemma~\ref{lemma: Survivaliblity}, $V(k) >0$ for all $k$ with probability one. 
	Observe that
	\begin{align*}
		\mathbb{E}\left[ V(k) - \theta \mathbb{E}[V(k)] \right]
		&\leq  \mathbb{E}\left[ (V(k) - \theta \mathbb{E}[V(k)]) \cdot \mathbb{1}_{ \{V(k) > \theta \mathbb{E}[V(k)] \} } \right]\\
		& \leq \mathbb{E}[(V(k) - \theta \mathbb{E}[V(k)])^2 ]^{1/2} \cdot \mathbb{P}( V(k) > \theta \mathbb{E}[V(k)])^{1/2}
	\end{align*}
	where the last inequality holds by the Cauchy-Schwarz inequality. 
	Hence, it follows that
	\begin{align*}
		\mathbb{P}( V(k) > \theta \mathbb{E}[V(k)])
		&\geq \frac{ (\mathbb{E}\left[ V(k) - \theta \mathbb{E}[V(k)] \right])^2 }{\mathbb{E}[(V(k) - \theta \mathbb{E}[V(k)])^2 ] } \\
		&= \frac{ (1-\theta)^2 (\mathbb{E}\left[ V(k)  \right])^2}{\mathbb{E}[(V(k) - \theta \mathbb{E}[V(k)] )^2 ]} . 
	\end{align*}
	Note that the denominator is
	\begin{align*}
		\mathbb{E}[(V(k) - \theta \mathbb{E}[V(k)] )^2 ] 
		& = \mathbb{E}[(V(k)  -\mathbb{E}[V(k)] + \mathbb{E}[V(k)] - \theta \mathbb{E}[V(k)]  )^2 ] \\
		&= {\rm var}(V(k)) + (1-\theta)^2 (\mathbb{E}[V(k)])^2.
	\end{align*}
	Therefore, we have
	\begin{align*}
		\mathbb{P}( V(k) > \theta \mathbb{E}[V(k)])
		&\geq  \frac{ (1-\theta)^2 (\mathbb{E}\left[ V(k)  \right])^2}{ {\rm var}(V(k)) + (1-\theta)^2 (\mathbb{E}[V(k)])^2},
	\end{align*}
	and the proof is complete.
\end{proof}

\medskip
\begin{proof}[Proof of Lemma~\ref{lemma: zero weight gain-loss}] 
	The proof is elementary. Fix $w_i = 0$ for all $i=1,2, \dots,n$.  Note that 
	$V(k) 
	= \sum_{i=1}^{n} V_{i,0}\left(\alpha R_{i+}(k) + (1-\alpha) R_{i-}(k)\right)$
	with $R_{i+}(k) = (1+r_f)^k $ and $R_{i-}(k)= 1$. 
	Using the fact that $\sum_{i=1}^n v_i = 1$,
	it follows that 
	$
	V(k) = \sum_{i=1}^{n} V_{i,0}\left(\alpha (1+r_f)^k  + (1-\alpha) \right) = (\alpha (1+r_f)^k + (1-\alpha)) V_0.
	$
	Hence, the cumulative trading gain-loss function becomes
	$
	\mathcal{G}(k) = V(k) - V_0 = \alpha( (1+r_f)^k - 1) V_0 \geq 0
	$
	for all~$k$ where the last inequality hold since $\alpha \in [0,1]$ and $r_f \geq 0.$
\end{proof}

\medskip
\begin{proof}[Proof of Lemma~\ref{lemma: recursion formula of probability}]
	By the law of total expectation, it can be computed by 
	\begin{align*}
		p_i(k) 
		&= \mathbb{E} \left[ \mathbb{1}_{\{X_i(k) = u_i\}} \right]\\
		&=\mathbb{E} \left[ \mathbb{E} [ \mathbb{1}_{ \{X_{i}(k) =u_i\}} \mid X(k-1: k-m) ] \right]\\
		&= \mathbb{E} \left[ \mathbb{P} \left( X_i(k) = u_i \mid X( k-1: k-m) \right) \right] \\
		&= \mathbb{E} \left[ \Phi_{i,0} + \sum_{j=1}^m \Phi_{i,j} X_i(k-j) + \sum_{\ell=1}^n \Gamma_{i, \ell} X_\ell(k-1) \right] \\
		&= \Phi_{i,0} + \sum_{j=1}^m \Phi_{i,j} \mathbb{E} \left[ X_i(k-j) \right] + \sum_{\ell=1}^n \Gamma_{i, \ell} \mathbb{E} \left[ X_\ell(k-1) \right],
	\end{align*}
	for stage $k = 0, 1, \dots $. 
	Using the fact that $\mathbb{E}[X_i( k-j )] = (u_i - d_i) \cdot p_i( k-j ) + d_i$ and the initial conditions 
	$
	p_{i}(-j) = \frac{x_{i}(-j) - d_i}{u_i - d_i},
	$
	for $j = 1, 2, \dots, m$, we then have the recursion for probability at stage $k = 0, 1, \dots$,
	\begin{align*}
		p_i (k) 
		&= \Phi_{i,0} + \sum_{j=1}^m \Phi_{i,j} [(u_i - d_i) \cdot p_i({k-j}) + d_i]+ \sum_{\ell=1}^n \Gamma_{i, \ell} [(u_\ell - d_\ell) \cdot p_\ell(k - 1) + d_\ell],
	\end{align*}
	which completes the proof.
\end{proof}

\medskip
\begin{proof}[Proof of Lemma~\ref{lemma: the expectation of Binomial distribution}]
	By Lemma~\ref{lemma: recursion formula of probability},
	the per-period returns of Asset $i$ have Bernoulli distribution with time-varying probability $p_i(\cdot)$. Hence, the expectation of $\log( 1 + w_i X_i(j))$ for a single stage is given by 
	$
	\mathbb{E}[\log(1 + w_i X_i (j))] = \log(1+w_i u_i) p_i(j) + \log(1+w_i d_i) (1-p_i(j)).
	$
	Hence, summing up $k$ stages yields
	\begin{align*}
		\sum_{j=0}^{k-1} \mathbb{E}[\log (1 + w_i X_i(j))] 
		& = \sum_{j=0}^{k-1} \log(1 + w_i u_i) p_i(j) + \log (1 + w_i d_i ) (1 - p_i(j)) \\
		& = \log(1 + w_i u_i) \sum_{j=0}^{k-1} p_i(j) + \log(1+w_i d_i) \sum_{j=0}^{k-1} (1 - p_i(j)) \\
		& = \mathbb{E}[H_i(k)] \log (1 + w_i u_i) + (k - \mathbb{E}[H_i(k)]) \log (1 + w_i d_i),
	\end{align*}
	where $\mathbb{E}[H_i(k)]:= \sum_{j=0}^{k-1} p_i(j)$ is the expected number of the return process of Asset~$i$ that receives positive returns up to stage $k-1$.
	An almost identical proof would work for the short side,  i.e.,
	$$
	\sum_{j=0}^{k-1} \mathbb{E}[\log (1 - w_i X_i (j))] = \mathbb{E}[H_i(k)] \log (1-w_i u_i) + (k-\mathbb{E}[H_i(k)]) \log (1 - w_i d_i),  
	$$
	which completes the proof.
\end{proof}

\medskip
\begin{proof}[Proof of Corollary~\ref{corollary: some special cases}]
	To prove part~$(i)$, we fix $k > 1$. Noting that if $r_f >0$, it would increase the $\overline{\mathcal{G}}(k)$; hence, without loss of generality, we may restrict our attention to the case  $r_f= 0$ in the proof that follows.  Begin by considering the case~$\mathbb{E}[H_i(k)] = 0$. 
	We first compute the functions~$\beta_i(\cdot)$ and~$\gamma_i(\cdot)$ in Theorem~\ref{thm: proof of RPE}; i.e., 
	$\beta_i(k) = ( 1 + w_i d_i)^{k}$
	and
	$\gamma_i(k) = ( 1 - w_i d_i )^k $
	Substituting these two quantities to the lower bound for~$\overline{\mathcal{G}}(k)$ in Theorem~\ref{thm: proof of RPE}, we obtain 
	$
	\overline{\mathcal{G}}(k)>  \sum_{i = 1}^n \frac{V_{i, 0}}{2} \left(  (1 + w_i d_i)^{k}  +  (1 - w_i d_i )^k - 2 \right) >0
	$
	where the last inequality holds by the facts that $V_{i,0} >0$ and $(1+z)^k + (1-z)^k > 2$ for all $z \neq 0$ and all $k > 1$.
	An almost identical proof would work for the case of $\mathbb{E}[H_i(k)] = k$. 
	Indeed, in this case, it is readily verified that
	$ 
	\overline{\mathcal{G}}(k) > \sum_{i = 1}^n \frac{V_{i, 0}}{2} \left(   ( 1 + w_i u_i )^{k}  +  (1 - w_i u_i )^k - 2 \right) > 0.
	$ 
	To prove part~$(ii)$, we note that for $\mathbb{E}[H_i(k)] \in (0, k)$ and take~$(1+w_iu_i)^{\mathbb{E}[H_i(k)]} (1+w_i d_i)^{k - \mathbb{E}[H_i(k)]} + (1- w_i u_i)^{\mathbb{E}[H_i(k)]} (1 - w_i d_i)^{k - \mathbb{E}[H_i(k)]} > 2$ for all~$i$. 
	Applying Theorem~\ref{thm: proof of RPE}, it is readily verified  that $\overline{\mathcal{G}}(k) > 0$ for all $k$ immediately.
\end{proof}

\medskip	
\begin{proof}[Proof of Lemma~\ref{lemma: strictly convexity}]  
	The strict convexity is established by invoking the second-order derivative test.  That is,  for any $\varepsilon>0$,
	$
	\frac{\partial^2}{\partial \varepsilon^2} \theta (\varepsilon)  = a^\varepsilon (\log a)^2 + b^\varepsilon (\log b)^2 >0 
	$
	where the last inequality holds since $a,b >0$ with $a,b\neq 1$ and $(\log a)^2$ and $(\log b)^2 >0.$
	To complete the proof, we must find the minimum, call it~$\varepsilon^*$. Indeed, the first-order condition leads to 
	$
	\frac{\partial}{\partial \varepsilon} \theta (\varepsilon)  = a^\varepsilon \log a+ b^\varepsilon \log b \equiv 0.
	$
	That is, $\varepsilon= \varepsilon^*$ is the minimum of $\phi(\varepsilon)$ if and only if $a^{\varepsilon^* } \log a + b^{\varepsilon^* } \log b =0$.
	Moreover, note that if $a>1$ and $b \in (0,1)$, then $a^{\varepsilon} $ is increasing and $b^{\varepsilon}$ is decreasing. This implies that the sum $\phi(\varepsilon) = a^{\varepsilon} + b^{\varepsilon}$ has a unique minimum $\varepsilon^*>0$. Likewise, if $a \in (0,1)$ and $b>1$, then $a^{\varepsilon}$ is decreasing and $b^{\varepsilon}$ is increasing, hence $\phi(\varepsilon)$ again has a unique minimum $\varepsilon^* >0.$
\end{proof}

\medskip
\begin{proof}[Proof of Corollary~\ref{corollary: limitations of the policies}]
	Fix $k > 1$.
	Without loss of generality, we shall give proof for the case $\mathbb{E}[H_i(k)] > k/2$ as the methodology for proving another side follows an almost identical argument. 
	Indeed, for $i \in \{1,2,\dots,n\}$, $\mathbb{E}[H_i(k)] > k/2$ implies that there exists a constant $\varepsilon_i \in (0, k/2)$ such that $\mathbb{E}[H_i(k)] = k/2 + \varepsilon_i$. Now, take~$\alpha = 1/2$. Theorem~\ref{thm: proof of RPE} implies that the expected gain-loss function satisfies
	\begin{align} \label{eq: G_bar with half alpha and u_i = -d_i}
		\overline{\mathcal{G}}(k) 
		& > \sum_{i = 1}^n \frac{V_{i, 0} }{2}\left(   \beta_i(k)  + \gamma_i(k) -2 \right),
	\end{align}
	where the two quantities $\beta_i(k)$ and $\gamma_i(k)$ satisfy
	\begin{align*}
		\beta_i(k) 
		& = (1 + w_i u_i)^{\mathbb{E}[H_i(k)]} (1+ w_i d_i)^{k - \mathbb{E}[H_i(k)]} \\
		& = (1 + w_i u_i)^{ k/2+\varepsilon_i} (1+ w_i d_i)^{k -(k/2 + \varepsilon_i)} \\
		& = [(1 + w_i u_i)(1+ w_i d_i)]^{k/2 } \left( \frac{1 + w_i u_i}{1+ w_i d_i} \right)^{\varepsilon_i}
	\end{align*}
	and
	\begin{align*}
		\gamma_i(k) 
		& = (1 - w_i u_i)^{\mathbb{E}[H_i(k)]} (1- w_i d_i)^{k - \mathbb{E}[H_i(k)]} \\
		& = (1 - w_i u_i)^{ k/2+\varepsilon_i} (1- w_i d_i)^{k -(k/2 + \varepsilon_i)} \\
		& = [(1 - w_i u_i)(1- w_i d_i)]^{k/2 } \left( \frac{1 - w_i u_i}{1- w_i d_i} \right)^{\varepsilon_i}.
	\end{align*}
	Since $u_i = - d_i$ for all $i$, it is readily verified that 
	$
	\beta_i(k) 
	= (1 - w_i^2 d_i^2)^{k/2 } \left( \frac{1 - w_i d_i}{1+ w_i d_i} \right)^{\varepsilon_i} 
	$
	and
	$
	\gamma_i(k) 
	= (1 - w_i^2 d_i^2)^{k/2 } ( \frac{1 + w_i d_i}{1- w_i d_i})^{\varepsilon_i}.
	$
	Substituting these into Inequality~(\ref{eq: G_bar with half alpha and u_i = -d_i}) yields
	\begin{align*}
		\overline{\mathcal{G}}(k) 
		& >  \sum_{i = 1}^n \frac{V_{i, 0} }{2}\left(   [1 - w_i^2 d_i^2]^{k/2 } \left[ \left( \frac{1 - w_i d_i}{1+ w_i d_i} \right)^{\varepsilon_i}   +  \left( \frac{1 - w_i d_i}{1+ w_i d_i} \right)^{-\varepsilon_i} \right] - 2 \right)
	\end{align*}
	which is desired.

	Similarly, for the case $\mathbb{E}[H_i(k)] < k/2$ implies that there exists a constant $\varepsilon_i \in (0, k/2)$ such that $\mathbb{E}[H_i(k)] = k/2 - \varepsilon_i$. Following the similar method, take $\alpha = 1/2$ and $u_i = -d_i$, by Theorem~\ref{thm: proof of RPE}, we have 
	\begin{align*}
		\overline{\mathcal{G}}(k) \geq \sum_{i=1}^{n} \frac{V_{i,0}}{2}(\beta_i(k) + \gamma_i(k) -2),
	\end{align*}
	where $\beta_i(k) = (1-w_i^2d_i^2)^{k/2} \left(\frac{1-w_id_i}{1 + w_id_i}\right)^{-\varepsilon_i}$ and $\gamma_i(k) = (1-w_id_i)^{k/2} \left(\frac{1+w_id_i}{1 - w_id_i}\right)^{-\varepsilon_i}$. It follows that 
	\begin{align*}
		\overline{\mathcal{G}}(k) >  \sum_{i = 1}^n \frac{V_{i, 0} }{2}\left(   [1 - w_i^2 d_i^2]^{k/2 } \left[ \left( \frac{1 - w_i d_i}{1+ w_i d_i} \right)^{\varepsilon_i}   +  \left( \frac{1 - w_i d_i}{1+ w_i d_i} \right)^{-\varepsilon_i} \right] - 2 \right).
	\end{align*}

	To prove part~$(ii)$, take $\varepsilon_i = 0$ in the proof above,  then one immediately obtains
	$	
	\overline{\mathcal{G}}(k) 
	> \sum_{i = 1}^n {V_{i, 0} }\left(   [1 - w_i^2 d_i^2]^{k/2 }  - 1 \right)
	$
	and the proof is complete.
\end{proof}

\medskip
\begin{proof}[Proof of Lemma~\ref{lemma: An Strictly Increasing Auxiliary Function}]
	Fix $z > 0$ with $z \neq 1$.
	An elementary calculus proof would work for verifying the strict increasingness of the function $\phi(\varepsilon)$. That is, $ \phi'(\varepsilon) =  z^{\varepsilon} \log z - z^{-\varepsilon} \log z $. Proceeds with a proof by contradiction by assuming that the function was not strictly increasing. Then we have
	$
	z^{\varepsilon} \log z \leq z^{-\varepsilon} \log z. 
	$
	Equivalently
	\begin{align} \label{ineq: strict increasing by contradiction}
		z^{2 \varepsilon} \log z \leq  \log z.
	\end{align}
	If $z > 1$, then $\log z > 0$. Hence  Inequality~(\ref{ineq: strict increasing by contradiction}) implies that $z^{2 \varepsilon} \leq 1$ for $\varepsilon>0$, which is absurd since it contradicts~$z > 1$.
	On the other hand, if $z \in (0,1)$, then $\log z < 0$. In this case, Inequality~(\ref{ineq: strict increasing by contradiction}) becomes
	$
	z^{2 \varepsilon}  \geq  1
	$ for $\varepsilon>0$,
	which is again absurd. Therefore, $ \phi'(\varepsilon) >0$ for $\varepsilon>0$ and the function $\phi(\varepsilon)$ is a strictly increasing function.
	To complete the proof, invoking a usual limiting argument leads to  $\inf \phi(\varepsilon) = 2.$ 
\end{proof}

\medskip
\begin{proof} [Proof of Lemma~\ref{lemma: markov chain conditional prob model}] 
	For $i, \ell=1,\dots,n$ and $j=1,\dots,m$, we begin by re-expressing the products $\Phi_{i,j} u_i$, $\Phi_{i,j} d_i$, $\Gamma_{i,\ell} u_\ell$, and $\Gamma_{i,\ell} d_\ell$ as 
	\begin{align*}
		\Phi_{i,j} u_i 
		&= \frac{u_i + d_i}{2} \Phi_{i,j} + \frac{u_i - d_i}{2}\Phi_{i,j};\\
		\Phi_{i,j} d_{i} 
		&= \frac{u_i + d_i}{2}\Phi_{i,j} - \frac{u_i -d_i}{2}\Phi_{i,j}; \\
		\Gamma_{i,\ell} u_\ell 
		&=  \frac{u_\ell+ d_\ell}{2} \Gamma_{i,\ell} + \frac{u_\ell - d_\ell}{2} \Gamma_{i,\ell}; \\
		\Gamma_{i,\ell} d_{\ell}
		&= \frac{u_\ell + d_\ell}{2} \Gamma_{i,\ell} - \frac{u_\ell - d_\ell}{2} \Gamma_{i,\ell}.
	\end{align*}
	Hence,  the conditions $0 \leq \Phi_{i,0} + \sum_{j=1}^{m} \Phi_{i,j} x_{i}(k-j) + \sum_{\ell = 1}^{n} \Gamma_{i, \ell} x_{\ell} (k-1)\leq 1$ with $x_{i}(k-j) \in \{u_i, d_i\}$ holds if and only if 
	\begin{align*}
		0 \leq &\Phi_{i, 0} + \left(\frac{u_i + d_i}{2} \sum_{j=1}^{m} \Phi_{i,j} \right) \pm \frac{u_i - d_i}{2}  \Phi_{i, 1} \pm \frac{u_i - d_i}{2}  \Phi_{i, 2} \pm \dots \pm \frac{u_i - d_i}{2}  \Phi_{i, m} \\
		\qquad\quad &+ \left( \sum_{\ell = 1}^{n} \frac{u_\ell + d_\ell}{2} \Gamma_{i, \ell}\right) \pm \frac{u_1 -d_1}{2} \Gamma_{i,1} \pm \frac{u_2 - d_2}{2} \Gamma_{i,2} \pm \dots \pm \frac{u_n - d_n}{2}\Gamma_{i,n} \leq 1.
	\end{align*}
	Therefore, it suffices to ensure that the following inequalities hold 
	\begin{align*}
		&0 \leq \Phi_{i, 0} + \frac{u_i + d_i}{2}\sum_{j=1}^{m} \Phi_{i,j} + \sum_{\ell=1}^{n} \frac{u_\ell + d_\ell}{2}\Gamma_{i,\ell} + \frac{u_i - d_i}{2}\sum_{j=1}^{m} \left|\Phi_{i,j}\right| + \sum_{\ell=1}^{n} \frac{u_\ell - d_\ell}{2} \left|\Gamma_{i,\ell}\right| \leq 1;  \\
		&0 \leq \Phi_{i, 0} + \frac{u_i + d_i}{2}\sum_{j=1}^{m} \Phi_{i,j} +\sum_{\ell=1}^{n} \frac{u_\ell + d_\ell}{2} \Gamma_{i,\ell} - \frac{u_i - d_i}{2}\sum_{j=1}^{m} \left|\Phi_{i,j} \right| - \sum_{\ell=1}^{n} \frac{u_\ell-d_\ell}{2} \left|\Gamma_{i,\ell}\right| \leq 1.
	\end{align*}
	Via a straightforward calculation, it follows that 
	$\frac{u_i - d_i}{2} \sum_{j=1}^{m}\left|\Phi_{i, j}\right| + \sum_{\ell=1}^{n}\frac{u_\ell - d_\ell}{2}\left|\Gamma_{i,\ell}\right| \leq \frac{1}{2}$ and 
	\begin{align*}
		-\frac{1}{2} + \frac{u_i - d_i}{2} \sum_{j =1}^{m}\left|\Phi_{i,j}\right| + \sum_{\ell=1}^{n} \frac{u_\ell - d_\ell}{2} \left|\Gamma_{i,\ell}\right| 
		&\leq \Phi_{i,0} - \frac{1}{2} + \frac{u_i + d_i}{2}\sum_{j=1}^{m} \Phi_{i,j} +\sum_{\ell=1}^{n} \frac{u_\ell + d_\ell}{2}\Gamma_{i,\ell} \\
		&\leq \frac{1}{2} - \frac{u_i - d_i}{2} \sum_{j=1}^{m}\left|\Phi_{i,j}\right| - \sum_{\ell=1}^{n} \frac{u_\ell - d_\ell}{2} \left|\Gamma_{i,\ell}\right|.
	\end{align*}
	These conditions can be written as 
	$$
	\left|\Phi_{i, 0} - \frac{1}{2}+ \frac{u_i + d_i }{2}\sum_{j=1}^{m} \Phi_{i,j} +\sum_{\ell=1}^{n} \frac{u_\ell + d_\ell}{2}\Gamma_{i,\ell}\right| + \frac{u_i - d_i}{2}\sum_{j=1}^{m} \left|\Phi_{i,j}\right| +\sum_{\ell=1}^{n} \frac{u_\ell - d_\ell}{2} \left| \Gamma_{i,\ell}\right|\leq \frac{1}{2}.
	$$
	To complete the proof, we note that the inequalities are linear in $\Phi_{i,j}$ and $\Gamma_{i,\ell}$, which forms a convex polyhedron. 
\end{proof}

\medskip
\section{Estimated Parameters of the S\&P 30 Stocks} \label{appendix: Estimated Parameters of the S&P30 Stocks}
This appendix summarizes the estimated parameters for the S\&P 30 stocks used in Section~\ref{section: empirical studies}. Table~\ref{table: Estimated U and D} include the estimated movement factors $\widehat{U}$ and $\widehat{D}$, Tables~\ref{table: Estimated Phi_i_j when m = 1} to~\ref{table: Estimated Phi_i_j when m = 10} summarize the estimated serial correlation coefficients~$\Phi_{i,j}$ with various memory length $m \in \{1, 2, 5, 10\}$, and Table~\ref{table: Estimated Correlation Matrix} record the asset correlation coefficient matrix~$\Gamma$. In the following tables, the symbol ``mean" refers to the average of the data and``std" refers to the standard deviation of the data.

\begin{table}[h!]
	\footnotesize
	\centering
	\caption{Estimated $\widehat{U}$ and $\widehat{D}$.}
	\label{table: Estimated U and D}
	\begin{tabular}{l|cc}
		& $\widehat{U}$ & $\widehat{D}$ \\
		\hline
		\textbf{AAPL}  & 0.0173          & -0.0175         \\
		\textbf{ABBV}  & 0.0101          & -0.0109         \\
		\textbf{ADBE}  & 0.0207          & -0.0212         \\
		\textbf{AMZN}  & 0.0229          & -0.0242         \\
		\textbf{AVGO}  & 0.0187          & -0.0182         \\
		\textbf{BAC}   & 0.0165          & -0.0146         \\
		\textbf{BRK.B} & 0.0112          & -0.0108         \\
		\textbf{COST}  & 0.0138          & -0.0143         \\
		\textbf{CSCO}  & 0.0136          & -0.0129         \\
		\textbf{CVX}   & 0.0158          & -0.0163         \\
		\textbf{GOOG}  & 0.0183          & -0.0190         \\
		\textbf{GOOGL} & 0.0188          & -0.0188         \\
		\textbf{HD}    & 0.0146          & -0.0153         \\
		\textbf{JNJ}   & 0.0088          & -0.0080         \\
		\textbf{JPM}   & 0.0149          & -0.0142         \\
		\textbf{KO}    & 0.0086          & -0.0099         \\
		\textbf{LLY}   & 0.0141          & -0.0127         \\
		\textbf{MA}    & 0.0156          & -0.0152         \\
		\textbf{MCD}   & 0.0099          & -0.0089         \\
		\textbf{META}  & 0.0249          & -0.0299         \\
		\textbf{MRK}   & 0.0099          & -0.0089         \\
		\textbf{MSFT}  & 0.0173          & -0.0170         \\
		\textbf{NVDA}  & 0.0305          & -0.0333         \\
		\textbf{PEP}   & 0.0088          & -0.0092         \\
		\textbf{PG}    & 0.0101          & -0.0107         \\
		\textbf{TSLA}  & 0.0291          & -0.0349         \\
		\textbf{UNH}   & 0.0108          & -0.0130         \\
		\textbf{V}     & 0.0144          & -0.0143         \\
		\textbf{WMT}   & 0.0106          & -0.0120         \\
		\textbf{XOM}   & 0.0174          & -0.0175   \\
		\hline
		mean           & 0.0156          &  
		-0.0161\\
		std 			& 0.0057          &  
		0.0069\\
		min           & 0.0086          &  
		-0.0349\\
		max           & 0.0305          &  -0.0080
		\\
	\end{tabular}
\end{table}

\begin{table}[h!]
	\footnotesize
	\centering
	\caption{Estimated $\widehat{\Phi}_{i,j}$ for Memory Length $m=1$.}	
	\label{table: Estimated Phi_i_j when m = 1}
	\begin{tabular}{l|cc}
		& {$\widehat{\Phi}_{i,0}$} & {$\widehat{\Phi}_{i,1}$}\\ 
		\hline 
		\textbf{AAPL}  & 0.4743          & -7.0577         \\
		\textbf{ABBV}  & 0.5648          & -3.3926         \\
		\textbf{ADBE}  & 0.4675          & -4.5233         \\
		\textbf{AMZN}  & 0.4637          & -3.1458         \\
		\textbf{AVGO}  & 0.4934          & -7.1075         \\
		\textbf{BAC}   & 0.4425          & -1.7751         \\
		\textbf{BRK.B} & 0.5066          & -8.0508         \\
		\textbf{COST}  & 0.4891          & -5.7026         \\
		\textbf{CSCO}  & 0.4554          & -6.9281         \\
		\textbf{CVX}   & 0.5713          & -0.4806         \\
		\textbf{GOOG}  & 0.4617          & -6.5382         \\
		\textbf{GOOGL} & 0.4505          & -7.0812         \\
		\textbf{HD}    & 0.4898          & -4.4559         \\
		\textbf{JNJ}   & 0.4915          & -4.9624         \\
		\textbf{JPM}   & 0.4778          & -4.6107         \\
		\textbf{KO}    & 0.5666          & -6.6020         \\
		\textbf{LLY}   & 0.5287          & -0.5153         \\
		\textbf{MA}    & 0.5037          & -7.0355         \\
		\textbf{MCD}   & 0.4806          & -10.0977        \\
		\textbf{META}  & 0.4717          & -3.4837         \\
		\textbf{MRK}   & 0.5634          & -2.5778         \\
		\textbf{MSFT}  & 0.4679          & -8.2390         \\
		\textbf{NVDA}  & 0.4914          & -2.0229         \\
		\textbf{PEP}   & 0.5407          & -9.1421         \\
		\textbf{PG}    & 0.5140          & -5.9338         \\
		\textbf{TSLA}  & 0.4901          & -1.3042         \\
		\textbf{UNH}   & 0.5670          & -9.1984         \\
		\textbf{V}     & 0.5042          & -5.5996         \\
		\textbf{WMT}   & 0.5403          & -2.6996         \\
		\textbf{XOM}   & 0.5775          & 0.2417    \\
		\hline
		mean           & 0.5036          &  
		-5.0007\\
		std           & 0.0406          &  
		2.8112\\
		min           &   0.4425        &  
		-10.0977\\
		max           &   0.5775        &  
		0.2417\\     
	\end{tabular}
\end{table}

\begin{table}[!ht]
	\footnotesize
	\centering
	\caption{Estimated $\widehat{\Phi}_{i,j}$ for Memory Length $m=2$.}
	\label{table: Estimated Phi_i_j when m = 2}
	\begin{tabular}{l|ccc}
		& $\widehat{\Phi}_{i,0}$ & $\widehat{\Phi}_{i,1}$ & $\widehat{\Phi}_{i,2}$  \\ \hline
		\textbf{AAPL}   & 0.4724    & -6.8957   & -3.2137 \\ 
		\textbf{ABBV}   & 0.5710    & -3.2523   & -4.3757 \\ 
		\textbf{ADBE}   & 0.4676    & -4.5708   & -1.2987 \\ 
		\textbf{AMZN}   & 0.4637    & -2.9836   & -0.9418 \\ 
		\textbf{AVGO}   & 0.4922    & -7.0590   & -0.2977 \\ 
		\textbf{BAC}    & 0.4406    & -1.9511   & 0.2551 \\ 
		\textbf{BRK.B}  & 0.5054    & -8.0853   & -3.9996 \\ 
		\textbf{COST}   & 0.4909    & -5.8565   & -0.9620 \\ 
		\textbf{CSCO}   & 0.4562    & -7.0756   & -1.4489 \\ 
		\textbf{CVX}    & 0.5732    & -0.4694   & -1.8346 \\ 
		\textbf{GOOG}   & 0.4608    & -6.3603   & -1.6429 \\ 
		\textbf{GOOGL}  & 0.4480    & -6.9173   & -2.4574 \\ 
		\textbf{HD}     & 0.4849    & -4.1834   & -4.1064 \\ 
		\textbf{JNJ}    & 0.4939    & -4.7201   & -1.8473 \\ 
		\textbf{JPM}    & 0.4776    & -4.8785   & 3.6627 \\ 
		\textbf{KO}     & 0.5656    & -6.7373   & -1.6632 \\ 
		\textbf{LLY}    & 0.5304    & -0.7643   & 0.8080 \\ 
		\textbf{MA}     & 0.5017    & -7.1727   & 0.0707 \\ 
		\textbf{MCD}    & 0.4822    & -9.7406   & 2.5554 \\ 
		\textbf{META}   & 0.4649    & -3.4820   & -2.1993 \\ 
		\textbf{MRK}    & 0.5651    & -2.7137   & -1.9016 \\ 
		\textbf{MSFT}   & 0.4675    & -8.3801   & -2.6249 \\ 
		\textbf{NVDA}   & 0.4920    & -1.8762   & -0.6963 \\ 
		\textbf{PEP}    & 0.5398    & -9.0190   & -1.0837 \\ 
		\textbf{PG}     & 0.5130    & -5.7470   & 2.9805 \\ 
		\textbf{TSLA}   & 0.4904    & -1.2043   & -0.5430 \\ 
		\textbf{UNH}    & 0.5691    & -9.0458   & -0.0528 \\ 
		\textbf{V}      & 0.5021    & -5.6236   & -1.9024 \\ 
		\textbf{WMT}    & 0.5427    & -2.9355   & 0.0720 \\ 
		\textbf{XOM}    & 0.5773    & 0.1968    & -0.4443 \\ \hline
		{mean}   & 0.5034    & -4.9835   & -1.0378 \\ 
		{std}   & 0.0417    & 2.7656   & 1.8998 \\ 	
		{min}    & 0.4406    & -9.7406   & -4.3757 \\ 
		{max}    & 0.5773    & 0.1968    & 3.6627 \\ 
	\end{tabular}
\end{table}

\begin{table}[h!]
	\footnotesize
	\centering
	\caption{Estimated $\widehat{\Phi}_{i,j}$ for Memory Length $m=5$.}
	\label{table: Estimated Phi_i_j when m = 5}
	\begin{tabular}{l|cccccc}
		& $\widehat{\Phi}_{i,0}$ & $\widehat{\Phi}_{i,1}$ & $\widehat{\Phi}_{i,2}$ & $\widehat{\Phi}_{i,3}$ & $\widehat{\Phi}_{i,4}$ & $\widehat{\Phi}_{i,5}$ \\
		\hline
		\textbf{AAPL}  & 0.4903   & -6.0457    & -2.5566    & 1.3849     & -0.5490    & 1.6469     \\
		\textbf{ABBV}  & 0.5774   & -3.4784    & -4.6048    & -3.4273    & -0.1246    & -0.1277    \\
		\textbf{ADBE}  & 0.4706   & -4.7356    & -1.6543    & 1.4729     & -1.5489    & -1.0536    \\
		\textbf{AMZN}  & 0.4670   & -3.0835    & -0.9198    & -0.2644    & -2.0204    & 1.5314     \\
		\textbf{AVGO}  & 0.4960   & -7.2420    & -0.4197    & -0.1947    & -0.0650    & -1.7599    \\
		\textbf{BAC}   & 0.4369   & -1.8106    & 0.3323     & -0.0738    & -1.6913    & 1.8038     \\
		\textbf{BRK.B} & 0.4974   & -8.8585    & -4.9791    & 0.6956     & -1.9145    & -2.2998    \\
		\textbf{COST}  & 0.4967   & -6.3076    & -1.1686    & -1.8999    & -2.6737    & 4.0319     \\
		\textbf{CSCO}  & 0.4513   & -7.2763    & -1.2937    & 2.1238     & -1.5100    & -1.8733    \\
		\textbf{CVX}   & 0.5620   & -0.8948    & -2.0258    & -0.1786    & 1.8117     & 1.7782     \\
		\textbf{GOOG}  & 0.4666   & -6.5287    & -1.5938    & -0.3628    & -2.0394    & 0.5425     \\
		\textbf{GOOGL} & 0.4657   & -6.4507    & -1.9203    & 0.0000     & -2.2459    & 0.0000     \\
		\textbf{HD}    & 0.4913   & -4.2705    & -4.1762    & 0.3816     & 0.6885     & 0.5236     \\
		\textbf{JNJ}   & 0.4866   & -4.9292    & -2.8317    & 3.0761     & 3.2063     & 10.3256    \\
		\textbf{JPM}   & 0.4720   & -4.8995    & 4.0073     & 0.3229     & -1.6667    & -1.8565    \\
		\textbf{KO}    & 0.5686   & -7.1570    & -1.5807    & -0.7660    & -1.9007    & -0.2845    \\
		\textbf{LLY}   & 0.5337   & -0.8408    & 0.9450     & -0.6955    & -3.2962    & 2.9291     \\
		\textbf{MA}    & 0.5069   & -7.3545    & 0.1202     & 0.5930     & 0.2353     & 0.3416     \\
		\textbf{MCD}   & 0.4828   & -9.5574    & 2.9993     & -3.3113    & 2.2448     & 6.0217     \\
		\textbf{META}  & 0.4621   & -3.4509    & -2.1904    & -0.3271    & -0.8298    & 0.0918     \\
		\textbf{MRK}   & 0.5669   & -2.1407    & -2.0788    & -2.3531    & 3.5751     & -3.9520    \\
		\textbf{MSFT}  & 0.4835   & -7.6361    & -1.9472    & -0.6301    & 0.0000     & -1.7776    \\
		\textbf{NVDA}  & 0.4882   & -1.9375    & -0.4615    & -1.1027    & -1.0826    & 0.1271     \\
		\textbf{PEP}   & 0.5364   & -10.4331   & -1.7617    & -2.7711    & -5.2843    & -3.3639    \\
		\textbf{PG}    & 0.5139   & -5.5992    & 3.3013     & -2.0523    & 2.6299     & -2.4837    \\
		\textbf{TSLA}  & 0.4904   & -1.3581    & -0.5920    & -0.4209    & 0.4666     & -1.3029    \\
		\textbf{UNH}   & 0.5779   & -9.8296    & -0.6249    & -1.8474    & -0.2913    & -0.2166    \\
		\textbf{V}     & 0.5070   & -5.8935    & -1.8933    & -0.5778    & -1.7258    & 0.2817     \\
		\textbf{WMT}   & 0.5407   & -2.3627    & -0.1161    & 0.7031     & -2.0813    & 0.8319     \\
		\textbf{XOM}   & 0.5644   & -0.1731    & -0.7313    & 0.3908     & 2.1484     & 1.0343     \\
		\hline
		{mean}  & 0.5050   & -5.0845  & -1.0806  & -0.4037  & -0.5845  & 0.3830   \\
		{std}    & 0.0402   & 2.8674  & 2.0398  & 1.5045  & 2.0326  & 2.8278   \\
		{min}   & 0.4369   & -10.4331 & -4.9791  & -3.4273  & -5.2843  & -3.9520  \\
		{max}   & 0.5779   & -0.1731  & 4.0073   & 3.0761   & 3.5751   & 10.3256 
	\end{tabular}
\end{table}

\begin{table}[!ht]
	\footnotesize
	\centering
	\caption{Estimated $\widehat{\Phi}_{i,j}$ for Memory Length $m=10$.}
	\label{table: Estimated Phi_i_j when m = 10}
	\begin{tabular}{l|c c c c c c c c c c c}
		& $\widehat{\Phi}_{i,0}$ & $\widehat{\Phi}_{i,1}$ & $\widehat{\Phi}_{i,2}$ & $\widehat{\Phi}_{i,3}$ & $\widehat{\Phi}_{i,4}$ & $\widehat{\Phi}_{i,5}$ & $\widehat{\Phi}_{i,6}$ & $\widehat{\Phi}_{i,7}$ & $\widehat{\Phi}_{i,8}$ & $\widehat{\Phi}_{i,9}$ & $\widehat{\Phi}_{i,10}$\\
		\hline
		\textbf{AAPL}   & 0.4867    & -6.0468   & -2.3184   & 1.0992    & 0.0000    & 1.1757    & 0.0000    & 0.0000    & 0.0000    & 0.0000    & -1.3424 \\ 
		\textbf{ABBV}   & 0.5722    & -3.0746   & -4.3332   & -3.2077   & -0.2310   & -0.0760   & -0.0477   & 0.2086    & 1.7399    & 0.6545    & -2.9690 \\ 
		\textbf{ADBE}   & 0.4714    & -4.3141   & -1.1528   & 1.1536    & -1.1368   & -0.3080   & 0.0000    & 0.0000    & -1.6589   & -0.8618   & -0.6552 \\ 
		\textbf{AMZN}   & 0.4787    & -2.8012   & -0.4456   & -0.3153   & -1.9531   & 1.3428    & 0.0000    & -0.7664   & -0.8991   & 0.7221    & 1.0201 \\ 
		\textbf{AVGO}   & 0.5082    & -6.5873   & 0.0000    & 0.0000    & 0.0000    & -1.1906   & -3.3023   & 0.0000    & -0.8407   & -0.6345   & -0.6275 \\ 
		\textbf{BAC}    & 0.4393    & -1.3490   & -0.0889   & -0.0362   & -1.2976   & 2.2879    & -2.7961   & 0.6335    & -2.9673   & 1.7658    & 0.1447 \\ 
		\textbf{BRK.B}  & 0.4964    & -8.6050   & -4.4898   & 0.0000    & -1.2611   & -2.0547   & -0.0023   & 1.7707    & -0.7803   & 1.3104    & 1.4179 \\ 
		\textbf{COST}   & 0.5027    & -6.3884   & -0.6677   & -1.7687   & -2.6611   & 3.6459    & 1.0455    & -0.5666   & 1.2112    & 0.5117    & 1.1456 \\ 
		\textbf{CSCO}   & 0.4658    & -7.1298   & -1.0377   & 2.5167    & -0.0741   & -0.5531   & -1.2505   & 2.3559    & -0.0075   & 2.6266    & -0.4315 \\ 
		\textbf{CVX}    & 0.5627    & -0.4194   & -2.1235   & -0.1231   & 1.8365    & 1.8130    & -2.0245   & 2.6408    & -2.5675   & 1.7544    & -0.6850 \\ 
		\textbf{GOOG}   & 0.4672    & -6.1960   & -1.3498   & 0.0000    & -1.7443   & 0.5515    & -0.0591   & 0.0000    & -0.0387   & 0.1627    & 0.9563 \\ 
		\textbf{GOOGL}  & 0.4602    & -6.3070   & -1.8457   & 0.0000    & -2.1501   & 0.0000    & 0.0000    & 0.0000    & 0.0000    & 0.0000    & 0.0190 \\ 
		\textbf{HD}     & 0.5025    & -4.7400   & -4.6531   & -0.3289   & 0.2936    & 0.4028    & -0.5937   & 1.3441    & 1.0331    & 2.0433    & 2.1760 \\ 
		\textbf{JNJ}    & 0.5006    & -5.7181   & -3.9292   & 3.4345    & 2.8180    & 11.8124   & 0.1560    & -0.5695   & -2.5858   & 0.3663    & -5.4969 \\ 
		\textbf{JPM}    & 0.4785    & -4.5622   & 3.2834    & 0.2067    & -0.6562   & -1.5361   & -0.1417   & 0.5352    & 0.7523    & 3.2933    & -2.8566 \\ 
		\textbf{KO}     & 0.5604    & -7.5233   & -1.6110   & -1.0146   & -2.1853   & 0.0297    & -6.1188   & 2.9035    & 0.9533    & 0.0414    & -0.2377 \\ 
		\textbf{LLY}    & 0.5350    & -0.8946   & 1.2286    & 0.2687    & -3.6011   & 3.1966    & 0.0389    & -1.2349   & -3.2674   & 4.8204    & -0.4896 \\ 
		\textbf{MA}     & 0.5014    & -8.0033   & -0.1182   & 0.5164    & 0.9363    & 0.7731    & 0.0000    & -1.4314   & -2.9440   & -0.7316   & 0.3741 \\ 
		\textbf{MCD}    & 0.4984    & -9.4845   & 1.8615    & -2.3841   & 1.0978    & 3.7620    & 3.7078    & 2.8442    & -3.3899   & 0.6593    & 3.6118 \\ 
		\textbf{META}   & 0.4688    & -3.3157   & -1.8679   & -0.5192   & -0.7732   & 0.3499    & -1.1460   & 1.1556    & 0.6323    & 0.0000    & -0.8245 \\ 
		\textbf{MRK}    & 0.5501    & -2.8590   & -1.1644   & -2.1002   & 2.6720    & -3.5841   & 0.2096    & 0.0355    & 6.0822    & 2.8016    & 1.7149 \\ 
		\textbf{MSFT}   & 0.4826    & -7.4980   & -1.6388   & -0.1095   & 0.0000    & -1.1698   & -1.2919   & 0.0000    & 0.0000    & 0.2375    & 0.0000 \\ 
		\textbf{NVDA}   & 0.4831    & -1.7214   & -0.4927   & -1.0516   & -1.1298   & 0.2661    & -0.4853   & 0.8280    & -0.8354   & -0.3877   & 0.3978 \\ 
		\textbf{PEP}    & 0.5298    & -10.4575  & -0.7566   & -1.9033   & -5.1067   & -2.7873   & -3.3101   & 2.0992    & -0.1821   & 0.4434    & -2.7524 \\ 
		\textbf{PG}     & 0.5165    & -5.5333   & 2.5618    & -1.9286   & 3.3486    & -2.7486   & -0.2179   & -0.9519   & -1.0760   & 0.8461    & -0.0148 \\ 
		\textbf{TSLA}   & 0.4973    & -0.8211   & -0.5040   & -0.2528   & 0.6465    & -1.2883   & 0.3078    & 2.3891    & -1.4361   & 1.4047    & -0.8947 \\ 
		\textbf{UNH}    & 0.5718    & -8.7663   & -0.0536   & -0.7555   & 0.0000    & 0.0000    & -1.1147   & 0.1035    & 0.0000    & 1.7695    & -6.8970 \\ 
		\textbf{V}      & 0.5082    & -6.2467   & -1.6439   & -0.7032   & -1.6754   & 0.0739    & -1.0217   & -1.6165   & -0.7638   & -2.4145   & -1.9267 \\ 
		\textbf{WMT}    & 0.5482    & -2.2058   & -0.5936   & 0.9173    & -2.7262   & 0.9771    & -0.7612   & 0.1939    & 1.1571    & -1.2646   & 3.2698 \\ 
		\textbf{XOM}    & 0.5674    & 0.5619    & -0.4573   & 0.3952    & 2.2481    & 1.3658    & -2.9903   & 2.7908    & 0.6023    & 1.5257    & -2.6165 \\ 
		\hline
		{mean}   & 0.5071    & -4.9669   & -1.0134   & -0.2665   & -0.4822   & 0.5510    & -0.7737   & 0.5898    & -0.4026   & 0.7822    & -0.5157 \\ 
		{std} 		& 0.0368	& 2.9305	& 1.8586	& 1.3714	& 1.9466	& 2.7760	& 	1.7149	& 1.3465	& 1.8901	& 1.4626	& 2.2513\\
		{min}    & 0.4393    & -10.4575  & -4.6531   & -3.2077   & -5.1067   & -3.5841   & -6.1188   & -1.6165   & -3.3899   & -2.4145   & -6.8970 \\ 
		{max}    & 0.5722    &  0.5619   & 3.2834    & 3.4345    & 3.3486    & 11.8124   & 3.7078    & 2.9035    & 6.0822    & 4.8204    & 3.6118 \\ 
	\end{tabular}
\end{table}

\begin{table}[]
	\tiny 
	\setlength{\tabcolsep}{1pt}
	\centering
	\caption{Estimated Correlation Matrix $\widehat{\Gamma}$.}
	\label{table: Estimated Correlation Matrix}
	\rotatebox{90}{
		\begin{tabular}{l| llllllllllllllllllllllllllllll} 
			& {AAPL} & {ABBV} & {ADBE} & {AMZN} & {AVGO} & {BAC} & {BRK.B} & {COST} & {CSCO} & {CVX} & {GOOG} & {GOOGL} & {HD} & {JNJ} & {JPM} & {KO} & {LLY} & {MA} & {MCD} & {META} & {MRK} & {MSFT} & {NVDA} & {PEP} & {PG} & {TSLA} & {UNH} & {V} & {WMT} & {XOM} \\
			\hline
			{AAPL}  & 0          & .24          & .71          & .70          & .76          & .57         & .69           & .63          & .64          & .29         & .79          & .80           & .60        & .37         & .55         & .52        & .36         & .75        & .51         & .59          & .28         & .82          & .76          & .55         & .46        & .64          & .49         & .70       & .34         & .28         \\
			{ABBV}  & -          & 0          & .18          & .21          & .26          & .30         & .41           & .30          & .31          & .16         & .22          & .22           & .28        & .51         & .35         & .42        & .51         & .28        & .32         & .10          & .49         & .28          & .19          & .43         & .44        & .06          & .51         & .30       & .25         & .18         \\
			{ADBE}  & -          & -          & 0          & .65          & .71          & .42         & .52           & .54          & .54          & .22         & .72          & .73           & .61        & .20         & .43         & .35        & .30         & .66        & .37         & .60          & .15         & .77          & .72          & .39         & .33        & .50          & .29         & .62       & .27         & .20         \\
			{AMZN}  & -          & -          & -          & 0          & .66          & .55         & .57           & .56          & .48          & .31         & .72          & .72           & .57        & .23         & .50         & .36        & .27         & .60        & .34         & .61          & .16         & .74          & .71          & .36         & .24        & .59          & .32         & .55       & .30         & .25         \\
			{AVGO}  & -          & -          & -          & -          & 0          & .56         & .62           & .57          & .66          & .30         & .73          & .74           & .59        & .25         & .56         & .42        & .32         & .67        & .44         & .56          & .18         & .76          & .83          & .43         & .39        & .59          & .41         & .61       & .22         & .27         \\
			{BAC}   & -          & -          & -          & -          & -          & 0         & .71           & .41          & .50          & .36         & .54          & .54           & .43        & .31         & .90         & .44        & .25         & .60        & .47         & .42          & .26         & .56          & .55          & .40         & .36        & .41          & .42         & .59       & .23         & .31         \\
			{BRK.B} & -          & -          & -          & -          & -          & -         & 0           & .55          & .60          & .47         & .62          & .63           & .57        & .48         & .71         & .62        & .43         & .65        & .50         & .43          & .37         & .64          & .58          & .55         & .46        & .41          & .49         & .62       & .36         & .43         \\
			{COST}  & -          & -          & -          & -          & -          & -         & -           & 0          & .54          & .23         & .57          & .57           & .65        & .41         & .42         & .56        & .36         & .49        & .49         & .37          & .21         & .61          & .56          & .58         & .54        & .46          & .50         & .48       & .59         & .20         \\
			{CSCO}  & -          & -          & -          & -          & -          & -         & -           & -          & 0          & .25         & .57          & .57           & .52        & .41         & .53         & .55        & .43         & .54        & .50         & .42          & .33         & .59          & .55          & .55         & .51        & .36          & .44         & .51       & .36         & .21         \\
			{CVX}   & -          & -          & -          & -          & -          & -         & -           & -          & -          & 0         & .26          & .26           & .17        & .12         & .30         & .19        & .20         & .24        & .13         & .19          & .20         & .26          & .28          & .15         & 0        & .17          & .25         & .21       & .20         & .88         \\
			{GOOG}  & -          & -          & -          & -          & -          & -         & -           & -          & -          & -         & 0          & 1           & .56        & .27         & .51         & .37        & .32         & .66        & .40         & .68          & .23         & .85          & .77          & .41         & .34        & .55          & .39         & .59       & .31         & .21         \\
			{GOOGL} & -          & -          & -          & -          & -          & -         & -           & -          & -          & -         & -          & 0           & .56        & .27         & .51         & .38        & .32         & .66        & .41         & .68          & .23         & .85          & .77          & .42         & .35        & .55          & .40         & .60       & .31         & .22         \\
			{HD}    & -          & -          & -          & -          & -          & -         & -           & -          & -          & -         & -          & -           & 0        & .38         & .48         & .51        & .38         & .57        & .43         & .48          & .25         & .62          & .57          & .52         & .47        & .35          & .39         & .54       & .41         & .14         \\
			{JNJ}   & -          & -          & -          & -          & -          & -         & -           & -          & -          & -        & -          & -           & -        & 0         & .35         & .57        & .60         & .30        & .48         & .16          & .64         & .32          & .18          & .59         & .59        & .10          & .55         & .31       & .33         & .09         \\
			{JPM}   & -          & -          & -         & -          & -          & -         & -           & -          & -          & -         & -          & -           & -        & -         & 0         & .46        & .31         & .60        & .52         & .39          & .29         & .53          & .53          & .43         & .39        & .37          & .44         & .58       & .23         & .28         \\
			{KO}    & -          & -          & -          & -          & -          & -         & -           & -          & -          & -         & -          & -           & -        & -         & -         & 0        & .43         & .50        & .63         & .22          & .42         & .46          & .34          & .84         & .75        & .23          & .54         & .50       & .43         & .18         \\
			{LLY}   & -          & -          & -          & -          & -          & -         & -           & -          & -          & -         & -          & -           & -        & -         & -         & -        & 0         & .38        & .45         & .29          & .59         & .38          & .28          & .47         & .37        & .19          & .56         & .41       & .24         & .18         \\
			{MA}    & -          & -          & -          & -          & -          & -         & -           & -          & -          & -         & -          & -           & -        & -         & -         & -        & -         & 0        & .55         & .53          & .29         & .73          & .68          & .49         & .44        & .47          & .42         & .93       & .26         & .27         \\
			{MCD}   & -          & -          & -          & -          & -          & -         & -           & -          & -          & -         & -          & -           & -        & -         & -         & -        & -         & -        & 0         & .23          & .43         & .47          & .41          & .68         & .57        & .25          & .50         & .53       & .34         & .13         \\
			{META}  & -          & -          & -          & -          & -          & -         & -           & -          & -          & -         & -          & -           & -        & -         & -         & -        & -         & -        & -         & 0          & .22         & .63          & .61          & .28         & .24        & .39          & .18         & .46       & .18         & .14         \\
			{MRK}   & -          & -          & -          & -          & -          & -         & -           & -          & -          & -         & -          & -           & -        & -         & -         & -        & -         & -        & -         & -          & 0         & .25          & .10          & .45         & .45        & .02          & .50         & .30       & .23         & .19         \\
			{MSFT}  & -          & -          & -          & -          & -          & -         & -           & -          & -          & -         & -          & -           & -        & -         & -         & -        & -         & -        & -         & -          & -         & 0          & .79          & .49         & .44        & .55          & .46         & .66       & .32         & .22         \\
			{NVDA}  & -          & -          & -          & -          & -          & -         & -           & -          & -          & -         & -          & -           & -        & -         & -         & -        & -         & -        & -         & -          & -         & -          & 0          & .35         & .29        & .67          & .35         & .62       & .23         & .23         \\
			{PEP}   & -          & -          & -          & -          & -          & -         & -           & -          & -          & -         & -          & -           & -        & -         & -         & -        & -         & -        & -         & -          & -         & -          & -          & 0         & .74        & .22          & .56         & .48       & .47         & .16         \\
			{PG}    & -          & -          & -          & -          & -          & -         & -           & -          & -          & -         & -          & -           & -        & -         & -         & -        & -         & -        & -         & -          & -         & -          & -          & -         & 0        & .15          & .54         & .43       & .42         & .03         \\
			{TSLA}  & -          & -          & -          & -          & -          & -         & -           & -          & -          & -         & -          & -           & -        & -         & -         & -        & -         & -        & -         & -          & -         & -          & -          & -         & -        & 0          & .28         & .44       & .17         & .16         \\
			{UNH}   & -          & -          & -          & -          & -          & -         & -           & -          & -          & -         & -          & -           & -        & -         & -         & -        & -         & -        & -         & -          & -         & -          & -          & -         & -        & -          & 0         & .41       & .33         & .22         \\
			{V}     & -          & -          & -          & -          & -          & -         & -           & -          & -          & -         & -          & -           & -        & -         & -         & -        & -         & -        & -         & -          & -         & -          & -          & -         & -        & -          & -         & 0       & .22         & .24         \\
			{WMT}   & -          & -         & -          & -          & -          & -         & -           & -         & -         & -         & -          & -          & -        & -         & -         & -        & -         & -        & -         & -          & -         & -          & -          & -         & -        & -          & -         & -       & 0         & .17         \\
			{XOM}   & -          & -          & -          & -          & -          & -         & -           & -          & -          & -         & -          & -           & -        & -         & -         & -        & -         & -        & -         & -          & -         & -          & -          & -         & -        & -          & -         & -       & -         & 0        
		\end{tabular}
	}
\end{table}


\end{document}